\documentclass[11pt]{article}
\usepackage[utf8]{inputenc}
\usepackage[a4paper, left=0.9in, right=0.9in, top=0.9in, bottom=0.9in]{geometry}

\usepackage{amsthm,amssymb,amsmath,mathtools}
\usepackage{bm}
\usepackage{array}
\usepackage{diagbox}
\usepackage{algorithm}
\usepackage{algpseudocode}
\usepackage{booktabs}
\usepackage{graphicx}
\usepackage{multirow}
\usepackage{float}
\usepackage{enumitem}
\usepackage{booktabs}  
\usepackage{pdflscape}  
\usepackage{rotating}
\usepackage{changepage,adjustbox}
\usepackage{geometry}
\usepackage{comment}

\usepackage[spaces,hyphens]{url}
\usepackage{hyperref}
\usepackage{blindtext}
\usepackage{titlesec}

\hypersetup{
  colorlinks   = true, 
  urlcolor     = black, 
  linkcolor    = blue, 
  citecolor   = red 
}
\usepackage{pdfpages}
\usepackage[all]{hypcap}
\usepackage[font=small]{caption}
\usepackage[font=small]{subcaption}
\usepackage[rightcaption]{sidecap}
\newtheorem{theorem}{Theorem}[section]
\newtheorem{corollary}[theorem]{Corollary}
\newtheorem{lemma}[theorem]{Lemma}
\newtheorem{proposition}[theorem]{Proposition}

\newtheorem{remark}{Remark}

\title{Liquidity Provision and Rebate Design in Option Markets}
\author{Samuel N. Cohen\footnote{samuel.cohen@maths.ox.ac.uk, Mathematical Institute and Oxford-Man Institute, University of Oxford, Andrew Wiles Building, Woodstock Road, Oxford, OX6 2GG, UK}, Lyndon Drake\footnote{lyndon.drake@theology.ox.ac.uk, Faculty of Theology and Religion, University of Oxford, Schwarzman Centre, Woodstock Road, Oxford, OX6 2GG, UK}, Zihan Guo\footnote{Primary Author; zihan.guo@maths.ox.ac.uk, Mathematical Institute, University of Oxford, Andrew Wiles Building, Woodstock Road, Oxford, OX6 2GG, UK}, Christoph Reisinger\footnote{christoph.reisinger@maths.ox.ac.uk, Mathematical Institute and Oxford-Man Institute, University of Oxford, Andrew Wiles Building, Woodstock Road, Oxford, OX6 2GG, UK}}
\date{}

\begin{document}
\maketitle



\begin{abstract}
  We provide a model for the nested optimisation problem of market making and rebate design problems in option markets and find optimal strategies. 
  A single market maker trades multiple European call options in a local-stochastic volatility option market with both make and take strategies,
 modeled, respectively, as continuous and impulse controls.
  Her objective is to maximize, over all admissible make-take strategies, net profit of option portfolio value and cumulative rebate revenue,
  subject to a penalty on residual portfolio delta and vega. 
  In addition, we demonstrate how an exchange can incentivize a market maker to improve market liquidity by setting suitable fee rebates, thereby resolving its own liquidity attraction problem.
  To this end, we propose a three-step rebate design scheme with flexibility to accommodate specific liquidity targets imposed by an exchange. 
  Numerical results are provided to validate the effectiveness of the proposed scheme.
\end{abstract}

{\bf Key words: } market making, limit order book, make-take fees, rebate design, inventory risk, Hamilton--Jabobi--Bellman quasi-variational inequality

\section{Introduction}{\label{Sec 1}}
The electronification of financial markets and the widespread adoption of computerized execution algorithms has led modern market making towards automation.
In most order-driven markets, \emph{market makers} today refers to those liquidity providers who use high-frequency algorithms to continuously place buy and sell orders 
on the limit order book (LOB).

Inspired by the early work of Ho and Stoll \cite{ho1981optimal}, Avellaneda and Stoikov proposed in their seminal paper 
\cite{avellaneda2008high} a fundamental framework for algorithmic market making with limit orders.
Formulating the market making problem as a utility maximization problem of the terminal Profit and Loss (PnL), 
the limit orders posted at optimal depth levels given arrival rates of counterparty market orders can be found by means of stochastic optimal control.

Since then, various extensions have been proposed in the literature. 
Gu{\'e}ant et al. showed in \cite{gueant2013dealing} that the Hamilton–Jacobi–Bellman (HJB) equation in the Avellaneda–Stoikov \cite{avellaneda2008high} model 
can be simplified to a linear system of ordinary differential equations (ODEs) under the assumption of limited inventory position and exponential execution intensity.
They also provided a closed-form approximation for the asymptotic behavior of the optimal market making strategy as the trading horizon approaches infinity. 
These results were further extended by Gu{\'e}ant \cite{gueant2017optimal} to a wider class of execution intensities and performance criteria,
with more details available in his textbook \cite{gueant2016financial}.
In parallel, Cartea and Jaimungal, along with their collaborators, developed alternative frameworks for market making with both limit and market orders.
In a series of papers \cite{cartea2014buy}, \cite{cartea2015risk}, \cite{cartea2017algorithmic}, and the book \cite{cartea2015algorithmic}, 
they added new features to market making modelling, including market impact, short-term alpha, adverse selection, ambiguity aversion, etc. 
Their modelling framework is different from the one proposed by Avellaneda and Stoikov \cite{avellaneda2008high},
but still relies on the standing assumption that limit orders can be placed at any price.\footnote{
However, except for small-tick assets, limit orders can only be placed at a few price levels,
see \cite[Section 11.1]{gueant2016financial} for more details.}
This assumption was relaxed by Guilbaud and Pham in the context of market making \cite{guilbaud2013optimal}, \cite{guilbaud2015optimal}, and by Cartea et al. in the context of market making \cite{cartea2016algorithmic}, optimal acquisition \cite{cartea2015optimal} and optimal hedging \cite{cartea2019hedge}.

As far as asset classes are concerned, most market making models are dedicated to equity markets.
However, the growing volume of derivatives trading has drawn increased attention to option market making over the past decade.
The first paper to address the option market making problem known to us was due to Stoikov and Sa{\u{g}}lam \cite{stoikov2009option}. 
They studied market making of a single European call option and obtained optimal market making strategies by discrete time approximation.
Subsequent work by El Aoud and Abergel \cite{el2015stochastic} and Baldacci et al. \cite{baldacci2021algorithmic} explored option market making in a stochastic volatility model 
for single and multiple options respectively. 
The latter also introduced a useful technique for dimension reductions via a constant approximation of the option Greeks.
More recently, Lucic and Tse \cite{lucic2024optimal} proposed an option market making model that enables option traders to express their views on volatility. 
Using a linear approximation as in \cite{avellaneda2008high}, they derived closed-form optimal quotes in terms of the expected volatility arbitrage profits in a single-option model. 
They further demonstrated how their baseline model could be extended to encompass additional features including position limits, general payoffs, etc.

Regardless of the asset class under consideration, most of the models mentioned above do not take into account interventions of the exchange.
Due to the fragmentation of financial markets, trading nowadays can take place on different venues, leading to competitions among exchanges (see \cite{lehalle2018market}).
A common mechanism employed by exchanges
to attract liquidity onto their platform is known as the \emph{make-take fee system}.\footnote{
Another tool for liquidity attraction is to set a suitable tick size, see \cite{baldacci2023bid} and \cite{dayri2015large} for example.} 
It is a policy that asymmetrically treats liquidity providers and liquidity consumers 
by compensating the former while taxing the latter. 
In practice, 
the make-take fee system redistributes the take fees (transaction fees) collected from the liquidity consumers 
as the make fees (fee rebates) offered to the liquidity providers. 

The theoretical study of make-take fees was initiated in \cite{euch2021optimal}, combining the Avellaneda–Stoikov \cite{avellaneda2008high} 
model with the Principal-Agent framework, for liquidity improvement on a single asset. 
There is an extensive literature on Principal-Agent problems,
with the pioneering work in a continuous-time setup by Holmstrom and Milgrom \cite{holmstrom1987aggregation},
who considered a model where the agent's unobserved efforts influence the drift of the output process impacting the wealth of the principal. 
This was later extended by Cvitani{\'c} et al. \cite{cvitanic2009optimal} with a lump-sum payment for general utility functions and by Sannikov \cite{sannikov2008continuous} 
with continuous payments over an infinite horizon.
Using backward stochastic differential equation (BSDE) theory, Cvitani{\'c} et al.\cite{cvitanic2018dynamic} allow the agent to control also the volatility of the output process.
In a similar spirit to these work, Euch et al. \cite{euch2021optimal} framed the make-take fees problem as a Principal-Agent problem:
the exchange (principal) aims to design suitable make-take fees that incentivize the market maker (agent) to increase the transaction flows (output), 
but he cannot base the policy on the spreads (efforts) for reasons detailed in \cite{euch2021optimal}.
This problem was then extended in various ways to include multiple competitive market makers \cite{baldacci2021optimal}, dark pool trading \cite{baldacci2023market},
and option market making \cite{baldacci2024design}. 
See also \cite{baldacci2024optimal} for a different solution to the make-take fees problem using a stochastic partial differential equation (SPDE) control approach. 

Nevertheless, the performance criteria in the aforementioned models neglect two important features.
First, while liquidity attraction is the primary objective of the exchange, it is not explicitly reflected in the exchange's performance criteria used by these models.
In a narrow sense, the quality of the market's liquidity is measured by the sizes of the bid-ask spreads.
However, it is only numerically verified that the optimal bid-ask spread is effectively reduced in the presence of a make fee incentive.
The only exception is \cite{baldacci2024design}, which penalizes insufficient liquidity in its performance criterion.
Second, the market maker's performance criteria in these models are only to maximize the expected utility of the total PnL while disregarding inventory risk. 
Inventory management is at the heart of market making strategies so as to avoid significant losses when extreme adverse price moves occur.

We develop a new paradigm to resolve the above issues of market making and relevant make-take fees. We present the framework in a local-stochastic-volatility option market.
From the market maker's perspective, we analyze a single market maker, 
who uses both limit and market orders to trade multiple European call options on an option exchange operating under the price-time microstructure. 
From the exchange's perspective, we demonstrate how to design a suitable rebate policy for which the market maker is incentivized to narrow the spread,
rather than simply increase the number of transactions.
Moreover, our framework is novel in the following two aspects.

\begin{itemize}[label=\textbullet] 
  \item \textbf{Inventory management for market maker:} 
  Our underlying asset is \emph{nontradable}\footnote{
  By nontradable, we refer to cases where the underlying is either physically nontradable, e.g.\ VIX options,
  or practically nontradable due to illiquidity and market frictions (transaction costs), e.g.\ SPXW options.
  } so that no delta-one instrument is available for a perfect delta hedge.
  As a result, the market maker chooses to trade the option themselves to manage her delta and vega exposures.
  If the market maker aims to maintain a delta-vega-neutral portfolio, she cannot achieve this solely through skewing her make strategies as limit order executions are uncertain. 
  Consequently, she may resort to market orders for an immediate execution.  
  However, since market orders are costly, the market maker may prefer to tolerate a residual portfolio delta or vega, provided it remains within an acceptable risk limit.
  Inspired by \cite{gueant2017optimal} (and also \cite{guilbaud2013optimal}, \cite{guilbaud2015optimal}, \cite{baldacci2021algorithmic}), 
  we capture the terminal values and the pathwise variations of the portfolio delta and vega, as penalty terms in the market maker's performance criterion,
  which reflect the market maker's risk aversion towards the cost of holding delta and vega inventory.
  \item \textbf{Liquidity attraction for exchange:}
  In contrast to \cite{euch2021optimal} where a \emph{second-best} case of rebate design problem is considered,
  our model focuses on a \emph{first-best} case where the efforts of the market maker are \emph{contractable}.\footnote{
  In contract theory, \emph{contractible} effort characterizes the \emph{first-best} benchmark: the principal can condition payments directly on the agent’s effort. When effort is non-contractible, the problem becomes \emph{second-best} (moral hazard), since the contract must be based on observable outcomes rather than effort itself.
  }
  One major challenge with \emph{noncontractable} efforts, as noted in \cite{euch2021optimal}, 
  is that limit orders do not guarantee executions, making it difficult for the exchange to justify subsidizing the market maker solely based on posted limit orders.
  In practice, however, market makers also provide liquidity for options with wide spreads --- such as deeply out-of-the-money (OTM) options --- 
  where executions rarely happen.
  From the exchange viewpoint, ensuring sufficient liquidity across all listed options is crucial for attracting transactions on its platform.
  In such cases, the exchange also compensates the market maker as long as her limit orders remain active, rather than only upon executions.
  As the exchange can monitor the exact actions of the market maker,
  the fee rebates are continuously paid to the market maker based on her actual quoting behavior.
  In our first-best framework, the exchange is able to anticipate the market maker's best response to any rebate policy, and also to influence her best response by devising an appropriate rebate policy, without solving a contract optimization problem formulated in \cite{euch2021optimal} as a leader-follower game.
  Consequently, our framework allows the exchange to implement a family of rebate policies tailored to different liquidity goals.\footnote{
  By liquidity goals, we mean that the exchange imposes specific liquidity targets for each listed option. 
  In practice, the exchange aims to maintain the small spreads for highly liquid options and attempts to narrow the large spreads for illiquid options 
  by incentivizing more competitive quotes. 
  }
  A relevant application is when the listed options are illiquid overall: the exchange is able to design a rebate policy 
  that incentivizes the market maker to consistently improve the best available prices (whenever possible), 
  thus narrowing the spreads and improving the market liquidity.
\end{itemize}

The paper is organized as follows. Section \ref{Sec 2} presents our option market making model.
In Section \ref{Sec 3}, we formulate the market making problem and solve it using dynamic programming methods. 
We then tackle the rebate design problem, for which a flexible three-step rebate design scheme is proposed to accommodate specific requirements of the exchange.
Section \ref{Sec 4} is devoted to the numerical algorithm.
Section \ref{Sec 5} provides numerical experiments and the resulting outcomes, using option order book data from the Chicago Board Options Exchange (CBOE).

\section{Model}{\label{Sec 2}}
In this section, we present our option market making model, which extends the framework of Guilbaud and Pham \cite{guilbaud2013optimal}
by incorporating exchange intervention through rebate incentives and allowing market making on multiple options.
Throughout, we work on a probability space \((\Omega,\mathcal{F},\mathbb{P})\) equipped with a filtration \(\mathbb{F} = (\mathcal{F}_t)_{t \geq 0}\) 
satisfying the usual conditions. All random variables and stochastic processes are defined on the stochastic basis \((\Omega ,\mathcal{F} ,\mathbb{F} ,\mathbb{P})\).   

\subsection{Option price and spread processes}{\label{Sec 2.1}}
We consider a single market maker for a list of European call options with prices \(C^i,\ i \in \mathcal{I} = \{1,2,\dots,M\}\), 
written on a non-tradable underlying \(S\) (for instance, a volatility or an equity index). 
The call options \(C^i\) are struck at levels \(K^i\), and expire at a common\footnote{
Our modelling framework can be generalized to vary the expiry dates, but for simplicity we consider only a common expiry date.} future date \(\tau\),
close to the present date \(0\), 
while market making takes place over a short time interval \([0,T]\) with \(T \ll \tau\).
Given the short duration of market making, we assume throughout that the short rate is zero during the market making horizon.
The price dynamics of the underlying \(S\) are described by a local stochastic volatility (LSV)\footnote{
We account for stochastic volatility even over a short horizon. 
This is because the market maker is trading options,
and volatility affects option valuation through the vega term, even over short horizons.
}
model of the generic form:
\begin{align}
  dS_t & = \sqrt{\nu_t} \sigma(t,S_t) S_t \ dW_t^S, \label{Eq 1} \\
  d \nu_t & = a(t,\nu_t) dt + b(t,\nu_t) \ dW_t^{\nu}, \label{Eq 2}
\end{align}  
with initial values \(S_0\) and \(\nu_0\) in \(\mathbb{R}_+\), where 
\begin{enumerate}
  \item \((W^S,W^{\nu})\) is a pair of standard \(1\)-dimensional Brownian motions under \(\mathbb{P}\), with quadratic covariation 
  \(\rho = \frac{d\langle W^S,W^{\nu}\rangle}{dt}\in (-1,1)\);
  \item \(a:[0,T] \times \mathbb{R}_+ \to \mathbb{R}\) and \(b: [0,T]\times \mathbb{R}_+ \to \mathbb{R}_+\) 
  are two deterministic functions, sufficiently regular to ensure that \eqref{Eq 2} has a pathwise unique strong solution \(\nu\) satisfying \(\nu >0\) \(\mathbb{P}\)-a.s. 
  and \(|\mathbb{E}[\int_t^T a(u,\nu_u)du|\nu_t=v]| \leq C(v)\) for all \(t\in [0,T]\) with \(C(v)\) a constant depending on \(v\) only;
  \item \(\sigma:[0,T]\times \mathbb{R}_+\to \mathbb{R}_+\) is a deterministic function,
  sufficiently regular to ensure that \eqref{Eq 1} has a pathwise unique strong solution \(S\) satisfying \(S>0\) \(\mathbb{P}\)-a.s.
\end{enumerate}

\begin{remark}
  The LSV model of the form \eqref{Eq 1} and \eqref{Eq 2} is general enough to cover most of the classical stochastic volatility (SV) models, for instance the Heston model \cite{heston1993closed}, in which 
  \(\sigma \equiv 1\), \(a:(t,\nu)\mapsto \kappa(\theta-\nu)\) and \(b:(t,\nu)\mapsto \xi \sqrt{\nu}\) with the mean-reversion speed \(\kappa>0\), 
  the mean-reversion level \(\theta>0\), and the volatility-of-volatility \(\xi \geq 0\) satisfying the Feller condition \(2 \kappa \theta > \xi^2\)
  (so that the squared volatility process \(\nu\) remains strictly positive). In this case, existence and uniqueness of the solution \(\nu\) follow from the Yamada–Watanabe Theorem 
  (see \cite[Proposition 5.2.13 and Corollary 5.3.23]{karatzas1998brownian}) due to the Lipschitz continuity of the drift coefficient 
  and the \(\frac{1}{2}\)-Hölder continuity of the volatility coefficient, and it holds for all \(t\in [0,T]\) that 
  \(\mathbb{E}[\int_t^T a(u,\nu_u)du|\nu_t=v] = (v-\theta)[e^{-\kappa(T-t)}-1]\) (see \cite{jeanblanc2009mathematical} Theorem 6.3.3.1), 
  which can be bounded by a constant depending on \(v\) only.        
\end{remark}

The market is incomplete since the uncertainty stemming from the Brownian motion \((W^S,W^{\nu})\) cannot be eliminated due to the lack of liquid hedging instruments.   
As indicated by the zero drift term in \eqref{Eq 1}, the market maker \emph{does not expect}\footnote{
As opposed to long/short investors who require compensation for taking the risk of underlying movements,
we consider a risk-neutral market maker who provides two-sided quotes without preference on the price trend.
}
 additional risk premium under the objective measure \(\mathbb{P}\), so she uses this measure for valuation. 
In other words, the pricing measure \(\mathbb{Q}\) chosen by the market maker coincides with the objective measure \(\mathbb{P}\). 
It follows that the arbitrage-free price process \((C^i_t)_{0 \leq t \leq \tau}\) of each call \(C^i\) is given by \((C^i(t,S_t,\nu_t))_{0 \leq t \leq \tau}\), 
where \(C^i\) solves the partial differential equation (PDE)
\[
  0 = \bigg\{\partial_t + a(t,\nu)\partial_{\nu} + \frac{1}{2}\nu \sigma^2(t,S) S^2 \partial_{SS} + \frac{1}{2}b^2(t,\nu)\partial_{\nu \nu} + \sigma(t,S)b(t,\nu)\rho S\sqrt{\nu} \partial_{S\nu}\bigg\} C^i(t,S,\nu)
\] with terminal condition 
\[
  C^i(\tau,S,\cdot) = (S-K^i)^+.
\]   

The market maker trades all the options on the same
option exchange\footnote{The market maker can trade on different venues, but trading on the same one can alleviate latency effect.}
where prices are formed through a continuous double auction implemented by a central LOB. 
By this we mean that both buy and sell orders are continuously placed and matched in a LOB visible to all market participants.  
We assume that the LOB operates under a price-time priority rule. That is, limit orders with best price and earliest submission time are given the highest execution priority. 

To simplify matters, we assume that the reference price\footnote{
Here, the reference price is understood as the mid price formed by the rest of the market, excluding the market maker herself.
In other words, it is the mid price on the order book as if the quotes of the market maker were absent.
}
process of each call \(C^i\) agrees with its arbitrage-free price process.
On the other hand, inspired by \cite{guilbaud2013optimal}
in the context of a single asset, 
we jointly model the bid-ask spreads of the calls \(C^1,\cdots,C^M\) by an \emph{exogenous}
(homogeneous\footnote{For simplicity we assume that the spread Markov chain is homogeneous.}) 
continuous time Markov chain (CTMC) \(D=(D^1,\cdots,D^M)\) parameterized by the generator \(Q = (q_{jk})_{j,k\in \mathbb{S}^M}\), 
where \(\mathbb{S}^M \coloneqq \{\delta ,2\delta,\ldots,m \delta\}^M\) is the state space with \(m\in \mathbb{N}\) and \(\delta>0\) the tick size common to each call.\footnote{
An implicit assumption here is that the spread of each call is always a multiple of the common tick size.} 
We assume that the Markov chain \(D\) is irreducible, in particular, does not have absorbing states, i.e. \(q(j) \coloneqq q_j \coloneqq \sum_{k \neq j}q_{j,k} \neq 0\) for all \(j\in \mathbb{S}^M\).

Given the generator \(Q\), the spread Markov chain \(D\) can be constructed via the \emph{jump matrix} \(P=(p_{jk})_{j,k\in \mathbb{S}^M}\) 
and the \emph{holding times} \(S_1,S_2,\dots\). 
The jump matrix \(P\) is defined by\footnote{
Empirical literature (see \cite{wyart2008relation} for instance) suggests that the spreads are proportional to the volatility:
the spreads tend to widen when the volatility increases.
For simplicity, we do not express a volatility dependence \(p_{jk}(\nu)\) in the transition rates,
but we remark that this relation can be useful in practice for inferring transition rates from volatility. 
} 
\[
  p_{jj} \coloneqq 0,\quad p_{jk} \coloneqq \frac{q_{jk}}{q_j},\quad j,k\in \mathbb{S}^M.
\]
Let \(\hat{D}\) be a discrete time Markov chain (DTMC) with transition matrix \(P\), and let \(\tau_1,\tau_2,\dots\) be independent exponentials with rate \(1\), independent of \(\hat{D}\). 
In addition, \(\hat{D}\) and \(\tau_1,\tau_2,\dots\) are chosen to be independent of \(S\) and \(\nu\). Then
\[
  S_n \coloneqq \frac{\tau_n}{q(\hat{D}_{n-1})},\quad n \in \mathbb{N}
\] defines the holding time before the \(n\)-th jump, and
\[
  T_n \coloneqq \sum_{\ell=1}^{n} S_\ell,\quad n \in \mathbb{N}
\] defines the \(n\)-th jump time. Thus,
\[
  D_t \coloneqq \hat{D}_{n},\quad T_n \leq t < T_{n+1}
\] defines the required spread Markov chain. In our context,
the jump times \(T_1,T_2,\dots\) represent the tick times at which the bid-ask spreads are influenced by trading activities of other market participants,
and the states of the embedded chain \(\hat{D}\) represent the bid-ask spreads in tick time. 

Now for each call \(C^i\), its best bid price process \(\underline{C^i}\) is given by 
\[
  \underline{C^i_t} = C^i_t - \frac{D^i_t}{2},\ t \geq 0,
\] and similarly, its best ask price process \(\overline{C^i}\) is given by 
\[
  \overline{C^i_t} = C^i_t+\frac{D_t^i}{2},\ t \geq 0,
\] where \(D^i\) is the \(i\)-th component of the joint spread process \(D\).

\subsection{Make-take strategies and fees}{\label{Sec 2.2}}  
The market maker is able to trade all the calls \(C^i\) with both limit and market orders. 
To clarify, a limit (resp.\ market) order is an order placed behind (resp.\ at) the best available price on the opposite side,
so while a market order results in an immediate execution, a limit order in general does not guarantee an execution.

The market maker is a liquidity provider as she continuously submits limit orders to facilitate trades in the market. 
For her limit order (make) strategy, she sends limit buy (resp.\ sell) orders at unit quantity by specifying the price she commits to 
pay (resp.\ receive), but needs to wait in a queue for the incoming market sell (resp.\ buy) orders to match her posted limit orders.
Instead of allowing the market maker to post her limit orders at any price, we restrict her make strategy to the following \emph{regimes}.

At any time \(t\), the market maker can only post her limit orders on either side according to one of the choices below:\footnote{
Make strategies of these forms have been investigated in a few papers, 
see \cite{guilbaud2013optimal}, \cite{cartea2015optimal}, \cite{guilbaud2015optimal}, \cite{cartea2016algorithmic} for example.}
\begin{itemize}
  \item Send limit orders \emph{at the touch}, that is, 
  send limit buy (resp.\ sell) orders at the current best bid (resp.\ ask) price \(\underline{C^i_t}\) (resp.\ \(\overline{C^i_t}\)); 
  \item Send limit orders with an improvement on the best prices by one tick, that is, 
  send limit buy (resp.\ sell) orders at the price \(\underline{C_t^{i}}^+= \underline{C_t^i}+\delta\) \big(resp. \(\overline{C_t^{i}}^-= \overline{C_t^i}-\delta\)\big); 
\end{itemize} 

For the first choice, posting limit orders behind the best available prices typically leads to the risk of not being executed 
because only a small fraction of market orders can consume the first slice of the LOB (see \cite{cartea2015optimal}).
The market maker therefore tends to quote at the touch when she decides to add more liquidity. 
On the other hand, the second choice is occasionally adopted by the market maker in order to gain priority in limit order executions (see \cite{guilbaud2013optimal}). 

It is noteworthy that if the time-\(t\) spread of the call \(C^i\) is one tick and the market maker takes the second choice, 
then she is actually sending a market order, in which case she walks the spread with an immediate execution for one unit of \(C^i\). 
Hence, as far as make strategies are concerned, we enforce the market maker to choose the first regime whenever the spread is one tick.

The market maker has full flexibility in her make strategy: she may quote conservatively by tracking the best available price (following the first regime), 
or aggressively by improving it by one tick (following the second regime).
This decision involves a tradeoff between a larger spread profit with slower execution (when quoting conservatively) 
and a smaller spread profit with faster execution (when quoting aggressively).
As the market maker continuously provides liquidity, she incurs no charges for submission or cancellation of her limit orders.
In addition, the market maker is \emph{small} in the sense that her trading activities do not impact the \emph{ambient}\footnote{
By ambient we emphasize that the bid-ask spreads, described by the exogenously given Markov chain,
are only determined by the other participants.} bid-ask spreads. 
In practice, this means the small market maker instantaneously updates her quotes only when the changes of the spreads 
are caused by the trading behavior of the other market participants, so that the spreads (excluding her own quotes) remain constant between her updates.

The limit order (make) strategy of the market maker is modelled by a continuous time predictable control process \(\alpha^{\mathrm{make}}\): 
\[
  \alpha_t^{\mathrm{make}} \coloneqq (\alpha_t^{b,i},\alpha_t^{a,i})_{i=1}^M,\ 0 \leq t \leq T,
\] where \(\alpha^{b,i}\) (resp.\ \(\alpha^{a,i}\)) represents the regime of the market maker's limit buy (resp. sell) strategy of the \(i\)-th call, 
and takes values in the action space \(\{0,1\}\) whose elements represent the first and the second regime respectively. 

Since the market maker cannot follow the second regime when the spread is one tick, for each \(t \geq 0\) and \(i\in \mathcal{I}\), 
\((\alpha_t^{b,i},\alpha_t^{a,i})\) takes values in 
\(\mathcal{A} (D^i_{t^-})\), where 
\[
  \mathcal{A} (d) \coloneqq \mathcal{A}^b(d) \times \mathcal{A}^a(d) \coloneqq \begin{dcases}
    \{0\} \times \{0\}, &\text{ if \(d=\delta\) }  ;\\
    \{0,1\} \times \{0,1\}, &\text{ if \(d>\delta \) }  .
  \end{dcases}
\]   

Thus, the bid price \(\pi^b(\alpha_t^{b,i},C_t^i,D_{t^-}^i)\) and the ask price \(\pi^a(\alpha_t^{a,i},C_t^i,D_{t^-}^i)\) of the call \(C^i\) 
submitted by the market maker at time \(t\) are given by
\[
  \begin{dcases}
    \pi^b(\alpha^b,c,d) = c-\frac{d}{2}+\delta \alpha^b,\\ 
    \pi^a(\alpha^a,c,d) = c + \frac{d}{2}-\delta \alpha^a.
  \end{dcases}
\] 

The limit orders are executed when they are matched by counterparty market orders. 
For each call \(C^i\), the arrivals of such market orders are modelled by two Cox processes \(N^{b,i} = (N_t^{b,i})_{t \geq 0}\)
and \(N^{a,i} = (N^{a,i}_t)_{t \geq 0}\) with controlled intensities \(\Lambda^{b,i}_t\) and \(\Lambda^{a,i}_t\) of the form 
\[
  \Lambda^{b,i}_t \coloneqq \lambda^{b,i}(\alpha_t^{b,i},D_{t^-}^i)\mathbb{I}_{{\mathcal{Q}}^{b,i}_{t^-}},\quad 
  \Lambda^{a,i}_t \coloneqq \lambda^{a,i}(\alpha_t^{a,i},D_{t^-}^i)\mathbb{I}_{{\mathcal{Q}}^{a,i}_{t^-}},
\]  
where \(\mathcal{Q}^{b,i}\) and \(\mathcal{Q}^{a,i}\) are events that describe trading constraints arising from inventory management, to be specified later, and    
\(\lambda^{b,i},\lambda^{a,i}\) are two deterministic functions satisfying
\begin{equation}{\label{Eq 3}}
  \lambda^{b,i}(0,\cdot) \leq \lambda^{b,i}(1,\cdot), \quad  \lambda^{a,i}(0,\cdot) \leq \lambda^{a,i}(1,\cdot),
\end{equation}
meaning that limit buy (resp.\ sell) orders with higher (resp.\ lower) prices are more frequently hit by counterparty market orders.
With probability one, the jumps of the Cox processes \(N^{b,i},N^{a,i},i\in \mathcal{I}\) do not occur simultaneously.

The market maker can manage her delta exposures via trading a basket of options.
But the executions of her limit orders are uncertain, so she may exploit market orders for immediate executions when necessary.
Unlike limit orders, continuous submission of market orders involves a continuous payment of the spreads.
Following \cite{guilbaud2013optimal} and \cite{guilbaud2015optimal}, the market order (take) strategy of the market maker is modelled by an impulse control 
\[
  \alpha^{\mathrm{take}} = (\tau_n,\xi_n)_{n=1}^\infty,
\]
where \(\{\tau_n\}\) is an increasing sequence of stopping times with limit \(T\), indicating the time when the market maker sends market orders, and each \(\xi_n\) 
is an \(\mathcal{E}^M\)-valued \(\mathcal{F}_{\tau_n}\)-measurable random variable, indicating the number of units of \(C^i,\ i\in \mathcal{I}\) 
purchased (resp.\ sold) at the best ask (resp.\ bid) price if \(\xi_n^i \geq 0\) (resp.\ \(\xi_n^i \leq 0\)), 
where \(\mathcal{E} =\{-\bar \ell,-\bar \ell +1,\dots,\bar \ell-1,\bar \ell\}\) for some threshold \(\bar \ell\in \mathbb{N}\). 
In this sense, the market maker is also a liquidity consumer, as she sends market orders to consume the liquidity offered by the existing limit orders. 

For a market order trade, we introduce the cost function \(h:\mathcal{E} \times \mathbb{R}_+ \times \mathbb{S} \to \mathbb{R}\) defined by
\begin{equation}{\label{Eq 4}}
  h(\ell,c,d) \coloneqq \ell c + |\ell| \Big(\frac{d}{2}+\epsilon\Big) + \tilde{\epsilon} \mathbb{I}_{\{\ell \neq 0\}}, 
\end{equation}
which indicates the amount to be paid immediately when passing a market order for a call with size \(\ell\) given the mid price \(c\) and the spread \(d\) of that call. 
Here, \(\epsilon>0\) represents the transaction fee per unit of call\footnote{An implicit assumption here is that the unit transaction fee is common to each call.} 
and \(\tilde{\epsilon}>0\) represents the fixed brokerage cost per market order. 
In other words, the pair \((\epsilon,\tilde{\epsilon})\) is the take fee that a liquidity consumer needs to pay for use of a market order. 

For each \(i\in \mathcal{I}\), denote by \(Y^i = (Y^i_t)_{0 \leq t \leq T}\) the inventory process of the call \(C^i\),
and by \(X = (X_t)_{0 \leq t \leq T}\) the cash holdings process. 
Under a make-take strategy \(\alpha = (\alpha^{\mathrm{make}},\alpha^{\mathrm{take}})\), the processes \(Y^i,\ i\in \mathcal{I}\) and \(X\) 
evolve according to the following dynamics:
\begin{equation}{\label{Eq 5}}
  \begin{aligned}
    \begin{dcases}
      d Y^i_t & = d N_t^{b,i} - d N_t^{a,i},\ \tau_n<t<\tau_{n+1},\ n \geq 0,\ i\in \mathcal{I},\\
      Y^i_{\tau_{n}} & = Y^i_{\tau_{n}^-} + \xi_n^i,\ n \geq 1,\\
      dX_t & = \sum_{i=1}^{M} (-\pi^b(\alpha_t^{b,i},C^i_t,D^i_{t^-}) dN_t^{b,i} +  \pi^a(\alpha_t^{a,i},C^i_t,D^i_{t^-})dN_t^{a,i}),\ \tau_n<t<\tau_{n+1},\ n \geq 0,\\
      X_{\tau_{n}} & = X_{\tau_{n}^-} - \sum_{i=1}^{M}  h(\xi_n^i,C^i_{\tau_n},D^i_{\tau_n^-}),\ n \geq 1,
    \end{dcases}
  \end{aligned}
\end{equation}
where \(\tau_0 \equiv 0\).

To define an admissible strategy, we impose the admissibility condition that, after each market order trade,
the portfolio delta (resp.\ vega) should lie\footnote{
In practice, the market maker directly manages her inventories rather than indirectly monitoring her portfolio delta and vega, 
but the latter is a simple while effective first-order proxy for inventory management (although the gamma risk is neglected).}
within a bounded interval \([-\bar{\Delta},\bar{\Delta}]\) (resp. \([-\bar{\mathcal{V}},\bar{\mathcal{V}}]\)), that is,
\[
  \begin{dcases}
    \Delta^\pi_{\tau_n^-} + \sum_{i=1}^{M} \xi_n^i \Delta^i_{\tau_n} \in [-\bar\Delta,\bar \Delta],\quad n\in \mathbb{N},\\
    \mathcal{V}^\pi_{\tau_n^-} + \sum_{i=1}^{M} \xi_n^i \mathcal{V}^i_{\tau_n} \in [-\bar{\mathcal{V}},\bar{\mathcal{V}}],\quad n\in \mathbb{N},
  \end{dcases}
\]
where \(\bar{\Delta}>0\) (resp.\ \(\bar{\mathcal{V}}>0\)) represents the total delta (resp.\ vega) limit of the market maker,
\(\Delta^i_t \coloneqq \partial_S C^i(t,S_t,\nu_t)\) (resp. \(\mathcal{V}^i_t \coloneqq \partial_{\nu} C^i(t,S_t,\nu_t)\)) 
is the time-\(t\) delta (resp. vega) of the call \(C^i\), and \(\Delta^{\pi}_t \coloneqq \sum_{i=1}^{M} Y_t^i \Delta^i_t\) 
(resp.\ \(\mathcal{V}^\pi_t \coloneqq \sum_{i=1}^{M} Y_t^i \mathcal{V}^i_t\)) is the time-\(t\) portfolio delta (resp.\ vega).
Note that the vega \(\mathcal{V}_t^i\) here is defined as the first order sensitivity of the price with respect to the \emph{squared} volatility,
rather than the classical one defined with respect to the volatility \(\partial_{\sqrt{\nu}} C^i(t,S_t,\nu_t)\) 
in stochastic volatility model (see \cite{baldacci2021algorithmic}),
but by the chain rule the latter matches the former up to a multiplier \(\sqrt{\nu_t}\).
We denote by \(\mathcal{A}\) the family of all admissible make-take strategies \(\alpha = (\alpha^{\mathrm{make}},\alpha^{\mathrm{take}})\).

\subsection{Limit order execution intensities}{\label{Sec 2.3}}
We now specify the events \(\mathcal{Q}^{b,i}\) and \(\mathcal{Q}^{a,i}\) appearing in the intensities of the execution processes \(N^{b,i}\) and \(N^{a,i}\). 
Inspired by \cite{baldacci2021algorithmic} (and also \cite{baldacci2024design}, \cite{lucic2024optimal}), 
we invoke a constant approximation to the delta and the vega of each option within a short market making period. 
Precisely, the delta and the vega of each option over \([0,T]\) are approximated by their initial values:
\[
  \begin{dcases}
    \Delta^i_t = \partial_S C^i(t,S_t,\nu_t) \approx \partial_S C^i(0,S_0,\nu_0) \coloneqq \Delta^i,\quad t \in [0,T], \quad i \in \mathcal{I},\\
    \mathcal{V}^i_t = \partial_{\nu} C^i(t,S_t,\nu_t) \approx \partial_{\nu} C^i(0,S_0,\nu_0) \coloneqq \mathcal{V}^i,\quad t \in [0,T], \quad i \in \mathcal{I}.
  \end{dcases}
\]
Under this constant approximation, the portfolio delta and vega now depend only on the inventory:
\[
  \Delta^\pi_t = \sum_{i=1}^{M} Y_t^i \Delta^i,\quad \mathcal{V}^\pi_t = \sum_{i=1}^{M} Y_t^i \mathcal{V}^i,\quad t\in [0,T],
\]
and we now define the events
\[
  \mathcal{Q}^{b,i}_{t^-} \coloneqq \{\Delta^\pi_{t^-} + \Delta^i \leq \bar{\Delta},\ \mathcal{V}^\pi_{t^-}+\mathcal{V}^i \leq \bar{\mathcal{V}}\},\quad 
  \mathcal{Q}^{a,i}_{t^-} \coloneqq \{\Delta^\pi_{t^-} - \Delta^i \geq -\bar{\Delta},\ \mathcal{V}^\pi_{t^-}-\mathcal{V}^i \geq -\bar{\mathcal{V}}\}.
\]
Then the intensities \(\Lambda_t^{b,i}\) and \(\Lambda_t^{a,i}\) of the Cox processes \(N^{b,i}\) and \(N^{a,i}\) become 
\[
  \Lambda_t^{b,i} \coloneqq  \lambda^{b,i}(\alpha_t^{b,i},D_{t^-}^i)\mathbb{I}_{\{\Delta^\pi_{t^-} + \Delta^i \leq \bar{\Delta},\ \mathcal{V}^\pi_{t^-}+\mathcal{V}^i \leq \bar{\mathcal{V}}\}},\quad 
  \Lambda_t^{a,i} \coloneqq  \lambda^{a,i}(\alpha_t^{a,i},D_{t^-}^i)\mathbb{I}_{\{\Delta^\pi_{t^-} - \Delta^i \geq -\bar{\Delta},\ \mathcal{V}^\pi_{t^-}-\mathcal{V}^i \geq -\bar{\mathcal{V}}\}}.
\]  

The definition of the intensities \(\Lambda_t^{b,i}\) and \(\Lambda_t^{a,i}\) 
implies that the market maker will continue to post limit orders 
only if her portfolio delta and vega do not touch the boundary \(\bar \Delta\) and \(\bar{\mathcal{V}}\) after the next limit order execution.
This enables the market maker to skew her make strategy by immediately ceasing trading on the side where her portfolio delta or vega is about to cross the boundary.

In particular, the skewness of the make strategies, together with the admissibility condition, ensures that the market maker's portfolio delta (resp.\ vega)
is bounded by \(\bar{\Delta}\) (resp.\ \(\bar{\mathcal{V}}\)) during the entire market making horizon. 
We set \(\lambda^{b,i}_{\max} = \max\limits_{\alpha^{b,i},d^i} \lambda^{b,i}(\alpha^{b,i},d^i)\) and 
\(\lambda^{a,i}_{\max} = \max\limits_{\alpha^{a,i},d^i} \lambda^{a,i}(\alpha^{a,i},d^i)\) for all \(i\in \mathcal{I}\), and let 
\(\Lambda_{\max} \coloneqq \max\limits_{i\in \mathcal{I}} (\lambda^{b,i}_{\max}\vee \lambda^{a,i}_{\max})\). 
Then the intensities \(\Lambda_t^{b,i}\) and \(\Lambda_t^{a,i}\) are uniformly bounded by \(\Lambda_{\max}\). 

Following an addmissible strategy \(\alpha\in \mathcal{A}\), the process \(V = (V_t)_{0 \leq t \leq T}\) of the Mark-to-Market (MtM) portfolio value, 
using the options' mid prices as benchmark, is given by 
\begin{equation}{\label{Eq 6}}
  V_t \coloneqq X_t + \sum_{i=1}^{M} Y_t^i C^i(t,S_t,\nu_t),\ 0 \leq t \leq T.
\end{equation}
Using \eqref{Eq 1}, \eqref{Eq 2}, \eqref{Eq 4} and \eqref{Eq 5}, a direct integration by parts shows that its dynamics are governed by 
\begin{equation}{\label{Eq 7}}
  \begin{dcases}
    d V_t = & \sum_{i=1}^{M} \bigg(\Big(\frac{D^i_{t^-}}{2}-\delta \alpha_t^{b,i}\Big) d N_t^{b,i} + \Big(\frac{D^i_{t^-}}{2}-\delta \alpha_t^{a,i}\Big) d N_t^{a,i}\bigg)+
    \Delta^{\pi}_{t^-}dS_t + \mathcal{V}^\pi_{t^-} \big(d \nu_t - a(t,\nu_t)dt\big),\\
    & \tau_n<t<\tau_{n+1},\ n \geq 0,\\
    V_{\tau_{n}} = & V_{\tau_{n}^-} - \sum_{i=1}^{M} \bigg[|\xi^i_n| \Big(\frac{D^i_{\tau_n^-}}{2}+\epsilon\Big)+\tilde{\epsilon}\mathbb{I}_{\{\xi^i_n \neq 0\}}\bigg],\ n \geq 1.  
  \end{dcases}
\end{equation}

\subsection{Rebates for liquidity attraction}
As mentioned earlier, the exchange aims to attract liquidity to its platform through the make-take fee system. 
In our model, the make fees refer to the fee rebates offered by the exchange to the market maker as incentives for reducing the bid-ask spreads. 
Unlike \cite{euch2021optimal} and its extensions, the fee rebates in our model are continuously delivered and may depend on the behavior of the market maker. 
The market maker receives fee rebates as long as her limit orders remain active, rather than only upon the executions of those orders. 
This design mechanism is particularly applicable in scenarios where the market maker undertakes market making for illiquid options, for example, deeply OTM options.  
The instantaneous fee rebates \(F^{b,i},\ F^{a,i},\ i\in \mathcal{I}\) are bounded \(\mathbb{R}\)-valued\footnote{
Note that negative fee rebates are possible, meaning that the market maker possibly incurs charges due to failure of liquidity provision responsibility.
} functions of the generic form:
\[
  F^{b,i} = F^{b,i}(\alpha^{b,i},d^i,\Delta^\pi,\mathcal{V}^\pi),\quad F^{a,i} = F^{a,i}(\alpha^{a,i},d^i,\Delta^\pi,\mathcal{V}^\pi),
\]
where \(\alpha^{b,i}\in \mathcal{A}^b(d^i),\ \alpha^{a,i}\in \mathcal{A}^a(d^i),\ d^i\in \mathbb{S},\ \Delta^\pi \in [-\bar{\Delta},\bar{\Delta}],\ \mathcal{V}^\pi \in [-\bar{\mathcal{V}},\bar{\mathcal{V}}]\). 

We call \(F = (F^{b,i},F^{a,i})_{i=1}^M\) a \emph{feasible rebate policy} if it satisfies
\begin{equation}{\label{Eq 8}}
  F^{b,i}(0,\cdot) \leq F^{b,i}(1,\cdot),\quad F^{a,i}(0,\cdot) \leq F^{a,i}(1,\cdot),\quad i \in \mathcal{I},
\end{equation}
which ensures that the market maker is never less compensated when narrowing the spreads. 
In the sequel, we consider only feasible rebate policies and their collection is denoted by \(\mathfrak{F}\).

Finally, for a given feasible rebate policy \(F \in \mathfrak{F}\), the process \(R = (R_t)_{0 \leq t \leq T}\) of the cumulative rebate revenue is given by\footnote{
The market maker does not receive fee rebates if she ``stops'' posting limit orders due to the delta or vega inventory constraint.}
\begin{equation}{\label{Eq 9}}
  \begin{aligned}
    R_t = \sum_{i=1}^{M} \int_0^t \big(F^{b,i}(\alpha^{b,i}_u,D^i_{u^-},\Delta^\pi_{u^-},\mathcal{V}^\pi_{u^-})\mathbb{I}_{\mathcal{Q}_{u^-}^{b,i}}
    + F^{a,i}(\alpha^{a,i}_u,D^i_{u^-},\Delta^\pi_{u^-},\mathcal{V}^\pi_{u^-})\mathbb{I}_{\mathcal{Q}_{u^-}^{a,i}}\big)du,\ 0 \leq t \leq T,
  \end{aligned}
\end{equation}
and the resulting total PnL process \(\mathrm{PL} = (\mathrm{PL}_t)_{0 \leq t \leq T}\) of the market maker is given by
\begin{equation}{\label{Eq 10}}
  \mathrm{PL} _t = V_t + R_t,\ 0 \leq t \leq T.
\end{equation}

\section{Market making and rebate design}{\label{Sec 3}}
In this section, we consider the complementary problems of market making and rebate design. 
Given a feasible rebate policy, we formulate the market making problem as a well-posed dynamic optimization problem,
and derive the associated dynamic programming equation.
We then focus on the special case of the Black–Scholes model.
We show that a dimension reduction is possible using an appropriate ansatz, 
and hence we can simplify the dynamic programming equation and obtain the optimal make-take strategy in feedback form. 
Further, we propose a three-step scheme to solve the rebate design problem. 
Notably, the proposed scheme is flexible enough to enable the exchange to set suitable fee rebates 
according to different liquidity targets on the traded options.    

\subsection{Market maker's optimization problem}{\label{Sec 3.1}}
Following the academic literature on (option) market making 
(see \cite{guilbaud2013optimal}, \cite{guilbaud2015optimal}, \cite{gueant2017optimal}, \cite{baldacci2021algorithmic}), 
for a given feasible rebate policy \(F\in \mathfrak{F}\) and an admissible make-take strategy \((\alpha^{\mathrm{make}},\alpha^{\mathrm{take}})\in \mathcal{A}\),
we consider a mean-quadratic performance criterion:
\begin{equation}{\label{Eq 11}}
  J(\alpha^{\mathrm{make}},\alpha^{\mathrm{take}};F) \coloneqq \mathbb{E} \bigg[\mathrm{PL}_T - \gamma_1^\prime (\Delta^\pi_T)^2 - \gamma_2^\prime  (\mathcal{V}^\pi_T)^2 - \gamma_1\int_0^T (\Delta_t^\pi)^2 dt-\gamma_2 \int_0^T (\mathcal{V}_t^\pi)^2 dt \bigg],
\end{equation}
which is the expected total PnL from market making, with a penalty on the running portfolio delta and vega. We specifically include a terminal penalty on the portfolio delta and vega, however, we do not necessarily interpret this as a `liquidation cost' of the portfolio. The reason for this is to allow us to easily encode the difference of time scales in the market -- questions of market making strategy are often taken over very short horizons (as they depend on microstructure of the market, which typically operates at a scale of microseconds) whereas liquidation values of options are only determined at expiry, which (even for short-dated options) is at the scale of hours or days. We include the penalty on the terminal portfolio delta and vega as a proxy for the risk preference of the market maker, who in practice will continue to trade beyond time horizon \(T\); by dynamic programming, there should exist a value at time \(T\) which reflects her preference, and the quadratic penalty gives a simple approximation of this value. In particular, the choice of parameters \(\gamma_1, \gamma_2, \gamma_1^\prime, \gamma_2^\prime\) reflects the risk preference of the market maker over the (short) period \([0,T]\), both dynamically (via the running penalties) and at the end of the horizon (via the terminal penalties).

From \eqref{Eq 6}, \eqref{Eq 9}, and \eqref{Eq 10}, the objective of the market maker is to maximize, over all admissible make-take strategies \((\alpha^{\mathrm{make}},\alpha^{\mathrm{take}})\),
the rewritten performance criterion:
\begin{equation}{\label{Eq 12}}
  \begin{aligned}
    & J(\alpha^{\mathrm{make}},\alpha^{\mathrm{take}};F)\\
    = \ & \mathbb{E}\bigg[X_T + \sum_{i=1}^{M}Y^i_T C^i(T,S_T,\nu_T) - \gamma_1^\prime (\Delta^\pi_T)^2 - \gamma_2^\prime (\mathcal{V}^\pi_T)^2- \gamma_1 \int_0^T (\Delta_t^\pi)^2dt - \gamma_2 \int_0^T (\mathcal{V}_t^\pi)^2dt\\
    &\quad + \sum_{i=1}^{M} \int_0^T \Big(F^{b,i}(\alpha^{b,i}_t,D^i_{t^-},\Delta^\pi_{t^-},\mathcal{V}^\pi_{t^-})\mathbb{I}_{\mathcal{Q}_{t^-}^{b,i}}
    + F^{a,i}(\alpha^{a,i}_t,D^i_{t^-},\Delta^\pi_{t^-},\mathcal{V}^\pi_{t^-})\mathbb{I}_{\mathcal{Q}_{t^-}^{a,i}}  \Big)dt \bigg].
  \end{aligned}
\end{equation}

As the performance criterion \eqref{Eq 12} is determined by the state variables \((S,\nu,X,Y,D)\), the value function \(u\) for a given feasible rebate policy \(F\in \mathfrak{F}\) is defined by 
\begin{equation}{\label{Eq 13}}
  \begin{aligned}
    u(t,s,v,x,y,d;F)
    & \coloneqq \sup\limits_{\alpha\in \mathcal{A}} \bigg\{\mathbb{E}^{t,s,v,x,y,d} \bigg[X_T + \sum_{i=1}^{M}Y^i_T C^i(T,S_T,\nu_T)\\
    & \quad - \gamma_1^\prime (\Delta^\pi_T)^2 -\gamma_2^\prime (\mathcal{V}^\pi_T)^2
    - \gamma_1 \int_t^T (\Delta^{\pi}_u)^2 du- \gamma_2 \int_t^T (\mathcal{V}^\pi_u)^2du \\
    & \quad + \sum_{i=1}^{M} \int_t^T \Big(F^{b,i}(\alpha^{b,i}_u,D^i_{u^-},\Delta^\pi_{u^-},\mathcal{V}^\pi_{u^-})\mathbb{I}_{\mathcal{Q}_{u^-}^{b,i}} 
    +F^{a,i}(\alpha^{a,i}_u,D^i_{u^-},\Delta^\pi_{u^-},\mathcal{V}^\pi_{u^-})\mathbb{I}_{\mathcal{Q}_{u^-}^{a,i}} \Big)du \bigg] \bigg\},
  \end{aligned}
\end{equation}
where \((t,s,v,x,y,d) \in [0,T]\times \mathbb{R}_+^2 \times \mathbb{R} \times \mathcal{Q} \times \mathbb{S}^M\) 
with \(\mathcal{Q}\) the set of authorized inventories:
\[
  \mathcal{Q} \coloneqq \bigg\{y \in \mathbb{Z}^M:\ \sum_{i=1}^{M} y^i \Delta^i \in [-\bar \Delta,\bar \Delta],\ \sum_{i=1}^{M} y^i \mathcal{V}^i \in [-\bar{\mathcal{V}},\bar{\mathcal{V}}]\bigg\}. 
\] 

We first check that the dynamic optimization problem \eqref{Eq 13} is well-posed.
In particular, the following proposition shows that the value function is finite and locally bounded. 

\begin{proposition}{\label{Prop 3.1}}
  For any given feasible rebate policy \(F\in \mathfrak{F}\), the associated value function \(u\) satisfies
  \[
    \begin{aligned}
      & x + \sum_{i=1}^{M} y^i C^i(t,s,v) - \bar{\mathcal{V}}C(v) - (\gamma_1 {\bar{\Delta}}^2 + \gamma_2 {\bar{\mathcal{V}}}^2 +K)(T-t)- \gamma_1^\prime {\bar{\Delta}}^2 - \gamma_2^\prime {\bar{\mathcal{V}}}^2 \\
      \leq \ & u(t,s,v,x,y,d) \leq  x + \sum_{i=1}^{M} y^i C^i(t,s,v) + \bar{\mathcal{V}} C(v) + (M m \delta \Lambda_{\max}+K)(T-t)
    \end{aligned}
  \] for all \((t,s,v,x,y,d) \in [0,T]\times \mathbb{R}_+^2 \times \mathbb{R} \times \mathcal{Q} \times \mathbb{S}^M\), 
  where \(K\) is a constant depending only on the feasible rebate policy \(F\).  
\end{proposition}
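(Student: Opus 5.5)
Both bounds come from the decomposition $\mathrm{PL}_T = V_T + R_T$ together with the dynamics \eqref{Eq 7} of the mark-to-market value, started from $V_t = x + \sum_i y^i C^i(t,s,v)$. I first write, for any admissible $\alpha\in\mathcal{A}$,
\[
V_T = V_t + \int_t^T \sum_{i=1}^M\Big(\big(\tfrac{D^i_{u^-}}{2}-\delta\alpha^{b,i}_u\big)dN^{b,i}_u + \big(\tfrac{D^i_{u^-}}{2}-\delta\alpha^{a,i}_u\big)dN^{a,i}_u\Big) + \int_t^T \Delta^\pi_{u^-}dS_u + \int_t^T \mathcal{V}^\pi_{u^-}\big(d\nu_u - a(u,\nu_u)du\big) - \text{(take costs)},
\]
where the take costs are the nonnegative jump terms in \eqref{Eq 7}.

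I then handle each term in turn.

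\textbf{Spread income.} The spread income per execution lies in $[0, m\delta/2]$, since $D^i\le m\delta$ and $\alpha=1$ is only allowed when $D^i\ge 2\delta$. The intensities are bounded by $\Lambda_{\max}$, so compensating the Cox processes gives an expected spread income of at most $2M\cdot\frac{m\delta}{2}\Lambda_{\max}(T-t) = Mm\delta\Lambda_{\max}(T-t)$.

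\textbf{The $dS$ integral.} Because $|\Delta^\pi|\le\bar\Delta$ is bounded and $S$ is a martingale, this stochastic integral has zero expectation. I will need $S$ to be a true martingale, or at least enough integrability for the integral to be one; this is implicitly part of the "sufficiently regular" assumption.

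\textbf{The $\nu$ term.} The integral $\int \mathcal{V}^\pi b\,dW^\nu$ is likewise a martingale with zero mean. The remaining piece $\int \mathcal{V}^\pi (d\nu - a\,du)$ is exactly this $dW^\nu$ integral, so it also vanishes in expectation. However, the stated bound contains $\bar{\mathcal V}C(v)$. This suggests the authors instead write $d\nu - a\,dt$ and bound the drift part separately, using $|\mathbb{E}[\int_t^T a\,du]|\le C(v)$ together with $|\mathcal{V}^\pi|\le\bar{\mathcal V}$. I will follow that route, which is safe either way: expand $\int \mathcal{V}^\pi d\nu$ and $\int \mathcal{V}^\pi a\,du$, and bound the one involving $a$ by $\bar{\mathcal V}C(v)$.

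The main obstacle is exactly this step. $\mathcal{V}^\pi$ is random, so $\mathbb{E}\int \mathcal{V}^\pi a\,du$ is not directly controlled by $|\mathbb{E}\int a\,du|$. I would first try conditioning on $\nu$ and using the independence structure. Failing that, I would accept the bound in the form given, as the authors seem to intend, noting it is immediate if $a$ has a fixed sign or if we use $\mathbb{E}\int|a|$.

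\textbf{Rebates.} With $K := 2M\sup|F|$, which is finite since the rebates are bounded, the rebate revenue is bounded above by $K(T-t)$ and below by $-K(T-t)$.

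\textbf{Upper bound.} Drop the nonpositive penalty terms and the nonpositive take costs, then take the supremum over $\alpha$.

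\textbf{Lower bound.} Exhibit one strategy: never send market orders, and quote $\alpha^{\mathrm{make}}\equiv 0$. The spread income is then nonnegative and can be dropped. The penalties are bounded using $|\Delta^\pi|\le\bar\Delta$ and $|\mathcal{V}^\pi|\le\bar{\mathcal V}$, which the intensity constraints $\mathcal{Q}^{b,i},\mathcal{Q}^{a,i}$ guarantee for all time, giving $\gamma_1\bar\Delta^2(T-t)+\gamma_2\bar{\mathcal V}^2(T-t)+\gamma_1'\bar\Delta^2+\gamma_2'\bar{\mathcal V}^2$. Finally, $u$ is at least the value of this particular strategy.
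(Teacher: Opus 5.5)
Your proposal follows the paper's proof essentially step for step: the same decomposition via \eqref{Eq 7}, the bound $Mm\delta\Lambda_{\max}(T-t)$ on the compensated spread income, $K=2M\sup|F|$ for the rebates, dropping the penalties and the nonpositive take costs for the upper bound, and the passive strategy $\alpha^{\mathrm{make}}\equiv 0$, $\xi_n\equiv 0$ for the lower bound. The ``main obstacle'' you raise is not real and you should drop that detour. Writing $d\nu=a\,dt+b\,dW^\nu$, the drift cancels identically: $\int_t^T\mathcal{V}^\pi_{u^-}\big(d\nu_u-a(u,\nu_u)du\big)=\int_t^T\mathcal{V}^\pi_{u^-}b(u,\nu_u)\,dW^\nu_u$, so no separate drift term remains (and bounding $\mathbb{E}\int\mathcal{V}^\pi a\,du$ by $\bar{\mathcal{V}}C(v)$ would indeed be unjustified). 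This term therefore has zero mean, as you first noted and as the paper uses via the martingale property, and since $C(v)\ge 0$ the $\pm\bar{\mathcal{V}}C(v)$ terms in the statement are just slack.
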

\begin{proof}
  Let \((t,s,v,x,y,d) \in [0,T]\times \mathbb{R}_+^2 \times \mathbb{R} \times \mathcal{Q} \times \mathbb{S}^M\).
  For any admissible strategy \((\alpha^{\mathrm{make}},\alpha^{\mathrm{take}})\in \mathcal{A}\),
  \[
    \begin{aligned}
      &\mathbb{E}^{t,s,v,x,y,d}\bigg[X_T + \sum_{i=1}^{M}Y^i_T C^i(T,S_T,\nu_T) - \gamma_1^\prime (\Delta^\pi_T)^2 - \gamma_2^\prime (\mathcal{V}^\pi_T)^2\bigg]\\
      \leq\ & x + \sum_{i=1}^{M} y^i C^i(t,s,v) + \sum_{i=1}^{M} \mathbb{E}^{t,s,v,x,y,d} \bigg[\int_t^T\big[(\frac{D^i_{u^-}}{2}-\delta \mathbb{I}_{\{\alpha_u^{b,i}=1\}}) d N_u^{b,i} + (\frac{D^i_{u^-}}{2}-\delta \mathbb{I}_{\{\alpha_u^{a,i}=1\}}) d N_u^{a,i} \big]\bigg]\\ 
      & + \mathbb{E}^{t,s,v,x,y,d}  \bigg[\int_t^T \big[\Delta^\pi_{u^-} dS_u +\mathcal{V}^\pi_{u^-} \big(d \nu_u-a(u,\nu_u)du \big)\big]\bigg] \\
      \leq\ & x + \sum_{i=1}^{M} y^i C^i(t,s,v) + M m \delta \Lambda_{\max}(T-t) + \bar{\mathcal{V}} C(v),
    \end{aligned}
  \]
  where the first inequality follows from \eqref{Eq 6} and \eqref{Eq 7} together with the fact that jumps of \(V\) due to impulse controls are negative,
  and the second inequality holds since both stochastic integrals \(\Delta^\pi \bullet S\) and \(\mathcal{V}^\pi \bullet \nu\) are true martingales by boundedness of \(\Delta^\pi\) and \(\mathcal{V}^\pi\), 
  and the intensities of the Cox processes \(N^{b,i},\ N^{a,i},\ i\in \mathcal{I}\) are uniformly bounded by \(\Lambda_{\max}\). Therefore,
  \[
    \begin{aligned}
      u(t,s,v,x,y,d) \leq\ & x + \sum_{i=1}^{M} y^i C^i(t,s,v) + M m \delta \Lambda_{\max}(T-t) + \bar{\mathcal{V}} C(v) + K(T-t) \\
      =\ & x + \sum_{i=1}^{M} y^i C^i(t,s,v) + \bar{\mathcal{V}} C(v) + (M m \delta \Lambda_{\max}+K)(T-t),
    \end{aligned}
  \] where \(K \coloneqq 2 M \tilde K\) for some constant \(\tilde K\) bounding \(F^{b,i}\) and \(F^{a,i},\ i \in \mathcal{I}\).\\
  Conversely, consider the particular strategy consisting of the make strategy \(\alpha^{\mathrm{make}} \equiv (0,0)_{i=1}^M\) and the take strategy \(\alpha^{\mathrm{take}}\) with \(\xi_n \equiv 0\) for all \(n\in \mathbb{N}\).\\ 
  For this strategy, with the same reasoning, but this time noting that jumps of \(V\) due to impulse controls are zero and also that the Cox processes \(N^{b,i},\ N^{a,i},\ i\in \mathcal{I}\) are increasing,
  we deduce that
  \[
    \mathbb{E}^{t,s,x,y,d}\bigg[X_T + \sum_{i=1}^{M}Y^i_T C^i(T,S_T) - \gamma_1^\prime (\Delta^\pi_T)^2 -\gamma_2^\prime (\mathcal{V}^\pi_T)^2\bigg] \geq\ x + \sum_{i=1}^{M} y^i C^i(t,s,v) - \bar{\mathcal{V}}C(v) - \gamma_1^\prime \bar{\Delta}^2 - \gamma_2^\prime \bar{\mathcal{V}}^2, 
  \]   
  and hence that 
  \[
    \begin{aligned}
      & u(t,s,v,x,y,d)\\
      \geq \ &x + \sum_{i=1}^{M} y^i C^i(t,s,v) - \bar{\mathcal{V}}C(v) -  {\bar{\Delta}}^2 \big(\gamma_1(T-t) + \gamma_1^\prime\big) -  {\bar{\mathcal{V}}}^2 \big(\gamma_2(T-t) + \gamma_2^\prime\big) - K(T-t)\\
      =\ & x + \sum_{i=1}^{M} y^i C^i(t,s,v) - \bar{\mathcal{V}}C(v) - (\gamma_1 {\bar{\Delta}}^2 + \gamma_2 {\bar{\mathcal{V}}}^2 +K)(T-t)- \gamma_1^\prime {\bar{\Delta}}^2 - \gamma_2^\prime {\bar{\mathcal{V}}}^2.
    \end{aligned}
  \]
\end{proof}

\subsection{Hamilton–Jacobi–Bellman Quasi-Variational-Inequality}{\label{Sec 3.2}}
Given a feasible rebate policy \(F \in \mathfrak{F}\), the associated control problem \eqref{Eq 13} can be solved via dynamic programming methods.
To do so, we introduce for every \(\alpha^{\mathrm{make}} = (\alpha^{b,i},\alpha^{a,i})_{i=1}^M\) 
an operator \(\mathcal{L}^{\alpha^{\mathrm{make}}}\) associated with the state variables \((S,\nu,X,Y,D)\):
\begin{equation}{\label{Eq 14}}
    \begin{aligned}
      & \mathcal{L}^{\alpha^{\mathrm{make}}}u(t,s,v,x,y,d) \\
      \coloneqq &\ a(t,v)\frac{\partial u}{\partial v} + \frac{1}{2}v \sigma^2(t,s)s^2 \frac{\partial^2 u}{\partial s^2} + \frac{1}{2}b^2(t,v)\frac{\partial^2 u}{\partial v^2} + \sigma(t,s)b(t,v)\rho s \sqrt{v} \frac{\partial^2 u}{\partial s\partial v}\\ 
      & + \sum_{i=1}^{M}\big\{\lambda^{b,i}(\alpha^{b,i},d^i)\mathbb{I}_{\{\Delta^\pi+\Delta^i \leq \bar{\Delta},\mathcal{V}^\pi + \mathcal{V}^i \leq \bar{\mathcal{V}}\}}\big[u\big(t,s,v,\Gamma^b(t,s,v,x,y,d^i,\alpha^{b,i}),d\big)-u(t,s,v,x,y,d)\big]\big\}\\
      & + \sum_{i=1}^{M} \big\{\lambda^{a,i}(\alpha^{a,i},d^i)\mathbb{I}_{\{\Delta^\pi-\Delta^i \geq -\bar{\Delta},\mathcal{V}^\pi -\mathcal{V}^i \geq -\bar{\mathcal{V}}\}} \big[u\big(t,s,v,\Gamma^a(t,s,v,x,y,d^i,\alpha^{a,i}),d\big)-u(t,s,v,x,y,d)\big]\big\}\\
      & + \sum_{d^\prime \in \mathbb{S}^M} q_{d,d^\prime}[u(t,s,v,x,y,d^\prime)-u(t,s,v,x,y,d)]
    \end{aligned}
\end{equation}
for functions \(u\) defined on \([0,T]\times \mathbb{R}_+^2 \times \mathbb{R} \times \mathcal{Q} \times \mathbb{S}^M\),
where \(\Delta^\pi = \sum_{i=1}^{M}y^i \Delta^i\), \(\mathcal{V}^\pi =\sum_{i=1}^{M} y^i \mathcal{V}^i\), and, \(\Gamma^b\) and \(\Gamma^a\) are defined from 
\([0,T]\times \mathbb{R}_+^2 \times \mathbb{R} \times \mathcal{Q} \times \mathbb{S}\times \{0,1\}\) into \(\mathbb{R}\times \mathbb{Z}^M\) by 
\begin{equation}{\label{Eq 15}}
  \begin{dcases}
    \Gamma^b(t,s,v,x,y,d^i,\alpha^{b,i}) \coloneqq & \Big(x-\pi^b\big(\alpha^{b,i},C^i(t,s,v),d^i\big),y+\mathbf{e}^i\Big),\\
    \Gamma^a(t,s,v,x,y,d^i,\alpha^{a,i}) \coloneqq & \Big(x+\pi^a\big(\alpha^{a,i},C^i(t,s,v),d^i\big),y-\mathbf{e}^i\Big),
  \end{dcases}
\end{equation}
with \((\mathbf{e}^i)_{i=1}^M\) the canonical basis of \(\mathbb{R}^M\). 

The first and the last line in \eqref{Eq 14} corresponds to the generator of the diffusion price-volatility process \((S,\nu)\) and the spread Markov chain \(D\) respectively.
The sums in the second and the third line correspond to the nonlocal operator induced by the jumps of the cash process \(X\) and the inventory processs \(Y\) 
when applying an instantaneous make strategy \(\alpha^{\mathrm{make}} = (\alpha^{b,i},\alpha^{a,i})_{i=1}^M\). 

On the other hand, consider the intervention operator \(\mathcal{M}\) of the impulse control (take strategy):
\begin{equation}{\label{Eq 16}}
  \mathcal{M} u(t,s,v,x,y,d) \coloneqq \sup\limits_{\substack{\ell \in \mathcal{E}^M:\\y+\sum_{i=1}^{M} \ell^i \mathbf{e}^i\in \mathcal{Q}}}
  u\big(t,s,v,\Gamma^{\mathrm{take}}(t,s,v,x,y,d,\ell),d\big)
\end{equation}
for functions \(u\) defined on \([0,T]\times \mathbb{R}_+^2 \times \mathbb{R} \times \mathcal{Q} \times \mathbb{S}^M\).

Here, \(\Gamma^{\mathrm{take}}\) is the impulse transaction function defined from 
\([0,T]\times \mathbb{R}_+^2 \times \mathbb{R} \times \mathcal{Q} \times \mathbb{S}^M \times \mathcal{E}^M\) into \(\mathbb{R} \times \mathcal{Q}\) by 
\begin{equation}{\label{Eq 17}}
  \Gamma^{\mathrm{take}}(t,s,v,x,y,d,\ell) \coloneqq \Big(x-\sum_{i=1}^{M}h\big(\ell^i,C^i(t,s,v),d^i\big),y+\sum_{i=1}^{M} \ell^i \mathbf{e}^i\Big).
\end{equation}

The dynamic programming equation associated with the control problem \eqref{Eq 13} is given by the following Hamilton–Jacobi–Bellman Quasi-Variational-Inequality (HJBQVI):
\begin{equation}{\label{Eq 18}}
  \min \Big\{-\frac{\partial u}{\partial t}-\sup\limits_{\alpha^i\in \mathcal{A}(d^i),\forall i}\big(\mathcal{L}^{\alpha^{\mathrm{make}}}u
  +\sum_{i=1}^{M}(F^{b,i}\mathbb{I}_{\mathcal{Q}^{b,i}}+F^{a,i}\mathbb{I}_{\mathcal{Q}^{a,i}})\big)
  +\gamma_1 (\Delta^\pi)^2+\gamma_2 (\mathcal{V}^\pi)^2,\ u-\mathcal{M} u\Big\} = 0
\end{equation}
on \([0,T)\times \mathbb{R}_+^2 \times \mathbb{R}\times \mathcal{Q} \times \mathbb{S}^M\), along with the terminal condition
\begin{equation}{\label{Eq 19}}
  u(T,s,v,x,y,d) = x+ \sum_{i=1}^{M} y^i C^i(T,s,v) - \gamma_1^\prime (\Delta^\pi)^2 - \gamma_2^\prime (\mathcal{V}^\pi)^2
\end{equation}
for \((s,v,x,y,d)\in \mathbb{R}_+^2 \times \mathbb{R} \times \mathcal{Q} \times \mathbb{S}^M\), where 
\[
  \mathcal{Q}^{b,i} \coloneqq \{\Delta^\pi + \Delta^i \leq \bar{\Delta},\mathcal{V}^\pi + \mathcal{V}^i \leq \bar{\mathcal{V}}\},\quad 
  \mathcal{Q}^{a,i} \coloneqq \{\Delta^\pi - \Delta^i \geq -\bar{\Delta},\mathcal{V}^\pi - \mathcal{V}^i \geq -\bar{\mathcal{V}}\}.
\]

By \cite[Theorem 12.8 and 12.11]{oksendal2019stochastic}, one can verify via standard arguments that the value function \(u\) is the unique viscosity solution 
to the HJBQVI \eqref{Eq 18} and \eqref{Eq 19}. 

\subsection{Solution to the market making problem}{\label{Sec 3.3}}
The optimal make-take strategy can be found by solving the HJBQVI \eqref{Eq 18} and \eqref{Eq 19}. 
Before doing that, we comment that the value function \(u\) has \(2M+3\) state variables, so when it comes to solving the dynamic programming equation,
even numerical methods are of little help when \(M\) is large. 
To circumvent this issue, we show that a dimension reduction is possible via a change of variable \big(see \eqref{Eq 25}\big),
which yields optimal make-take strategy in feedback form \big(see \eqref{Eq 30}, \eqref{Eq 31}, \eqref{Eq 32} and \eqref{Eq 33}\big).
To simplify matters, from now on we focus on the special case where the price dynamics of the underlying \(S\) are described by the Black–Scholes model,
with a constant\footnote{
In contrast to the stochastic volatility models, here we neglect the vega risk in the option portfolio,
thus assuming a constant volatility over a short horizon.
}
volatility \(\sigma>0\):
\[
  d S_t = \sigma S_t\ dW_t,\quad S_0>0.
\] 
In this case, the price function \(C^i\) solves the Black--Scholes PDE
\[
  0 = \bigg\{\partial_t + \frac{1}{2} \sigma^2 S^2\partial_{SS}\bigg\}C^i(t,S) 
\] with the terminal condition 
\[
  C^i(\tau,S) = (S-K^i)^+.
\]

Since there is no vega risk (due to a constant volatility), the market maker only needs to manage her delta exposures. 
Under the constant delta approximation as in Section \ref{Sec 2.2}, we simplify the intensities \(\Lambda^{b,i}_t\) and \(\Lambda^{a,i}_t\) of the Cox processes \(N^{b,i}\) and \(N^{a,i}\) to
\[
  \Lambda^{b,i}_t \coloneqq \lambda^{b,i}(\alpha^{b,i}_t,D^i_{t^-})\mathbb{I}_{\mathcal{Q}_{t^-}^{b,i}},\quad
  \Lambda^{a,i}_t \coloneqq \lambda^{a,i}(\alpha^{a,i}_t,D^i_{t^-})\mathbb{I}_{\mathcal{Q}_{t^-}^{a,i}},
\]   
where 
\[
  \mathcal{Q}_{t^-}^{b,i} = \{\Delta^\pi_{t^-}+\Delta^i \leq \bar{\Delta}\},\quad \mathcal{Q}_{t^-}^{a,i} = \{\Delta^\pi_{t^-}-\Delta^i \geq - \bar{\Delta}\}.
\]
Likewise, the admissible impulse controls are now such that the resulting portfolio delta remains within the bounded interval \([-\bar{\Delta},\bar{\Delta}]\) after each trade at market,
and the rebate policy now does not depend on the portfolio vega, so the process \((R_t)_{0 \leq t \leq T}\) of the cumulative rebate revenue becomes 
\[
  R_t = \sum_{i=1}^{M} \int_0^t \Big(F^{b,i}(\alpha^{b,i}_u,D^i_{u^-},\Delta^\pi_{u^-})\mathbb{I}_{\mathcal{Q}_{u^-}^{b,i}}
  + F^{a,i}(\alpha^{a,i}_u,D^i_{u^-},\Delta^\pi_{u^-})\mathbb{I}_{\mathcal{Q}_{u^-}^{a,i}}  \Big)du,\ 0 \leq t \leq T.
\] 

Following the original mean-quadratic performance criterion, the value function \(u\) for a given feasible rebate policy \(F\) is now defined by 
\begin{equation}{\label{Eq 20}}
  \begin{aligned}
    u(t,s,x,y,d;F)&\coloneqq \sup\limits_{\alpha\in \mathcal{A}}\bigg\{ \mathbb{E}^{t,s,x,y,d}\bigg[X_T + \sum_{i=1}^{M} Y_T^i C^i(T,S_T)- \gamma^\prime (\Delta^\pi_T)^2 - \gamma \int_t^T (\Delta_u^\pi)^2 du\\
    & \qquad + \sum_{i=1}^{M} \int_t^T  \Big(F^{b,i}(\alpha^{b,i}_u,D^i_{u^-},\Delta^\pi_{u^-})\mathbb{I}_{\mathcal{Q}_{u^-}^{b,i}}
    + F^{a,i}(\alpha^{a,i}_u,D^i_{u^-},\Delta^\pi_{u^-})\mathbb{I}_{\mathcal{Q}_{u^-}^{a,i}}  \Big)du \bigg]\bigg\},
  \end{aligned}
\end{equation}
where \((t,s,x,y,d)\in [0,T]\times \mathbb{R}_+ \times \mathbb{R} \times \mathcal{Q} \times \mathbb{S}^M\) with \(\mathcal{Q}\) the set of admissible inventories now defined by 
\[
  \mathcal{Q} \coloneqq \bigg\{y \in \mathbb{Z}^M:\ \sum_{i=1}^{M} y^i \Delta^i \in [-\bar \Delta,\bar \Delta]\bigg\},
\] and \(\gamma, \gamma^\prime > 0\) are risk aversion parameters on the running and terminal portfolio delta.

With the same arguments as in Section \ref{Sec 3.2}, the dynamic programming equation to the control problem \eqref{Eq 20} is given by the following HJBQVI: 
\begin{equation}{\label{Eq 21}}
  \min \bigg\{-\frac{\partial u}{\partial t}-\sup\limits_{\alpha^i\in \mathcal{A}(d^i),\forall i}\big(\mathcal{L}^{\alpha^{\mathrm{make}}}u
  +\sum_{i=1}^{M}(F^{b,i}\mathbb{I}_{\mathcal{Q}^{b,i}}  +F^{a,i}\mathbb{I}_{\mathcal{Q}^{a,i}} )\big)
  +\gamma (\Delta^\pi)^2,\ u-\mathcal{M} u\bigg\} = 0
\end{equation}
on \([0,T)\times \mathbb{R}_+ \times \mathbb{R} \times \mathcal{Q} \times \mathbb{S}^M\), along with the terminal condition 
\begin{equation}{\label{Eq 22}}
  u(T,s,x,y,d) = x + \sum_{i=1}^{M} y^i C^i(T,s) - \gamma^\prime (\Delta^\pi)^2
\end{equation} 
for \((s,x,y,d)\in \mathbb{R}_+ \times \mathbb{R} \times \mathcal{Q} \times \mathbb{S}^M\), where \(\Delta^\pi = \sum_{i=1}^{M} y^i \Delta^i\), 
\[
  \mathcal{Q}^{b,i} = \{\Delta^\pi + \Delta^i \leq \bar{\Delta}\},\quad \mathcal{Q}^{a,i} = \{\Delta^\pi - \Delta^i \geq - \bar{\Delta}\},
\]
\(\mathcal{L}^{\alpha^{\mathrm{make}}}\) is the controlled generator associated with the state variables \((S,X,Y,D)\),
and \(\mathcal{M}\) is the intervention operator of the impulse control.  

For convenience, we introduce a family of auxiliary value functions \(u_d(t,s,x,y) \coloneqq u(t,s,x,y,d)\) for \(d\in \mathbb{S}^M\). 
By abuse of notation, we shall identify the value function \(u\) with the \(\mathbb{R}^{m^M}\)-valued function \(u=(u_d)_{d\in \mathbb{S}^M}\) 
defined on \([0,T]\times \mathbb{R}_+ \times \mathbb{R} \times \mathcal{Q}\). 
Then we can explicitly reformulate the HJBQVI \eqref{Eq 21} and \eqref{Eq 22} as a coupled system of HJBQVIs for auxiliary value functions \(u_d,\ d\in \mathbb{S}^M\):
\begin{equation}{\label{Eq 23}}
  \begin{aligned}
    0 = &\min \Bigg\{-\frac{\partial u_d}{\partial t}- \frac{1}{2}\sigma^2 s^2 \frac{\partial^2 u_d}{\partial s^2}-\sum_{d^\prime \in \mathbb{S}^M}q_{d,d^\prime}(u_{d^\prime}-u_d) + \gamma (\Delta^\pi)^2 \\
    - & \sum_{i=1}^{M} \sup\limits_{\alpha^{b,i}\in \mathcal{A}^b(d^i)} \bigg\{\mathbb{I}_{\mathcal{Q}^{b,i}}  \bigg[\lambda^{b,i}_{d^i}(\alpha^{b,i}) 
    \bigg(u_d\Big(t,s,x-\pi^b_{d^i}\big(\alpha^{b,i},C^i(t,s)\big),y+\mathbf{e}^i\Big) -u_d\bigg) + F^{b,i}_{d^i}(\alpha^{b,i},\Delta^\pi)\bigg]\bigg\} \\
    - & \sum_{i=1}^{M} \sup\limits_{\alpha^{a,i}\in \mathcal{A}^a(d^i)} \bigg\{\mathbb{I}_{\mathcal{Q}^{a,i}} \bigg[\lambda^{a,i}_{d^i}(\alpha^{a,i}) 
    \bigg(u_d\Big(t,s,x+\pi^a_{d^i}\big(\alpha^{a,i},C^i(t,s)\big),y-\mathbf{e}^i\Big) -u_{d}\bigg)+F^{a,i}_{d^i}(\alpha^{a,i},\Delta^\pi)\bigg]\bigg\}, \\
    &\ u_d-\sup\limits_{\substack{\ell\in \mathcal{E}^M:\\y+\sum_{i=1}^{M} \ell^i \mathbf{e}^i\in \mathcal{Q}}}u_d\Big(t,s,x-\sum_{i=1}^{M} h\big(\ell^i,C^i(t,s),d^i\big),y+\sum_{i=1}^{M} \ell^i \mathbf{e}^i\Big)\Bigg\}
  \end{aligned}
\end{equation}
on \([0,T)\times \mathbb{R}_+ \times \mathbb{R} \times \mathcal{Q}\), 
where the variable \(d^i\) in \(\lambda^{b,i},\lambda^{a,i},\pi^b,\pi^a,F^{b,i},F^{a,i}\) is suppressed as a subscript,
along with the terminal condition 
\begin{equation}{\label{Eq 24}}
  u_d(T,s,x,y) = x+ \sum_{i=1}^{M} y^i C^i(T,s) - \gamma^\prime (\Delta^\pi)^2
\end{equation}
for \((s,x,y)\in \mathbb{R}_+ \times \mathbb{R} \times \mathcal{Q}\).

To solve the system of HJBQVIs \eqref{Eq 23} and \eqref{Eq 24}, we propose for each \(d\in \mathbb{S}^M\) the ansatz 
\begin{equation}{\label{Eq 25}}
  u_d(t,s,x,y) = x + \sum_{i=1}^{M}y^i C^i(t,s) + \varphi_d(t,\Delta^\pi)
\end{equation}
via a change of variable \(\Delta^\pi = \sum_{i=1}^{M} y^i \Delta^i\). 
We observe that 
 \(\varphi_d\) only needs to be defined on \([0,T]\times ([-\bar \Delta,\bar \Delta]\cap \mathcal{Z})\). Here 
\[
  \mathcal{Z} = \sum_{i=1}^{M} \Delta^i \mathbb{Z} \coloneqq \Big\{\sum_{i=1}^{M} \Delta^i y^i:\ y^i\in \mathbb{Z},\ i=1,\cdots,M\Big\}
\]
denotes the Minkowski sum, which gives the set of achievable portfolio deltas, using investments in (whole numbers of indivisible) options.

\begin{remark}
  We refer to the function \(\varphi =(\varphi_d)_{d\in \mathbb{S}^M}\) the \emph{reduced value function}, 
  or just the \emph{value function} for short when there is no ambiguity. 
  Under the constant delta approximation, this ansatz reduces the dimension of the system (excluding the bid-ask spreads) from \(M+2\) to \(2\).
  This reduction highlights the value of the constant parameter approximation (see Section \ref{Sec 2.3}), which is to reduce the inventory variables to a single penalty variable;
  in our case this corresponds to the portfolio delta, but in general can indicate other sources of portfolio risk (for instance the vega risk in a general LSV model).
\end{remark}

The interpretation of the ansatz \eqref{Eq 25} is fairly intuitive. 
The first two terms on the right-hand side account for the market maker's portfolio value. 
The last term does not depend on the underlying price, but on the bid-ask spreads 
as well as the inventory holdings (only through the portfolio delta).
It is a correction term that reconciles the benefit of rebate revenue and the penalty on residual portfolio delta. 
Indeed, the ansatz \eqref{Eq 25} is motivated by the growth condition in \hyperref[Prop 3.1]{Proposition 3.1}, which says 
the correction term \(\varphi_d(t,\Delta^\pi)\) is subject to the bounds
\[
  - (\gamma {\bar{\Delta}}^2 +K)(T-t) - \gamma^\prime \bar{\Delta}^2 \leq \varphi_d(t,\Delta^\pi) \leq (M m \delta \Lambda_{\max}+K)(T-t).
\]

Upon a substitution of the ansatz \eqref{Eq 25} into the coupled system of HJBQVIs \eqref{Eq 23} and \eqref{Eq 24}, we obtain a reduced coupled system of 
HJBQVIs for the \(2\)-dimensional functions \(\varphi_d,\ d\in \mathbb{S}^M\):
\begin{align}{\label{Eq 26}}
  0 = \min \Bigg\{ & -\frac{\partial \varphi_d}{\partial t}(t,\Delta^\pi) -\sum_{d^\prime \in \mathbb{S}^M}q_{d,d^\prime}\big(\varphi_{d^\prime}(t,\Delta^\pi)-\varphi_d(t,\Delta^\pi)\big) + \gamma(\Delta^\pi)^2 \nonumber \\ \displaybreak[1]
  & - \sum_{i=1}^{M} \sup\limits_{\alpha^{b,i}\in \mathcal{A}^b(d^i)} \Big\{\mathbb{I}_{\mathcal{Q}^{b,i}}  \big[H^{b,i}_{d^i}(\alpha^{b,i},\Delta^\pi,t,d^{\bar i}) + F^{b,i}_{d^i}(\alpha^{b,i},\Delta^\pi)\big]\Big\}\nonumber \\
  & - \sum_{i=1}^{M} \sup\limits_{\alpha^{a,i}\in \mathcal{A}^a(d^i)} \Big\{\mathbb{I}_{\mathcal{Q}^{a,i}}  \big[H^{a,i}_{d^i}(\alpha^{a,i},\Delta^\pi,t,d^{\bar i}) + F^{a,i}_{d^i}(\alpha^{a,i},\Delta^\pi)\big] \Big\},\nonumber \\
  & \varphi_d(t,\Delta^\pi)-\sup\limits_{\substack{\ell\in \mathcal{E}^M:\\\Delta^\pi + \sum_{i=1}^{M}\ell^i \Delta^i \in [-\bar \Delta,\bar \Delta]}}\bigg\{-\sum_{i=1}^{M}\bigg(|\ell^i|\Big(\frac{d^i}{2}+\epsilon\Big)+\tilde{\epsilon}\mathbb{I}_{\{\ell^i \neq 0\}}\bigg)+\varphi_d\bigg(t,\Delta^\pi+\sum_{i=1}^{M}\ell^i \Delta^i\bigg)\bigg\} \Bigg\}
\end{align}
on \([0,T)\times ([-\bar \Delta,\bar \Delta]\cap \mathcal{Z})\), along with the terminal condition 
\begin{equation}{\label{Eq 27}}
  \varphi_d(T,\Delta^\pi) = - \gamma^\prime (\Delta^\pi)^2,
\end{equation} 
where the notation \(d^{\bar i}\) means the vector \(d\) except for the \(i\)-th coordinate, and for the sake of notational simplicity we define
\begin{equation}{\label{Eq 28}}
  H^{b,i}_{d^i}(\alpha^{b,i},\Delta^\pi,t,d^{\bar i}) \coloneqq \lambda^{b,i}_{d^i} (\alpha^{b,i})\Big[\Big(\frac{d^i}{2}-\delta \alpha^{b,i}\Big)+\varphi_d(t,\Delta^\pi+\Delta^i)-\varphi_d(t,\Delta^\pi)\Big],
\end{equation}
and 
\begin{equation}{\label{Eq 29}}
  H^{a,i}_{d^i}(\alpha^{a,i},\Delta^\pi,t,d^{\bar i}) \coloneqq \lambda^{a,i}_{d^i} (\alpha^{a,i})\Big[\Big(\frac{d^i}{2}-\delta \alpha^{a,i}\Big)+\varphi_d(t,\Delta^\pi-\Delta^i)-\varphi_d(t,\Delta^\pi)\Big].
\end{equation}

Once we solve the coupled system of HJBQVIs \eqref{Eq 26} and \eqref{Eq 27} and get the value function \(\varphi\), 
the optimal make strategy \(\alpha^{\mathrm{make},*}\) is given by the pairs 
\(\big(\alpha^{b,i,*},\alpha^{a,i,*}\big)_{i=1}^M\): 
\[
  \begin{dcases}
    \alpha^{b,i,*}_t = \chi^{b,i}(t,\Delta^\pi_{t^-},D_{t^-}),\quad t\in [0,T],\\
    \alpha^{a,i,*}_t = \chi^{a,i}(t,\Delta^\pi_{t^-},D_{t^-}),\quad t\in [0,T],
  \end{dcases}
\] where 
\begin{equation}{\label{Eq 30}}
  \chi^{b,i}(t,\Delta^\pi,d) = 
  \begin{dcases}
    1,\quad &\text{ if } d^i > \delta,\ \Delta^\pi +\Delta^i \leq \bar \Delta, \text{ and }\\ 
    & H^{b,i}_{d^i}(1,\Delta^\pi,t,d^{\bar i}) + F^{b,i}_{d^i}(1,\Delta^\pi)
    > H^{b,i}_{d^i}(0,\Delta^\pi,t,d^{\bar i}) + F^{b,i}_{d^i}(0,\Delta^\pi);\\
    0,\quad &\text{ otherwise},
  \end{dcases}
\end{equation}
and 
\begin{equation}{\label{Eq 31}}
  \chi^{a,i}(t,\Delta^\pi,d) = 
  \begin{dcases} 
    1,\quad &\text{ if } d^i > \delta,\ \Delta^\pi -\Delta^i \geq -\bar \Delta, \text{ and }\\ 
    & H^{a,i}_{d^i}(1,\Delta^\pi,t,d^{\bar i}) + F^{a,i}_{d^i}(1,\Delta^\pi) 
    > H^{a,i}_{d^i}(0,\Delta^\pi,t,d^{\bar i}) + F^{a,i}_{d^i}(0,\Delta^\pi);\\
    0,\quad &\text{ otherwise}.
  \end{dcases}
\end{equation}
Further, the optimal take strategy \(\alpha^{\mathrm{take},*}\) is given by the sequence of impulse trades \((\tau_n^*,\xi_n^*)_{n=1}^\infty\), inductively defined by  
\[
  \begin{dcases}
    \tau_{n+1}^* \coloneqq \inf\big\{t>\tau_n^*:(t,\Delta^\pi_{t^-},D_{t^-})\notin \mathcal{C}\big\},\quad n \geq 0;\\
    \xi_{n+1}^* \coloneqq \hat{\ell}\big(\tau_{n+1}^*,\Delta^\pi_{(\tau_{n+1}^*)^-},D_{(\tau_{n+1}^*)^-}\big),\quad n \geq 0
  \end{dcases}
\] with \(\tau_0^* \equiv 0\), where \(\mathcal{C}\) is the continuation region:
\begin{equation}{\label{Eq 32}}
  \begin{aligned}
    \mathcal{C} = \bigg\{&(t,\Delta^\pi,d)\in [0,T]\times ([-\bar \Delta,\bar \Delta]\cap \mathcal{Z}) \times \mathbb{S}^M:\\
    &\varphi_d(t,\Delta^\pi) > \sup\limits_{\substack{\ell \in \mathcal{E}^M:\\\Delta^\pi + \sum_{i=1}^{M}\ell^i \Delta^i \in [-\bar \Delta,\bar \Delta]}}\bigg[-\sum_{i=1}^{M}\bigg(|\ell^i|\Big(\frac{d^i}{2}+\epsilon\Big)+\tilde{\epsilon}\mathbb{I}_{\{\ell^i \neq 0\}}\bigg)+\varphi_d\bigg(t,\Delta^\pi+\sum_{i=1}^{M}\ell^i \Delta^i\bigg)\bigg]\bigg\}
  \end{aligned}
\end{equation}
and 
\begin{equation}{\label{Eq 33}}
  \hat{\ell}(t,\Delta^\pi,d) \in \text{argmax} \Bigg(-\sum_{i=1}^{M}\bigg(|\ell^i|\Big(\frac{d^i}{2}+\epsilon\Big)+\tilde{\epsilon}\mathbb{I}_{\{\ell^i \neq 0\}}\bigg)+\varphi_d\bigg(t,\Delta^\pi+\sum_{i=1}^{M}\ell^i \Delta^i\bigg)\Bigg)
\end{equation}
over \(\{\ell\in \mathcal{E}^M:\Delta^\pi + \sum_{i=1}^{M}\ell^i \Delta^i \in [-\bar \Delta,\bar \Delta]\}\).   

In summary, we are able (by solving the HJBQVI) to identify the optimal make-take strategy \((\alpha^{\mathrm{make},*},\alpha^{\mathrm{take},*})\). A simple financial interpretation of this strategy is the following.
For the make strategy,
when the spread exceeds one tick, it is optimal to quote aggressively if the resulting fee rebates fully offset the potential loss from improving the best price (unless a trade would lead the portfolio delta to exceed the bound);
otherwise, it is optimal to quote conservatively.
For the take strategy, it is optimal to send a market order only when the portfolio delta is reduced effectively after intervention,
with an order size that simultaneously minimizes the transaction costs and the penalty on the residual portfolio delta. 
We will see (in \hyperref[MFR]{Remark 5}) that this simple intuitive construction results in a heuristic approximation of the optimal rebate policy.

\subsection{A three-step scheme for rebate design}{\label{Sec 3.4}}
As a business, an exchange may face a variety of objectives in developing relationships with the users of its platform. In much of the academic literature (see \cite{euch2021optimal}, \cite{baldacci2021optimal}, and \cite{baldacci2023market}), these are reduced to optimizing fee income over a short time scale. Over longer time scales, liquidity attraction is a primary concern of the exchange, since lack of sufficient liquidity on its platform may lead to the loss of clients or members, resulting in much larger losses than a reduction in short-term fee income. 
Therefore, in order to remain competitive from a liquidity perspective, the exchange may aim to design a rebate policy that incentivizes market makers to provide more liquidity, and wishes to do so in a cost-effective way.
Specifically, we set as our out goal \emph{to design a rebate policy for which the optimal make strategy of the market maker 
is to consistently improve the best available prices by one tick whenever possible}, with a minimal level of rebate offered. 

In the previous section, we considered the market making problem under a given rebate policy and derived the corresponding optimal make-take strategy in feedback form;
this enables the exchange to anticipate the best response of the market maker to any specific rebate policy.
In \cite{euch2021optimal} and its extensions, this approach is used in the \emph{second best} framework where the fee rebates do not depend on the efforts made by the market maker,
and the exchange needs to solve a Principal-Agent problem to determine the best rebate policy that maximizes its fee income. 
However, this does not focus on how the exchange can improve market liquidity with the presence of rebate incentives. 
More importantly, the Principal-Agent framework requires the same time scale, which does not hold in our setting:
the market maker maximizes her net PnL over a short horizon, while the exchange aims to improve market liquidity, in order to satisfy long-term business objectives. 
In contrast, our \emph{first best} framework assumes that the exchange can perfectly monitor the actual behavior of the market maker, which is a reasonable assumption in practice, as the limit orders posted by the market maker are known to the exchange. 
As we allow the time scales of the market maker and the exchange to differ, we determine a rebate policy which depends on the risk preference of the market maker, but otherwise does not depend on time. In particular, the exchange is able to design a rebate policy that aligns the market maker's behavior with its liquidity attraction objective, rather than simply focussing on fee income.

In what follows, we generalize this idea by enabling the exchange to 
Based on these targets, the exchange then designs individualized rebate policies to meet them.
For consistency, we continue to work with the Black–Scholes case as in Section \ref{Sec 3.3}.

Since the market maker can only quote aggressively when the spread is larger than one tick,
the objective of the exchange is to determine a binary rebate rule so that the best response of the market maker is to quote aggressively 
if the exchange wishes to narrow the spreads and to quote conservatively otherwise.
Mathematically, the exchange aims to design a rebate policy for which the optimal make strategy \(\alpha^{\mathrm{make},*}\) of the market maker is given by 
\begin{equation}{\label{Eq 34}}
  \begin{dcases}
    \alpha^{b,i,*}_t = \overline{\alpha^{b,i}} \mathbb{I}_{\{D^i_{t^-} > \delta\}},\quad t\in [0,T],\quad i\in \mathcal{I},\\
    \alpha^{a,i,*}_t = \overline{\alpha^{a,i}} \mathbb{I}_{\{D^i_{t^-} > \delta\}},\quad t\in [0,T],\quad i\in \mathcal{I},
  \end{dcases}
\end{equation}
where \(\overline{\alpha^{b,i}},\overline{\alpha^{a,i}}\) are constants valued in \(\{0,1\}\), 
reflecting the individualized liquidity targets set by the exchange.
In particular, if the exchange requires the market maker to consistently improve the best prices whenever possible,
it can set \(\overline{\alpha^{b,i}}=\overline{\alpha^{a,i}}=1\) for all \(i\in \mathcal{I}\).      

To this end, we propose a three-step rebate design scheme.
The main idea is to \emph{structure a rebate policy such that any other make strategy is suboptimal to \eqref{Eq 34}}. 
The details are as follows.
\begin{enumerate}
  \item [(a)] First, the exchange specifies the values
  \[
    F^{b,i}(\overline{\alpha^{b,i}},d^i,\Delta^\pi) \text{ and }  F^{a,i}(\overline{\alpha^{a,i}},d^i,\Delta^\pi)
  \]
  for all \(i\in \mathcal{I}, d^i > \delta, \Delta^\pi \in [-\bar{\Delta},\bar{\Delta}]\cap \mathcal{Z}\), and sets
  \[
    F^{b,i}(0,d^i,\Delta^\pi) = F^{a,i}(0,d^i,\Delta^\pi)\equiv 0
  \]
  for all \(i\in \mathcal{I}, d^i = \delta, \Delta^\pi \in [-\bar{\Delta},\bar{\Delta}]\cap \mathcal{Z}\),
  but leaves for all \(i\in \mathcal{I}, d^i > \delta, \Delta^\pi \in [-\bar{\Delta},\bar{\Delta}]\cap \mathcal{Z}\) the values
  \[
    F^{b,i}(1-\overline{\alpha^{b,i}},d^i,\Delta^\pi) \text{ and }  F^{a,i}(1-\overline{\alpha^{a,i}},d^i,\Delta^\pi)
  \]
  to be determined later.\\
  That is, the exchange prescribes the fee rebates that it promises to pay if the market maker follows the make strategy \eqref{Eq 34}. 
  The zero fee rebates for quoting at the touch when the spread is one tick are set for convenience only. 
  From the exchange's perspective, the market maker does not need compensations for adding liquidity to options with tight spreads already. 
  \item [(b)] Second, we assume from the outset what we want to show,
  namely that the rebate policy \(F = (F^{b,i},F^{a,i})_{i=1}^M\) specified in (a) leads to the optimal make strategy \eqref{Eq 34}. 
  In view of the optimal feedback functions \eqref{Eq 30} and \eqref{Eq 31}, we may eliminate the suprema in the coupled system of HJBQVIs \eqref{Eq 26} 
  and hence reduce it to a coupled system of Quasi-Variational-Inequalities (QVIs):
  \begin{align}{\label{Eq 35}}
    &0 = \min \Bigg\{-\frac{\partial \varphi_d}{\partial t}(t,\Delta^\pi) -\sum_{d^\prime \in \mathbb{S}^M}q_{d,d^\prime}\big(\varphi_{d^\prime}(t,\Delta^\pi)-\varphi_d(t,\Delta^\pi)\big) + \gamma(\Delta^\pi)^2 \nonumber \\ \displaybreak[1]
    & - \sum_{i=1}^{M} \mathbb{I}_{\mathcal{Q}^{b,i}} \Big\{\mathbb{I}_{\{d^i=\delta\}} \big[H^{b,i}_{d^i}(0,\Delta^\pi,t,d^{\bar i}) + F^{b,i}_{d^i}(0,\Delta^\pi)\big]
    + \mathbb{I}_{\{d^i>\delta\}}\big[H^{b,i}_{d^i}(\overline{\alpha^{b,i}},\Delta^\pi,t,d^{\bar i}) + F^{b,i}_{d^i}(\overline{\alpha^{b,i}},\Delta^\pi)\big]\Big\} \nonumber \\ 
    & - \sum_{i=1}^{M} \mathbb{I}_{\mathcal{Q}^{a,i}} \Big\{\mathbb{I}_{\{d^i=\delta\}} \big[H^{a,i}_{d^i}(0,\Delta^\pi,t,d^{\bar i}) + F^{a,i}_{d^i}(0,\Delta^\pi)\big]
    + \mathbb{I}_{\{d^i>\delta\}}\big[H^{a,i}_{d^i}(\overline{\alpha^{a,i}},\Delta^\pi,t,d^{\bar i}) + F^{a,i}_{d^i}(\overline{\alpha^{a,i}},\Delta^\pi)\big]\Big\}, \nonumber \\ 
    & \varphi_d(t,\Delta^\pi)-\sup\limits_{\substack{\ell\in \mathcal{E}^M:\\\Delta^\pi + \sum_{i=1}^{M}\ell^i \Delta^i \in [-\bar \Delta,\bar \Delta]}}\bigg\{-\sum_{i=1}^{M}\bigg(|\ell^i|\Big(\frac{d^i}{2}+\epsilon\Big)+\tilde{\epsilon}\mathbb{I}_{\{\ell^i \neq 0\}}\bigg)+\varphi_d\bigg(t,\Delta^\pi+\sum_{i=1}^{M}\ell^i \Delta^i\bigg)\bigg\} \Bigg\}
  \end{align}
  on \([0,T) \times ([-\bar \Delta, \bar \Delta] \cap \mathcal{Z})\), along with the terminal condition
  \begin{equation}{\label{Eq 36}}
    \varphi_d(T,\Delta^\pi) = - \gamma^\prime (\Delta^\pi)^2.
  \end{equation}  
  Solving the coupled system of QVIs \eqref{Eq 35} and \eqref{Eq 36}, we obtain the candidate value function \(\varphi\). 
  \item [(c)] Finally, we verify that for the rebate policy specified in (a), the candidate value function \(\varphi\) obtained in (b), 
  is a solution to the associated market making problem, for which the optimal make strategy is given by \eqref{Eq 34}. 
  To do so, it is enough to check 
  \begin{enumerate}
    \item [i.] for every fixed \(i\in \mathcal{I}, d^i > \delta\), and \(\Delta^\pi \in [-\bar \Delta,\bar \Delta] \cap \mathcal{Z}\) with \(\Delta^\pi + \Delta^i \leq \bar \Delta\) that
    \begin{equation}{\label{Eq 37}}
    H^{b,i}_{d^i}(\overline{\alpha^{b,i}},\Delta^\pi,t,d^{\bar i}) + F^{b,i}_{d^i}(\overline{\alpha^{b,i}},\Delta^\pi) \geq H^{b,i}_{d^i}(1-\overline{\alpha^{b,i}},\Delta^\pi,t,d^{\bar i}) + F^{b,i}_{d^i}(1-\overline{\alpha^{b,i}},\Delta^\pi)
    \end{equation} 
    holds for all \((t,d^{\bar i})\in [0,T) \times \mathbb{S}^{M-1}\),
    \item [ii.] and for every fixed \(i\in \mathcal{I}, d^i > \delta\), and \(\Delta^\pi \in [-\bar \Delta,\bar \Delta] \cap \mathcal{Z}\) with \(\Delta^\pi - \Delta^i \geq -\bar \Delta\) that
    \begin{equation}{\label{Eq 38}}
    H^{a,i}_{d^i}(\overline{\alpha^{a,i}},\Delta^\pi,t,d^{\bar i}) + F^{a,i}_{d^i}(\overline{\alpha^{a,i}},\Delta^\pi) \geq H^{a,i}_{d^i}(1-\overline{\alpha^{a,i}},\Delta^\pi,t,d^{\bar i}) + F^{a,i}_{d^i}(1-\overline{\alpha^{a,i}},\Delta^\pi)
    \end{equation} 
    holds for all \((t,d^{\bar i})\in [0,T) \times \mathbb{S}^{M-1}\),
  \end{enumerate}
  where the notation \(d^{\bar i}\) means the vector \(d\) except for the \(i\)-th coordinate,  
  \(H^{b,i}\) and \(H^{a,i}\) are given by \eqref{Eq 28} and \eqref{Eq 29}.
  Rearranging the terms in \eqref{Eq 37} and \eqref{Eq 38}, it suffices to check
  \begin{enumerate}
    \item [i.] for every fixed \(i\in \mathcal{I}, d^i > \delta\), and \(\Delta^\pi \in [-\bar \Delta,\bar \Delta] \cap \mathcal{Z}\) with \(\Delta^\pi + \Delta^i \leq \bar \Delta\) that 
    the unspecified value \(F^{b,i}_{d^i}(1-\overline{\alpha^{b,i}},\Delta^\pi)\) satisfies the upper bounds: 
    \begin{equation}{\label{Eq 39}}
      \begin{aligned}
        F^{b,i}_{d^i}(1-\overline{\alpha^{b,i}},\Delta^\pi) \leq \ & H^{b,i}_{d^i}(\overline{\alpha^{b,i}},\Delta^\pi,t,d^{\bar i}) + F^{b,i}_{d^i}(\overline{\alpha^{b,i}},\Delta^\pi) - H^{b,i}_{d^i}(1-\overline{\alpha^{b,i}},\Delta^\pi,t,d^{\bar i})\\
        \eqqcolon \ & G^{b,i}_{d^i} (\overline{\alpha^{b,i}},\Delta^\pi,t,d^{\bar i}) 
      \end{aligned}
    \end{equation} 
    for all \((t,d^{\bar i}) \in [0,T) \times \mathbb{S}^{M-1}\), 
    \item [ii.] and for every fixed \(i\in \mathcal{I}, d^i > \delta\), and \(\Delta^\pi \in [-\bar \Delta,\bar \Delta] \cap \mathcal{Z}\) with \(\Delta^\pi - \Delta^i \geq -\bar \Delta\) that
    the unspecified value \(F^{a,i}_{d^i}(1-\overline{\alpha^{a,i}},\Delta^\pi)\) satisfies the upper bounds: 
    \begin{equation}{\label{Eq 40}}
      \begin{aligned}
        F^{a,i}_{d^i}(1-\overline{\alpha^{a,i}},\Delta^\pi) \leq \ & H^{a,i}_{d^i}(\overline{\alpha^{a,i}},\Delta^\pi,t,d^{\bar i}) + F^{a,i}_{d^i}(\overline{\alpha^{a,i}},\Delta^\pi) - H^{a,i}_{d^i}(1-\overline{\alpha^{a,i}},\Delta^\pi,t,d^{\bar i})\\
        \eqqcolon \ & G^{a,i}_{d^i} (\overline{\alpha^{a,i}},\Delta^\pi,t,d^{\bar i}) 
      \end{aligned}
    \end{equation}  
    for all \((t,d^{\bar i}) \in [0,T) \times \mathbb{S}^{M-1}\).
  \end{enumerate}
  By the verification theorem, if these unspecified values are chosen to satisfy the upper bounds in \eqref{Eq 39} and \eqref{Eq 40} 
  (for all \((t,d^{\bar i})\in [0,T) \times \mathbb{S}^{M-1}\)) together with the property \eqref{Eq 8}, 
  then the candidate value function \(\varphi\) found in (b) is the true value function for the resulting feasible rebate policy \(F\), 
  for which the optimal make strategy is given by \eqref{Eq 34}.\\
  Moreover, given \(i \in \mathcal{I}, d^i>\delta,\Delta^\pi \in [-\bar \Delta, \bar \Delta] \cap \mathcal{Z}\) with \(\Delta^\pi + \Delta^i \leq \bar \Delta\),
  the least upper bound 
  \begin{equation}{\label{Eq 41}}
    \text{LUB}^{b,i}_{d^i}(\Delta^\pi) \coloneqq \inf\limits_{(t,d^{\bar i})\in [0,T)\times \mathbb{S}^{M-1}} G^{b,i}_{d^i} (\overline{\alpha^{b,i}},\Delta^\pi;t,d^{\bar i})
  \end{equation}
  represents the highest possible fee rebate that the exchange should offer to the market maker 
  for the reverse make strategy \(\alpha^{b,i,\mathrm{reverse}} \coloneqq 1-\overline{\alpha^{b,i}}\) on the bid side,
  similarly, given \(i \in \mathcal{I}, d^i>\delta,\Delta^\pi \in [-\bar \Delta, \bar \Delta] \cap \mathcal{Z}\) with \(\Delta^\pi - \Delta^i \geq -\bar \Delta\),
  the least upper bound 
  \begin{equation}{\label{Eq 42}}
    \text{LUB}^{a,i}_{d^i}(\Delta^\pi) \coloneqq \inf\limits_{(t,d^{\bar i})\in [0,T)\times \mathbb{S}^{M-1}} G^{a,i}_{d^i} (\overline{\alpha^{a,i}},\Delta^\pi;t,d^{\bar i})
  \end{equation}
  represents the highest possible fee rebate that the exchange should offer to the market maker 
  for the reverse make strategy \(\alpha^{a,i,\mathrm{reverse}} \coloneqq 1-\overline{\alpha^{a,i}}\) on the ask side. 
  In other words, for each option \(C^i\), given its spread \(d^i\) and the portfolio delta \(\Delta^\pi\) held by the market maker,
  if the bid (resp.\ ask) side fee rebates \(F^{b,i}_{d^i}(\alpha^{b,i,\mathrm{reverse}},\Delta^\pi)\) (resp. \(F^{a,i}_{d^i}(\alpha^{a,i,\mathrm{reverse}},\Delta^\pi)\))
  for the reverse make strategy exceed the least upper bound \(\text{LUB}^{b,i}_{d^i}(\Delta^\pi)\) (resp. \(\text{LUB}^{a,i}_{d^i}(\Delta^\pi)\) ),
  then the market maker lose incentives to adopt the make strategy \(\overline{\alpha^{b,i}}\) (resp. \(\overline{\alpha^{a,i}}\)) that the exchange wishes her to follow.
\end{enumerate}    

\begin{remark}
  In a similar vein to \cite{euch2021optimal}, if the exchange aims to increase the short-term transaction volumes so as to increase the transaction fee revenues,
  the optimal rebate policy that maximizes its short-term PnL can be found by solving the following utility maximization problem: 
  \[
    \sup\limits_{(\overline{\alpha^{b,i}},\overline{\alpha^{a,i}})_{i=1}^M} \mathbb{E}\Big[-\exp\Big(-\eta\big(\sum_{i=1}^{M}(\epsilon+\tilde{\epsilon})(N_T^{b,i,\alpha^{\mathrm{make},*}}+N_T^{a,i,\alpha^{\mathrm{make},*}})-R_T^{\alpha^{\mathrm{make},*}}   \big) \Big)\Big],
  \] where \(\eta>0\) is the exchange's risk aversion parameter, \((\overline{\alpha^{b,i}},\overline{\alpha^{b,i}})_{i=1}^M\) is the make strategy anticipated by the exchange, 
  \(\alpha^{\mathrm{make},*}\) is the optimal make strategy given by \eqref{Eq 34}, \(N^{b,i,\alpha^{\mathrm{make},*}},N^{a,i,\alpha^{\mathrm{make},*}}\)
  are the market maker's execution processes under the optimal make strategy \(\alpha^{\mathrm{make},*}\), and 
  \(R^{\alpha^{\mathrm{make},*}}\) is the market maker's cumulative rebate revenue process under the optimal make strategy \(\alpha^{\mathrm{make},*}\)
  with the fee rebates \(F^{b,i}_{d^i}(\overline{\alpha^{b,i}},\Delta^\pi)\) and \(F^{a,i}_{d^i}(\overline{\alpha^{a,i}},\Delta^\pi)\) specified in (a).

  To this end, the exchange can first apply the three-step scheme to each candidate make strategy \((\overline{\alpha^{b,i}},\overline{\alpha^{a,i}})_{i=1}^M\) 
  to construct a feasible rebate policy that ensures the market maker will indeed follow the anticipated strategy,
  and then solve the utility maximization problem above to determine the optimal rebate policy that attains the supremum.
\end{remark}

\begin{remark}
  The rebate design presented above assumes that the exchange can monitor the position (in particular the net delta exposure) of the market maker, and use this when determining the rebate. This can be extended in various ways:

  A first extension is to observe that, if the exchange wishes to ensure a market maker consistently submits aggressive limit orders, then it needs to set the rebate sufficiently high, but setting it higher than a certain threshold has no further impact (beyond the exchange incurring additional costs). Therefore, the rebate value calculated using the method above, maximized over admissible deltas, will provide a lower bound on how large a constant rebate would need to be in order to ensure consistent market maker behaviour.

  A second extension is to set a bound on positions where the exchange accepts that market makers may not be willing to quote aggressively when their positions in delta become too large. This can be used as a target policy (rather than the uniformly aggressive policy above), and the three-step scheme followed through to determine the optimal rebate (either delta-dependent, or constant as in the first extension). This can then be explored iteratively to confirm the impact of this rebate on market makers' strategies, and hence to determine more refined rebate schemes.
\end{remark}

\section{Numerical scheme}{\label{Sec 4}}
Using dynamic programming methods, we have shown that, in the Black–Scholes model with a constant delta approximation, 
the problems of market making and rebate design boil down to solving an associated coupled system of HJBQVIs.
In this section, we propose a semi-implicit scheme to solve the coupled system of HJBQVIs \eqref{Eq 26} and \eqref{Eq 27} of the market making problem, 
which includes, as a special case, the coupled system of QVIs \eqref{Eq 35} and \eqref{Eq 36} of the rebate design problem. 
To this end, we implement a penalty method to transform the coupled system of HJBQVIs into a system of penalized HJB equations, and then solve it via a semi-implicit Euler scheme. 

\subsection{Penalty method}
Following \cite{reisinger2020error}, the coupled system of HJBQVIs \eqref{Eq 26} and \eqref{Eq 27} can be numerically approximated by a penalty scheme.
For \(\rho >0\), denote by \(\varphi^\rho = (\varphi^\rho_d)_{d \in \mathbb{S}^M}\) the solution to the coupled system of \emph{penalized} HJB equations:
\begin{align}{\label{Eq 43}}
  0 = - &\frac{\partial \varphi^\rho_d}{\partial t}(t,\Delta^\pi) -\sum_{d^\prime \in \mathbb{S}^M}q_{d,d^\prime}\big(\varphi^\rho_{d^\prime}(t,\Delta^\pi)-\varphi^\rho_d(t,\Delta^\pi)\big) + \gamma(\Delta^\pi)^2 \nonumber \\ \displaybreak[1]
  - & \sum_{i=1}^{M} \sup\limits_{\alpha^{b,i}\in \mathcal{A}^b(d^i)} \big\{\mathbb{I}_{\mathcal{Q}^{b,i}} \big[H^{b,i}_{d^i}(\alpha^{b,i},\Delta^\pi,t,d^{\bar i}) + F^{b,i}_{d^i}(\alpha^{b,i},\Delta^\pi)\big]\big\} \nonumber \\
  - & \sum_{i=1}^{M} \sup\limits_{\alpha^{a,i}\in \mathcal{A}^a(d^i)} \big\{\mathbb{I}_{\mathcal{Q}^{a,i}} \big[H^{a,i}_{d^i}(\alpha^{a,i},\Delta^\pi,t,d^{\bar i}) + F^{a,i}_{d^i}(\alpha^{a,i},\Delta^\pi)\big]\big\} \nonumber \\
  - & \rho \Big(\sup\limits_{\substack{\ell\in \mathcal{E}^M:\\\Delta^\pi + \sum_{i=1}^{M}\ell^i \Delta^i \in [-\bar \Delta,\bar \Delta]}}\big\{-\sum_{i=1}^{M}\big(|\ell^i|(\frac{d^i}{2}+\epsilon)+\tilde{\epsilon}\mathbb{I}_{\{\ell^i \neq 0\}}\big)+\varphi^\rho_d(t,\Delta^\pi+\sum_{i=1}^{M}\ell^i \Delta^i)\big\} - \varphi^\rho_d(t,\Delta^\pi)\Big)^+
\end{align}
on \([0,T)\times ([-\bar \Delta, \bar \Delta] \cap \mathcal{Z})\), along with the terminal condition
\begin{equation}{\label{Eq 44}}
  \varphi^\rho_d(T,\Delta^\pi) = - \gamma^\prime (\Delta^\pi)^2.
\end{equation}

Comparing equations \eqref{Eq 26} and \eqref{Eq 43}, the last line of \eqref{Eq 43}, which penalizes the obstacle term in \eqref{Eq 26},
can be regarded as the effect of an additional generator of a compound Poisson process for market order executions, 
acting on the function \(\varphi^\rho_d\), with \(\rho\) indicating the intensity of the execution process. 
In financial language, this term has an interpretation of a \emph{latency effect}. 
The larger the penalty parameter \(\rho\), the faster the market orders are executed, or in other words, the lower the latency effect.

As shown in \cite{reisinger2020error}, the solution \(\varphi^\rho\) to the system of penalized HJB equations \eqref{Eq 43} and \eqref{Eq 44} converges to 
the exact solution \(\varphi\) to the system of HJBQVIs \eqref{Eq 26} and \eqref{Eq 27}, as \(\rho \to \infty\).

Therefore, for \(\rho\) large enough, we may think of \(\varphi^\rho\) as a valid approximation of \(\varphi\).

\subsection{Discretization}
To solve the system of penalized HJB equations \eqref{Eq 43} and \eqref{Eq 44}, we implement an Euler timestepping scheme 
that is explicit with respect to the generators in the first three lines of \eqref{Eq 43}, but for stability reasons, 
implicit with respect to the penalty term for the obstacle. 

We first consider a time discretization of the interval \([0,T]\) with uniform time step \(h = \frac{T}{N}\) for some \(N\in \mathbb{N}\).  
For the space discretization, we may assume\footnote{
The first assumption is innocent as the total delta limit is typically determined by the total delta exposure of the maximal positions in traded options. 
The second assumption is more technical. In the Black–Scholes model, the delta of a European call option can take any value in \([0,1]\),  
which of course includes irrational numbers. This assumption, while potentially controversial, 
is imposed to ensure that \([-\bar \Delta,\bar \Delta] \cap \mathcal{Z}\) is finite
(see \hyperref[Cor A.3]{Corollary A.3} for what may happen when this assumption does not hold). 
In practice, an irrational \(\Delta^i\) can be approximated by a rational number to a desired level of precision.}
that \(\bar \Delta\in \mathcal{Z}\) and that \(\Delta^i \in \mathbb{Q}\) for all \(i\in \mathcal{I}\).
Under these assumptions, the intersection \([-\bar \Delta,\bar \Delta]\cap \mathcal{Z}\) contains finitely many elements (see \hyperref[Lemma A.1]{Lemma A.1}). 
We label the elements \(\{z_0,\ldots,z_{N^\prime}\}\) in \([-\bar \Delta,\bar \Delta]\cap \mathcal{Z}\) 
in an increasing order with \(z_0=-\bar \Delta\) and \(z_{N^\prime}= \bar \Delta\).
Now each \((t_n,z_{n^\prime}),\ n= 0,\ldots,N,\ n^\prime=0,\ldots,N^\prime\) represents a time-space node of 
\([0,T]\times ([-\bar \Delta,\bar \Delta]\cap \mathcal{Z})\), where \(t_n = nh\). 

We introduce operators \(\tilde{\mathcal{L}}\) and \(\tilde{\mathcal{M}}\) associated with the time-space discretization of \eqref{Eq 43}, defined by
\begin{equation}{\label{Eq 45}}
  \begin{aligned}
    \tilde{\mathcal{L}} \varphi^\rho_d (t_n,z_{n^\prime})
    \coloneqq &\sum_{d^\prime \in \mathbb{S}^M}q_{d,d^\prime}\big(\varphi^\rho_{d^\prime}(t_n,z_{n^\prime})-\varphi^\rho_d(t_n,z_{n^\prime})\big)\\ 
    +  &\sum_{i=1}^{M} \sup\limits_{\alpha^{b,i}\in \mathcal{A}^b(d^i)} \Big\{\mathbb{I}_{\{z_{n^\prime}+\Delta^i \leq \bar{\Delta}\}} \big[H^{b,i}_{d^i}(\alpha^{b,i},z_{n^\prime},t_n,d^{\bar i}) + F^{b,i}_{d^i}(\alpha^{b,i},z_{n^\prime})\big]\Big\}\\
    + & \sum_{i=1}^{M} \sup\limits_{\alpha^{a,i}\in \mathcal{A}^a(d^i)} \Big\{\mathbb{I}_{\{z_{n^\prime}-\Delta^i \geq -\bar{\Delta}\}} \big[H^{a,i}_{d^i}(\alpha^{a,i},z_{n^\prime},t_n,d^{\bar i}) + F^{a,i}_{d^i}(\alpha^{a,i},z_{n^\prime})\big]\Big\}\\
  \end{aligned}
\end{equation}   
and 
\begin{equation}{\label{Eq 46}}
  \begin{aligned}
    \tilde{\mathcal{M}} \varphi^\rho_d (t_n,z_{n^\prime}) \coloneqq 
    \sup\limits_{\substack{\ell\in \mathcal{E}^M:\\z_{n^\prime} + \sum_{i=1}^{M}\ell^i \Delta^i \in [-\bar \Delta,\bar \Delta]}} \bigg\{-\sum_{i=1}^{M}\bigg(|\ell^i|\Big(\frac{d^i}{2}+\epsilon\Big)+\tilde{\epsilon}\mathbb{I}_{\{\ell^i \neq 0\}}\bigg)+\varphi^\rho_d\bigg(t_n,z_{n^\prime}+\sum_{i=1}^{M}\ell^i \Delta^i\bigg)\bigg\}.
  \end{aligned}
\end{equation}

Following the semi-implicit Algorithm \ref{Algo 1} in backward induction, 
we obtain the discretized solution \(\varphi^\rho_d(t_n,z_{n^\prime}),\ d \in \mathbb{S}^M\) at each node \((t_n,z_{n^\prime})\).
\begin{algorithm}
  \caption{Backward Semi-Implicit Algorithm}\label{Algo 1}
  \begin{algorithmic}[1]
  \State \textbf{Time step} \(t_N=T\): For all \(d \in \mathbb{S}^M\) and \(n^\prime \in \{0,\cdots,N^\prime\}\), set \(\varphi^\rho_d(t_N,z_{n^\prime})= - \gamma^\prime z_{n^\prime}^2\).
  \For{\(n = N-1, \dots, 0\)}
    \For{each \(d \in \mathbb{S}^M\)}
      \For{each \(n^\prime \in \{0,\cdots,N^\prime\}\)}
          \State Compute \(\tilde{\mathcal{L}}\varphi^\rho_d(t_{n+1},z_{n^\prime})\) from \eqref{Eq 45}, and store \((\alpha^{b,i,*},\ \alpha^{a,i,*})_{i=1}^M\) the argmax.
          \State Compute \(\tilde{\mathcal{M}}\varphi^\rho_d(t_{n+1},z_{n^\prime})\) from \eqref{Eq 46}, and store \((\ell^{1,*},\dots,\ell^{M,*})\) the argmax.
          \State Find \(\varphi^\rho_d(t_n,z_{n^\prime})\) by solving implicitly 
          \[
            \begin{aligned}
              & \varphi^\rho_d(t_n,z_{n^\prime}) - \rho h \big(\tilde{\mathcal{M}}\varphi^\rho_d(t_{n+1},z_{n^\prime})-\varphi^\rho_d(t_n,z_{n^\prime})\big)^+\\ 
          = & \; \varphi^\rho_d(t_{n+1},z_{n^\prime}) + h \big(\tilde{\mathcal{L}}\varphi^\rho_d(t_{n+1},z_{n^\prime})-\gamma z_{n^\prime}^2\big).
            \end{aligned}
          \]  
      \EndFor
    \EndFor
  \EndFor
  \end{algorithmic}
\end{algorithm}

\section{Numerical experiment}{\label{Sec 5}}
In this section, we provide numerical results for the rebate design problem formulated in Section \ref{Sec 3.4}. In particular, we will use our three-step scheme to determine a maximal level of rebate which should be offered to market makers when they quote conservatively, given a fixed rebate offered to market makers when they quote aggressively.

We consider the case where the exchange wishes to incentivize the market maker to consistently improve the best prices whenever possible, i.e. the case that \(\overline{\alpha^{b,i}}=\overline{\alpha^{a,i}}=1\) for all \(i\in \mathcal{I}\) in \eqref{Eq 34}. 

In step (a),  we fix rebate functions for aggressive quotes, which are symmetric on bid and ask sides, constant in portfolio delta,
and increasing in the spread:
\[
  F^{b,i}_{d^i}(0,\cdot) = F^{a,i}_{d^i}(0,\cdot) = 0,\quad d^i = \delta
\] and 
\[
  F^{b,i}_{d^i}(1,\cdot) = F^{a,i}_{d^i}(1,\cdot) = 0.1 * (d^i-\delta),\quad d^i > \delta.
\]

Specifically, no rebate is offered when the spread is one tick, otherwise the
rebate is proportional to the spread (if the market maker quotes aggressively). This implies that more rebate is paid to the market maker when she reduces a larger spread.  

\subsection{Data}{\label{Sec 5.1}}
We consider two (\(M=2\)) one-day-to-expiry (1DTE) SPXW call options \(C^1,C^2\) 
that expired on October 24, 2023, with strike prices \(K^1=\$4225\) and \(K^2=\$4230\) respectively.
The tick size for both options is \(\delta = \$0.01\). 
In the sequel, all the parameters are estimated from Level 1 data at the best quotes,
recorded during regular trading hours from 9:30 to 16:15 on October 23, 2023 (one day before expiry). We calibrate market parameters by considering the opening values of the market on this date, in particular we use the opening value of the VIX as a volatility estimate, and the Black--Scholes deltas calculated using the opening option prices and underlying SPX values for the constant-delta approximation (rounded to two decimal places).

Before estimation, we first analyze the time series of high-frequency intraday bid-ask spreads. 
Figure \ref{Fig 1} presents the intraday spread dynamics resampled at second frequency, excluding spread values exceeding ten ticks.
Notably, neither figure reveals any evident intraday seasonality effect,
supporting the homogeneity assumption on the spread Markov chain.\footnote{
The period of heightened spreads in the \(\$4225\) strike option does suggest that there is a case for a model with longer memory, such as a regime switching model. While we omit this extension here, as our model is concerned with short timeframes (at the level of seconds), this could easily be incorporated, and would require a more involved calibration of the spread dynamics.}
We note, however, that while this dataset does not exhibit any prominent intraday spread patterns, such effects may be present on other trading days or for longer-dated options.

Moreover, the histograms in Figure \ref{Fig 2} indicate that the first six spread levels dominate the empirical distribution and suggest disregarding larger spread values (as they occur infrequently, and hence cannot be reliably calibrated). This results in a state space \(\mathbb{S}=\{\delta,2 \delta,\dots, 6 \delta\}\) (\(m=6\)).

The estimation procedure is detailed in \hyperref[Appendix B]{Appendix B}, which also includes an inspection of the correlation in spreads (see Figure \ref{Fig 6}).
In the subsequent numerical experiment, we use the estimated parameters of the spread Markov chain (Table \ref{Table 2})
and the execution processes (Table \ref{Table 3}), together with other user-specified parameters (Table \ref{Table 1}).

\begin{figure}[h]
  \centering
  \begin{subfigure}[b]{0.5\textwidth}
    \centering
    \includegraphics[width=\linewidth]{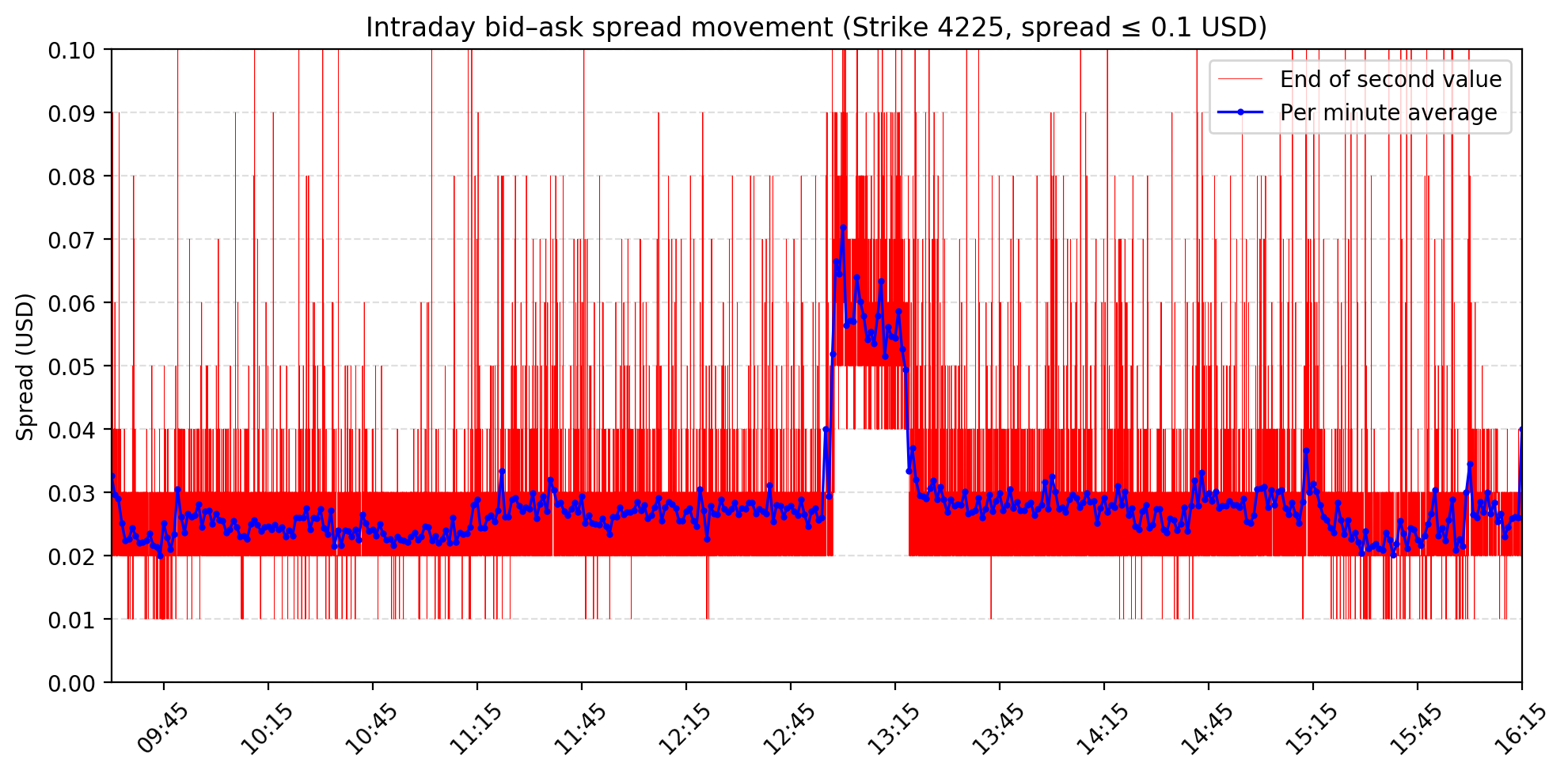}
  \end{subfigure}
  
  \vspace{0.3cm}

  \begin{subfigure}[b]{0.5\textwidth}
    \centering
    \includegraphics[width=\linewidth]{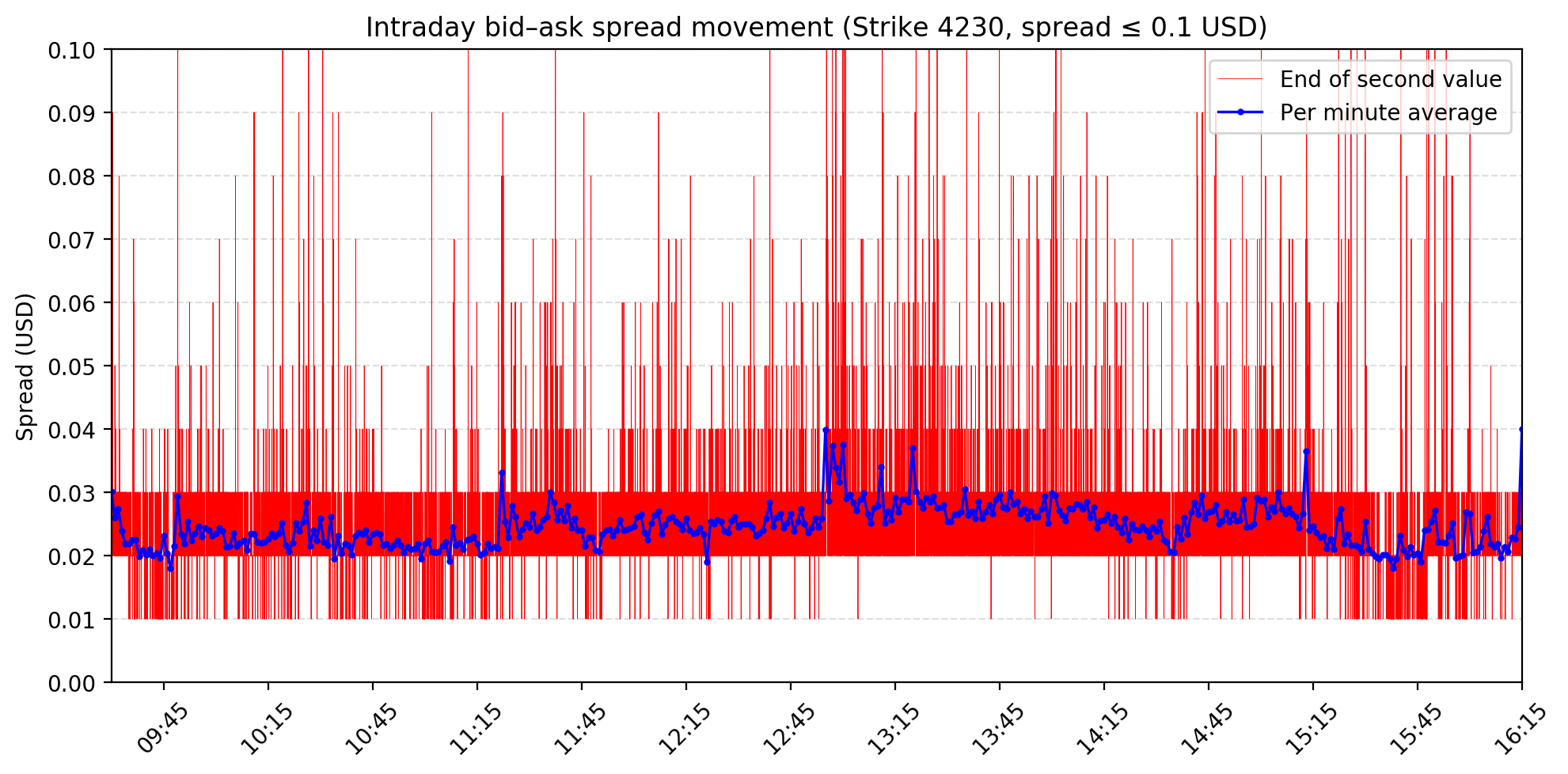}
  \end{subfigure}

  \caption{Dynamics of intraday bid-ask spreads in seconds.}
  \label{Fig 1}
\end{figure}

\begin{figure}[H]
  \centering
  \begin{subfigure}[b]{0.5\textwidth}
      \includegraphics[width=\linewidth]{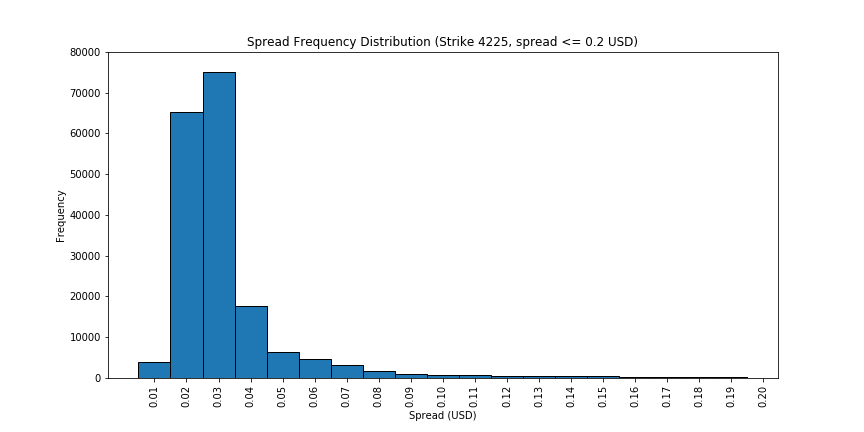}
  \end{subfigure}
  \hspace{0.1 cm}
  \begin{subfigure}[b]{0.5\textwidth}
      \includegraphics[width=\linewidth]{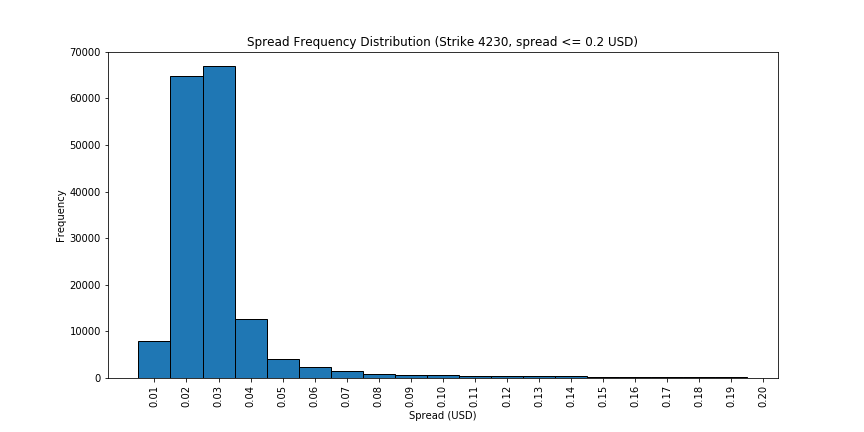}
  \end{subfigure}
  
  \caption{Empirical distribution of intraday bid-ask spreads.}
  \label{Fig 2}
\end{figure}

\begin{table}[h]
  \centering
  \setlength{\tabcolsep}{3pt}
  \begin{tabular}{p{0.2\textwidth} >{\raggedright}p{0.6\textwidth} >{\centering\arraybackslash}p{0.2\textwidth}}
      \toprule
      \textbf{Parameter} & \textbf{Description} & \textbf{Value} \\
      \midrule
      \multicolumn{3}{c}{\textbf{(a) Market Parameters}} \\
      \(T\) & Market making horizon & 10 s \\
      \(\sigma\) & Volatility & 0.22 \\
      \(\delta\) & Tick size & 0.01 dollars \\
      \(m\) & Number of spread values & 6 \\ 
      \(\bar{e}\) & Maximal volume of market order & 3 \\ 
      \(\epsilon\) & Transaction fee per unit of call & \(10^{-3}\) dollars \\
      \(\tilde \epsilon\) & Fixed brokerage fee & \(10^{-5}\) dollars \\ 
      \midrule
      \multicolumn{3}{c}{\textbf{(b) Optimization Parameters}} \\
      \(\gamma\) & Risk aversion towards running portfolio delta & 5 \\
      \(\gamma^\prime\) & Risk aversion towards terminal portfolio delta & 5 \\
      \(\Delta^1\)  & Initial delta of \(C^1\) & 0.50 \\
      \(\Delta^2\)  & Initial delta of \(C^2\) & 0.45 \\
      \midrule
      \multicolumn{3}{c}{\textbf{(c) Discretization Parameters}} \\
      \(\rho\) & Penalty scheme parameter & 1 \\
      \(N\) & Number of timesteps & 2000 \\
      \(\bar\Delta\) & Delta risk limit &  15\\
      \bottomrule
  \end{tabular}
  \caption{Parameters}
  \label{Table 1}
\end{table}

\subsection{Shape of value function}
In this section, we perform step (b) of our three-step scheme.
Specifically, given the fixed rebate functions \((F^{b,i},F^{a,i})_{i=1}^M\) proposed earlier,
we use Algorithm \ref{Algo 1} to 
numerically solve the system of QVIs \eqref{Eq 35} and \eqref{Eq 36} after penalizing the obstacle terms. The numerical result is the discretized value function \(\varphi = (\varphi_{d})_{d\in \mathbb{S}^M}\) 
defined over the domain \([0,T] \times ([-\bar \Delta,\bar \Delta]\cap \mathcal{Z})\). 

Figure \ref{Fig 3} displays the value function \(\varphi\) on a grid for various spread values. 
Each subfigure displays the value function \(\varphi_{d}\) for \(d = (d^1,d^2)=(k \delta,k \delta)\) with \(k=1,\ldots,m\) as a multiple of the tick size.
We observe that, in each subfigure, the value function \(\varphi\), which starts as a quadratic function at \(t=T\),
gradually bends downward as \(t\) approaches \(0\), forming an approximately quadratic shape.
This observation aligns with the quadratic penalty on portfolio delta in the performance criterion. 

\begin{figure}[H]
  \centering
  \begin{subfigure}{0.35\textwidth}
      \includegraphics[width=\linewidth]{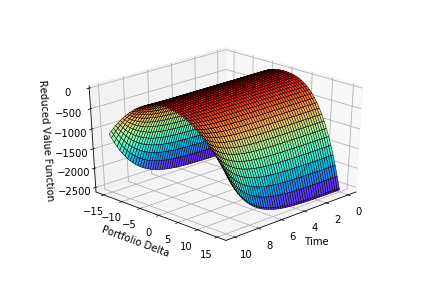}
      \caption{Spread = 1 tick}
      \label{fig:value_function_spread_1}
  \end{subfigure}
  \hspace{0.1\textwidth}
  \begin{subfigure}{0.35\textwidth}
      \includegraphics[width=\linewidth]{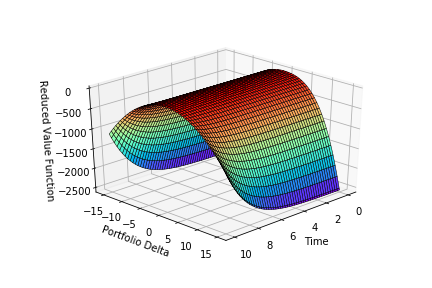}
      \caption{Spread = 2 ticks}
      \label{fig:value_function_spread_2}
  \end{subfigure}
  
  \begin{subfigure}{0.35\textwidth}
      \includegraphics[width=\linewidth]{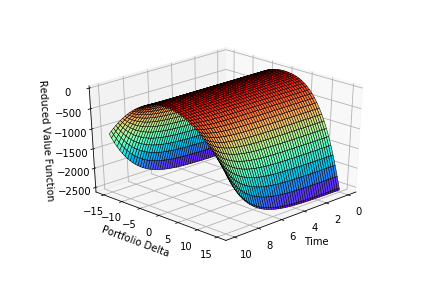}
      \caption{Spread = 3 ticks}
      \label{fig:value_function_spread_3}
  \end{subfigure}
  \hspace{0.1\textwidth}
  \begin{subfigure}{0.35\textwidth}
      \includegraphics[width=\linewidth]{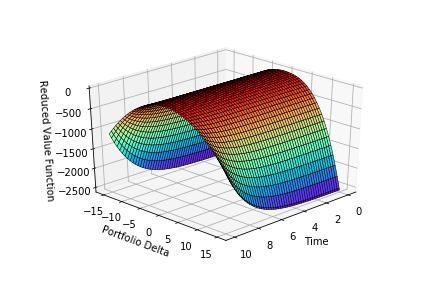}
      \caption{Spread = 4 ticks}
      \label{fig:value_function_spread_4}
  \end{subfigure}
  
  \begin{subfigure}{0.35\textwidth}
      \includegraphics[width=\linewidth]{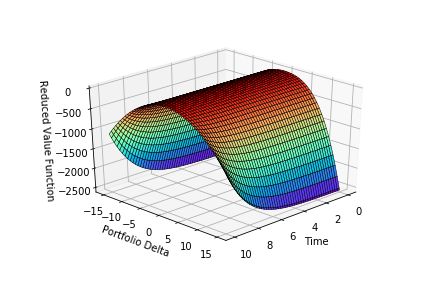}
      \caption{Spread = 5 ticks}
      \label{fig:value_function_spread_5}
  \end{subfigure}
  \hspace{0.1\textwidth}
  \begin{subfigure}{0.35\textwidth}
      \includegraphics[width=\linewidth]{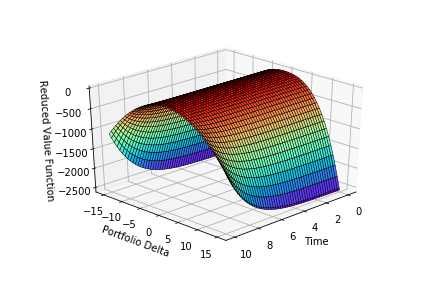}
      \caption{Spread = 6 ticks}
      \label{fig:value_function_spread_6}
  \end{subfigure}
  
  \caption{Plots of the reduced value function \(\varphi_d\), \(d = (d^1,d^2)=(k \delta, k \delta)\) with \(k=1,\cdots,6\).}
  \label{Fig 3}
\end{figure}


Moreover, the plots in Figure \ref{Fig 3} suggest a moderate dependence of the value function \(\varphi_d\) on the spread \(d\) (see \hyperref[Appendix C]{Appendix C} for numerical justification).
Indeed, we will see in \hyperref[Appendix B]{Appendix B} that the spread quickly mean reverts at the scale of seconds, so it has negligible impact on the value function.

\subsection{Maximal fee rebate}
In this section, we perform step (c) of our three-step scheme. 
Specifically, given the fixed rebate functions proposed earlier and the discretized value function \(\varphi\), 
we calculate the maximal fee rebates \(\text{LUB}^{b,i}_{d^i}\) and \(\text{LUB}^{a,i}_{d^i}\) \big(see \eqref{Eq 41} and \eqref{Eq 42}\big) for conservative quotes,
so that we may encourage market makers to quote aggressively by limiting the rebates 
they are given when quoting conservatively.

Taking option \(C^1\) as an example, Figure \ref{Fig 4} displays the least upper bounds \(\text{LUB}^{b,1}_{d^1}\) and \(\text{LUB}^{a,1}_{d^1}\) 
of the rebate functions \(F^{b,1}_{d^1}(0,\cdot)\) and \(F^{a,1}_{d^1}(0,\cdot)\) as a function of portfolio delta, for \(d^1=2\delta,\ldots,6 \delta\). 
For a given spread value \(d^1\), the least upper bound \(\text{LUB}^{b,1}_{d^1}\) 
(resp. \(\text{LUB}^{a,1}_{d^1}\)) represents the maximal fee rebate 
that the exchange should offer to the market maker if she does not improve the best bid (resp.\ ask) price when spread is \(d^1\). 
In each subfigure, the bid-side least upper bound \(\text{LUB}^{b,1}_{d^1}\) is marked in blue color, and the ask-side least upper bound \(\text{LUB}^{a,1}_{d^1}\) is marked in orange color. 

In each subfigure, we observe that the least upper bound for the bid (resp.\ ask) side 
is strictly positive when the portfolio delta is strictly negative (resp. positive). 
This suggests, in the case where the market maker holds negative (resp. positive) portfolio delta,
the exchange does not need to compensate the market maker on the bid (resp.\ ask) side,
since even without rebate incentives,
the market maker would place her limit buy (resp. sell) orders at the touch in order to reduce her delta exposure.

On the other hand, for \emph{small} spread values \(d^1=2 \delta\) and \(3 \delta\), the least upper bound for the bid (resp.\ ask) side 
is strictly negative when the portfolio delta is strictly positive (resp.\ negative) with large absolute value. 
This means the exchange should effectively ``punish'' the market maker if she fails to improve the best prices 
so that she is enforced to quote aggressively in order to avoid punishment. 

However, for \emph{large} spread values \(d^1 = 4 \delta,\ 5 \delta\) and \(6 \delta\), the least upper bound for the bid (resp.\ ask) side 
remains positive, and is indeed equal to the fee rebates for aggressive quotes
when the portfolio delta is strictly positive (resp.\ negative).
This initially counterintuitive fact can be rationalised by noticing that when the spread is at least four ticks,
the execution intensity (see Table \ref{Table 3}) is (almost) zero on both sides regardless of aggressiveness. 
In such cases, no liquidity taker arrives in the market and thus no transactions occur, 
so it makes no practical difference whether the market maker submits aggressive or conservative quotes — neither will be executed.
Therefore, an arbitrary fee rebate is sufficient to incentivize the market maker to quote at the touch, 
but cruciallly should not exceed the fee rebate for aggressive quotes (otherwise the market maker will send conservative quotes instead).
This explains why the maximal fee rebate when the market maker quotes conservatively is found to be equal to the fixed fee rebate when she quotes aggressively.

\begin{remark}{\label{MFR}}
  The rebate policy proposed above is somewhat unrealistic as it involves negative fee rebate in some circumstances.
  In practice, however, market makers are not penalized for simply adding liquidity at the touch. 
  To circumvent this issue, a natural idea is to \emph{lift} the entire rebate schedule, that is, to uniformly increase the baseline rebate for aggressive quotes 
  so that the maximal fee rebate for conservative quotes becomes positive.
Lifting the rebate schedule does not alter the core mechanism of the policy. What truly matters is the gap between the rebates assigned to conservative and aggressive quotes.
  This inspires us to consider a \emph{relative rebate policy} that focuses on the extra rewards needed to improve the best price, rather than on absolute rebate values.
  Precisely, we need to specify the \emph{marginal fee rebates}
  \[
  \begin{dcases}
    \text{MFR}^{b,i}_{d^i}(\Delta^\pi) \coloneqq F^{b,i}_{d^i}(1,\Delta^\pi) - \text{LUB}^{b,i}_{d^i}(\Delta^\pi),\\
    \text{MFR}^{a,i}_{d^i}(\Delta^\pi) \coloneqq F^{a,i}_{d^i}(1,\Delta^\pi) - \text{LUB}^{a,i}_{d^i}(\Delta^\pi),
  \end{dcases}
  \] which represent the minimal additional rewards the exchange should offer to incentivize a market maker to improve the best price.
\end{remark}

\begin{figure}[h]
  \centering
  \begin{subfigure}{\textwidth}
    \centering
    \includegraphics[width=\textwidth]{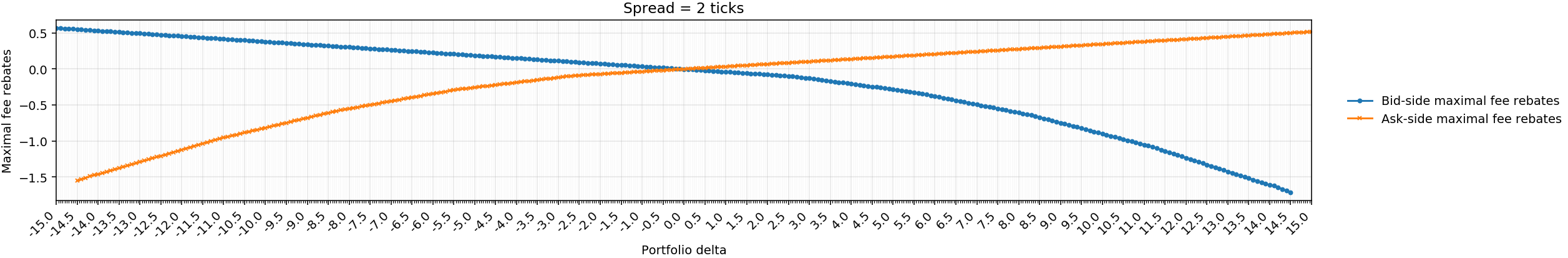}
  \end{subfigure}

  \vspace{0.6em}
  \begin{subfigure}{\textwidth}
    \centering
    \includegraphics[width=\textwidth]{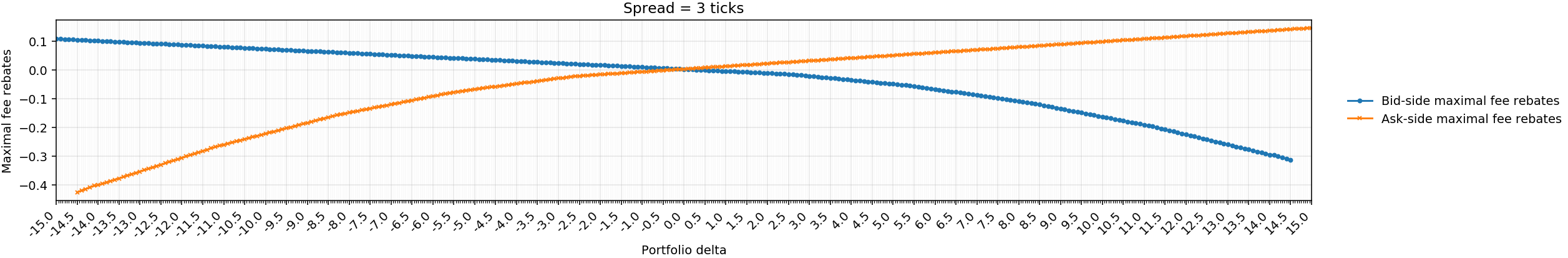}
  \end{subfigure}

  \vspace{0.6em}
  \begin{subfigure}{\textwidth}
    \centering
    \includegraphics[width=\textwidth]{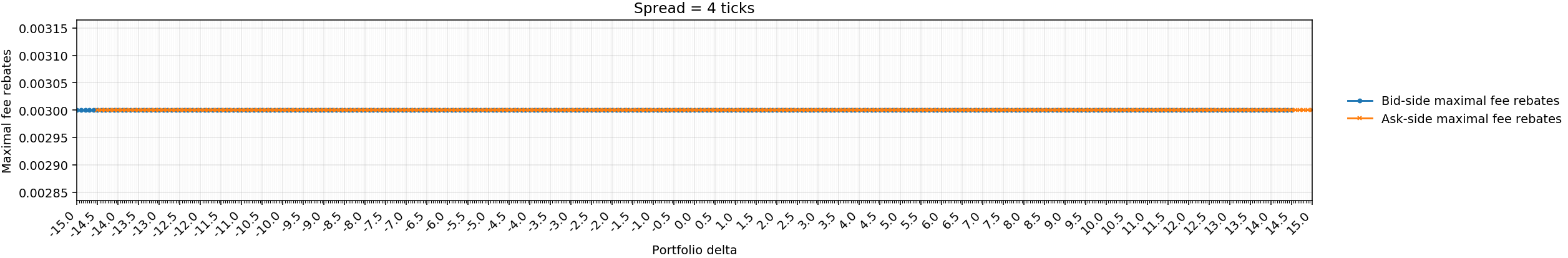}
  \end{subfigure}

  \vspace{0.6em}
  \begin{subfigure}{\textwidth}
    \centering
    \includegraphics[width=\textwidth]{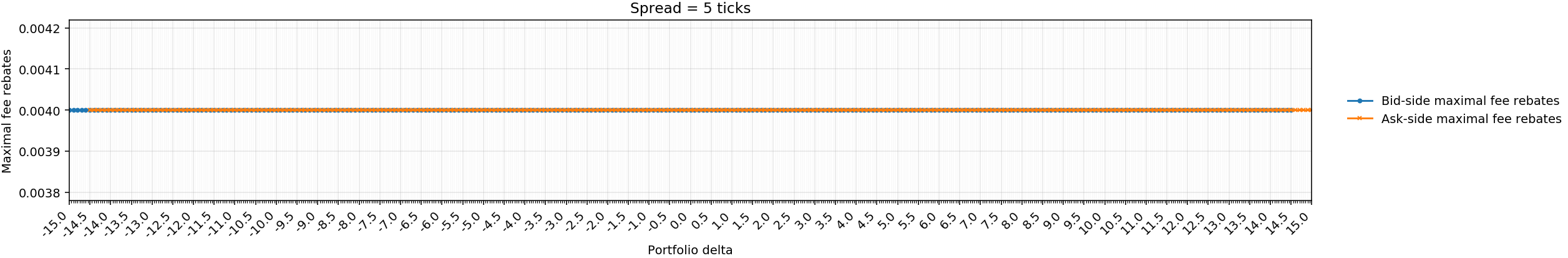}
  \end{subfigure}

  \vspace{0.6em}
  \begin{subfigure}{\textwidth}
    \centering
    \includegraphics[width=\textwidth]{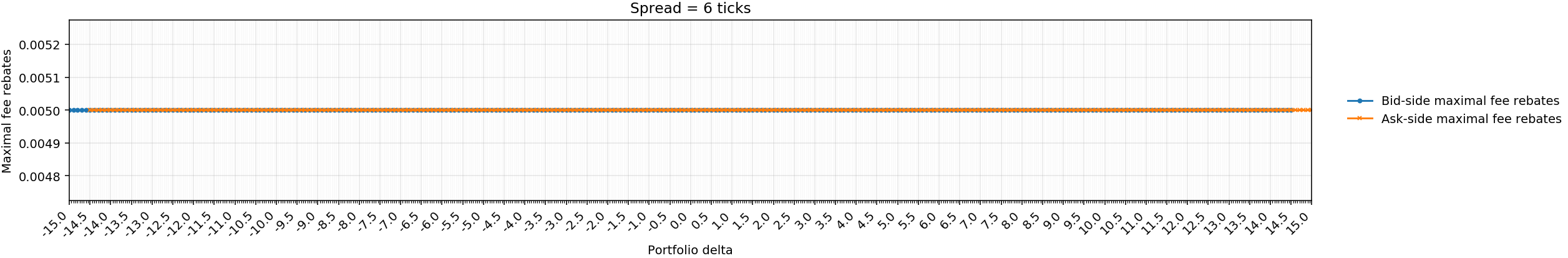}
  \end{subfigure}
  
  \caption{Plots of maximal fee rebates \(\text{LUB}^{b,1}_{d^1}\) and \(\text{LUB}^{a,1}_{d^1}\) as a function of portfolio delta
  when market maker quotes conservatively, \(d^1= 2\delta,\cdots,6\delta\).}
  \label{Fig 4}
\end{figure}

\section*{Acknowledgements}
This research was supported by CBOE, through the EPSRC Centre for Doctoral Training in Mathematics of Random Systems: Analysis, Modelling and Simulation (EP/S023925/1).

\section*{Contributions}
SC, ZG and CR developed the modelling framework, numerical schemes, and wrote the paper. LD and ZG processed the raw CBOE PITCH data underlying the numerical section, and implemented the computational algorithm. The authors thank Florian Huchede (CBOE), Leandro Sanchez Betancourt, and Alvaro Cartea for useful comments.
\clearpage

\newpage

\bibliography{ref}
\bibliographystyle{siam} 

\begin{appendix}

\section{More on space discretization} 
In this section, we discuss the finiteness of \([-\bar \Delta,\bar \Delta] \cap \mathcal{Z}\), where 
\[
  \mathcal{Z} \coloneqq \sum_{i=1}^{M} \Delta^i \mathbb{Z} \coloneqq \big\{\sum_{i=1}^{M} \Delta^i y^i:\ y^i \in \mathbb{Z},\ i=1,\dots,M\big\}  
\] with \(\Delta^i \in [0,1]\) for all \(i=1,\dots,M\), and \(\bar \Delta\) is assumed to belong to \(\mathcal{Z}\).  

\begin{lemma}{\label{Lemma A.1}}
  If \(\Delta^i \in \mathbb{Q}\) for all \(i=1,\dots,M\), then \([-\bar \Delta,\bar \Delta] \cap \mathcal{Z}\) is finite.  
\end{lemma}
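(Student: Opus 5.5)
The plan is to show that \(\mathcal{Z}\) is contained in a discrete lattice \(\frac{1}{q}\mathbb{Z}\) for a suitable positive integer \(q\). A bounded interval can meet such a lattice in only finitely many points, and this gives the claim.

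First, write each rational delta in lowest terms as \(\Delta^i = p^i/q^i\) with \(p^i\in\mathbb{Z}\) and \(q^i\in\mathbb{N}\). Set \(q \coloneqq \operatorname{lcm}(q^1,\dots,q^M)\), or simply the product \(q^1\cdots q^M\). Then each \(\Delta^i\) can be written as \(r^i/q\) with \(r^i \coloneqq p^i q/q^i\in\mathbb{Z}\). For arbitrary \(y\in\mathbb{Z}^M\) we get
\[
  \sum_{i=1}^{M}\Delta^i y^i = \frac{1}{q}\sum_{i=1}^{M} r^i y^i \in \frac{1}{q}\mathbb{Z},
\]
because \(\mathbb{Z}\) is closed under integer linear combinations. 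This shows \(\mathcal{Z}\subseteq \frac{1}{q}\mathbb{Z}\).

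Next, it follows that
\[
  [-\bar\Delta,\bar\Delta]\cap\mathcal{Z}\subseteq [-\bar\Delta,\bar\Delta]\cap\frac{1}{q}\mathbb{Z} = \Big\{\frac{k}{q}: k\in\mathbb{Z},\ |k|\le q\bar\Delta\Big\}.
\]
The right-hand set has at most \(2\lfloor q\bar\Delta\rfloor+1\) elements, so the intersection is finite. The assumption \(\bar\Delta\in\mathcal{Z}\) is not needed here; only the boundedness of the interval is used.

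There is no real obstacle in this argument. The one point to state carefully is that finiteness applies to the set of achievable \emph{values} of the portfolio delta, not to the set of inventories \(y\), which may well be infinite (for example, when \(\Delta^1=\Delta^2\)). The lemma concerns only \(\mathcal{Z}\), which is a subset of \(\mathbb{R}\), so the lattice argument is enough. It may also be worth remarking that the common denominator \(q\) gives a natural uniform grid spacing for the discretization in Section \ref{Sec 4}.
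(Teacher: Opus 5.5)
Your proof is correct and is essentially the paper's argument: both put the rational deltas over a common denominator (the paper uses the product $\prod_i q_i$), deduce $\mathcal{Z}\subseteq \frac{1}{q}\mathbb{Z}$, and count the lattice points in $[-\bar\Delta,\bar\Delta]$ to get at most $2\lfloor q\bar\Delta\rfloor+1$ elements. Your choice to allow $p^i\in\mathbb{Z}$ is slightly cleaner than the paper's assumption that $p_i$ is a positive integer, which excludes $\Delta^i=0$. Your side remarks are also right: the assumption $\bar\Delta\in\mathcal{Z}$ plays no role in the proof, and finiteness holds for the set of delta values, not for the set of inventories.
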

\begin{proof}
  By assumption, for each \(i\) we can write \(\Delta^i = \frac{p_i}{q_i}\) for some positive integers \(p_i,q_i\) with \(p_i \leq q_i\).
  Any element \(z\) of \(\mathcal{Z}\) is of the form
  \begin{equation}{\label{Eq 47}}
    z = \sum_{i=1}^{M} y^i \Delta^i = \sum_{i=1}^{M} y^i \frac{p_i}{q_i} = \frac{\sum\limits_{i=1}^{M} y^i p_i \prod\limits_{j \neq i}q_j}{\prod\limits_{i=1}^M q_i},\ 
    y^i \in \mathbb{Z},\ i= 1,\dots,M.
  \end{equation}
  It follows that \(\mathcal{Z} \subset \frac{1}{\prod_{i=1}^M q_i} \mathbb{Z}\) since the numerator \(\sum\limits_{i=1}^{M} y^i p_i \prod\limits_{j \neq i}q_j\) in \eqref{Eq 47}
  is always an integer.\\
  Now if \(z \in \mathcal{Z} \cap [-\bar \Delta,\bar \Delta]\), then there exists \(k\in \mathbb{Z}\) such that  
  \[
    z = \frac{k}{\prod_{i=1}^M q_i} \in [-\bar \Delta,\bar \Delta],
  \]
  or equivalently,
  \[
    - \bar \Delta \prod_{i=1}^M q_i \leq k \leq \bar \Delta \prod_{i=1}^M q_i. 
  \]
  From this we deduce that the set \(\mathcal{Z} \cap [-\bar \Delta,\bar \Delta]\) can at most have \(2\left \lfloor{\bar \Delta \prod_{i=1}^M q_i}\right \rfloor + 1\) elements, 
  in particular, is finite. 
\end{proof}

\hyperref[Lemma A.1]{Lemma A.1} provides a sufficient condition for \([-\bar \Delta,\bar \Delta] \cap \mathcal{Z}\) to be finite. 
It turns out that, if the assumption of rational \(\Delta^i\) in \hyperref[Lemma A.1]{Lemma A.1} is violated, 
then condition \eqref{Eq 48} in the following \hyperref[Cor A.3]{Corollary A.3} may hold true, 
in which case \([-\bar \Delta,\bar \Delta] \cap \mathcal{Z}\) is countably infinite.

\begin{lemma}{\label{Lemma A.2}}
  If \(x\in \mathbb{R} \setminus \mathbb{Q}\), then for any \(\epsilon >0\) there exist \(n,m\in \mathbb{Z}\) such that 
  \[
    0< |nx-m| < \epsilon.
  \]   
\end{lemma}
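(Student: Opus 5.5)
The plan is the classical Dirichlet pigeonhole argument applied to the fractional parts of integer multiples of \(x\). Fix \(\epsilon>0\) and choose \(N\in\mathbb{N}\) with \(1/N<\epsilon\). For \(k=0,1,\dots,N\), consider the fractional parts \(\{kx\} \coloneqq kx-\lfloor kx\rfloor \in [0,1)\). These are \(N+1\) numbers in \([0,1)\). Partition \([0,1)\) into the \(N\) half-open subintervals \([j/N,(j+1)/N)\), \(j=0,\dots,N-1\).

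By the pigeonhole principle, two indices \(0\le k_1<k_2\le N\) have \(\{k_1x\}\) and \(\{k_2x\}\) in the same subinterval. Hence
\[
  \big|\{k_2x\}-\{k_1x\}\big| < \tfrac{1}{N} < \epsilon .
\]
Set \(n\coloneqq k_2-k_1\in\mathbb{Z}\) and \(m\coloneqq \lfloor k_2x\rfloor-\lfloor k_1x\rfloor\in\mathbb{Z}\). Then \(nx-m=\{k_2x\}-\{k_1x\}\), which gives \(|nx-m|<\epsilon\).

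The only remaining point is the strict lower bound \(0<|nx-m|\), and this is where irrationality enters. Since \(n\ge 1\), if \(nx-m=0\) then \(x=m/n\in\mathbb{Q}\), contradicting \(x\in\mathbb{R}\setminus\mathbb{Q}\). Hence \(nx-m\neq 0\), which completes the argument.

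There is no real obstacle here. The one thing to be careful about is using \(N+1\) points against \(N\) boxes, so that two distinct indices are forced into the same box and \(n\neq 0\). This distinctness is exactly what makes the irrationality argument for the lower bound valid.
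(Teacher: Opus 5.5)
Your proof is correct, and its key step differs from the paper's. The paper also works with the reduced multiples $nx+f(n)\in[0,1]$. It first uses irrationality to show that all these numbers are distinct. It then applies Bolzano--Weierstrass to this infinite bounded sequence to extract two distinct terms within $\epsilon$ of each other, so $nx-m\neq 0$ comes directly from distinctness.

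You instead apply the finite pigeonhole principle to $N+1$ fractional parts in $N$ boxes, and use irrationality only at the end to rule out $nx-m=0$. Your route is more elementary and quantitative: it is Dirichlet's approximation theorem, giving $1\le n\le N$ with $|nx-m|<1/N$. That is an explicit bound on $n$ in terms of $\epsilon$, whereas the compactness argument only gives existence.

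Your normalisation via $\lfloor\cdot\rfloor$ and the half-open interval $[0,1)$ also avoids a small imprecision in the paper. There $f(0)$ is not actually unique, since both $0+0$ and $0+1$ lie in $[0,1]$. This is harmless for the argument, but your version is cleaner.
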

\begin{proof}
  Let \(x\in \mathbb{R} \setminus \mathbb{Q}\). For each \(n\in \mathbb{Z}\) let \(f(n)\) be the unique integer such that 
  \[
    n x + f(n) \in [0,1].
  \]    
  The sequence \(\{nx+f(n)\}_{n\in \mathbb{Z}}\) has distinct elements, since if there exist \(n\neq m \in \mathbb{Z}\) such that 
  \[
    nx + f(n) = mx + f(m),
  \] then 
  \[
    x = \frac{f(m)-f(n)}{n-m}\in \mathbb{Q}
  \] forms a contradiction.\\
  Since \(\{nx+f(n)\}_{n\in \mathbb{Z}}\) is a bounded sequence in \([0,1]\) with distinct elements, Bolzano-Weierstrass theorem implies that 
  it has a convergent subsequence, which further implies that for any \(\epsilon >0\) there exist \(\tilde n,\tilde m\in \mathbb{Z}\) such that 
  \[
    0 < \big| \big(\tilde n x+f(\tilde n)\big) - \big(\tilde m x+f(\tilde m)\big)\big| <\epsilon,
  \] or 
  \[
    0 < |n x - m| < \epsilon
  \] with \(n \coloneqq \tilde n-\tilde m \in \mathbb{Z}\) and \(m \coloneqq f(\tilde m) - f(\tilde n)\in \mathbb{Z}\).  
\end{proof}

\begin{corollary}{\label{Cor A.3}}
  If \(\Delta^1,\dots,\Delta^M\) are such that 
  \begin{equation}{\label{Eq 48}}
    \frac{\sum_{i\in \tilde{\mathcal{I}}} \Delta^i}{\sum_{i \in \mathcal{I}\setminus \tilde{\mathcal{I}}} \Delta^i} \in \mathbb{R}\setminus \mathbb{Q}
  \end{equation}
  for some subset \(\tilde{\mathcal{I}}\) of \(\mathcal{I}=\{1,\dots,M\}\),  
  then \([-\bar \Delta,\bar \Delta] \cap \mathcal{Z}\) is countably infinite.
\end{corollary}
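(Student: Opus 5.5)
Write \(A \coloneqq \sum_{i\in\tilde{\mathcal{I}}}\Delta^i\) and \(B \coloneqq \sum_{i\in\mathcal{I}\setminus\tilde{\mathcal{I}}}\Delta^i\), so that \eqref{Eq 48} says \(x \coloneqq A/B \in \mathbb{R}\setminus\mathbb{Q}\). In particular \(B\neq 0\), and since every \(\Delta^i\geq 0\) we have \(B>0\). The plan has three steps. First, observe that \(\mathcal{Z}\) is an additive subgroup of \(\mathbb{R}\) and that \(\mathcal{Z}\) contains \(nA - mB\) for all \(n,m\in\mathbb{Z}\). Indeed, taking \(y^i = n\) for \(i\in\tilde{\mathcal{I}}\) and \(y^i=-m\) otherwise gives \(\sum_i y^i\Delta^i = nA - mB\).

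Second, produce arbitrarily small nonzero elements of \(\mathcal{Z}\). Apply \hyperref[Lemma A.2]{Lemma A.2} to the irrational \(x\): for every \(\epsilon>0\) there are \(n,m\in\mathbb{Z}\) with \(0<|nx-m|<\epsilon/B\). Multiplying by \(B\) gives \(z_\epsilon \coloneqq nA - mB\in\mathcal{Z}\) with \(0<|z_\epsilon|<\epsilon\). Since \(\mathcal{Z}=-\mathcal{Z}\), we may also assume \(z_\epsilon>0\).

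Third, show that \([-\bar\Delta,\bar\Delta]\cap\mathcal{Z}\) is infinite. Here we use \(\bar\Delta>0\). The simplest route is to note that \(z_{1/k}\in(0,1/k)\) for each \(k\in\mathbb{N}\) and \(z_{1/k}\to 0\). These elements are nonzero and tend to \(0\), so infinitely many of them are distinct, and all those with \(1/k\leq\bar\Delta\) lie in \([-\bar\Delta,\bar\Delta]\). Alternatively, for a fixed \(\epsilon\) the multiples \(k z_\epsilon\), \(k=0,1,\dots,\lfloor \bar\Delta/z_\epsilon\rfloor\), are pairwise distinct elements of \(\mathcal{Z}\cap[0,\bar\Delta]\), and their number tends to infinity as \(\epsilon\to 0\). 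Either argument shows the intersection is infinite.

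Countability is immediate from the definition of \(\mathcal{Z}\): it is the image of the countable set \(\mathbb{Z}^M\) under the map \(y\mapsto\sum_i y^i\Delta^i\), so every subset of \(\mathcal{Z}\) is at most countable. Combining this with the third step gives that \([-\bar\Delta,\bar\Delta]\cap\mathcal{Z}\) is countably infinite.

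The only delicate point is the first step: the specific integer vector \(y\) must be chosen to realize \(nA-mB\), and one must handle the scaling by \(B\), which requires \(B\neq 0\) (guaranteed because the ratio in \eqref{Eq 48} is assumed to be a well-defined real number). Everything else follows directly from \hyperref[Lemma A.2]{Lemma A.2}.
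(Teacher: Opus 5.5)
Your proof is correct and follows the paper's own argument: apply Lemma A.2 to the irrational ratio $A/B$, rescale by $B$ to obtain nonzero elements $nA-mB\in\mathcal{Z}$ of arbitrarily small modulus, and conclude that $[-\bar\Delta,\bar\Delta]\cap\mathcal{Z}$ is infinite. You also spell out details the paper leaves implicit, namely the integer vector $y$ realizing $nA-mB$, why nonzero terms tending to $0$ must include infinitely many distinct values, and why $\mathcal{Z}$ is countable as the image of $\mathbb{Z}^M$.
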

\begin{proof}
  By \hyperref[Lemma A.2]{Lemma A.2}, if \(\Delta^1,\dots,\Delta^M\) satisfy \eqref{Eq 48} for some subset \(\tilde{\mathcal{I}}\) of \(\mathcal{I}\), 
  then for any \(\epsilon>0\), there exist \(n,m\in \mathbb{Z}\) such that 
  \[
    0 < \bigg| n \frac{\sum_{i\in \tilde{\mathcal{I}}} \Delta^i}{\sum_{i \in \mathcal{I}\setminus \tilde{\mathcal{I}}} \Delta^i}  + m \bigg| < \frac{\epsilon}{\sum_{i \in \mathcal{I}\setminus \tilde{\mathcal{I}}} \Delta^i},
  \] or
  \[
    0 < \bigg| n \sum_{i\in \tilde{\mathcal{I}}} \Delta^i+ m \sum_{i \in \mathcal{I}\setminus \tilde{\mathcal{I}}} \Delta^i \bigg| <\epsilon. 
  \] This shows that there is a convergent sequence in \([-\bar \Delta,\bar \Delta]\cap \mathcal{Z}\) with limit \(0\), which holds only when 
  \([-\bar \Delta,\bar \Delta] \cap \mathcal{Z}\) is countably infinite.
\end{proof}

\section{Parameter estimation}{\label{Appendix B}}
In this section, we calibrate the model parameters associated to the spread Markov chain \(D = (D^i)_{i=1}^M\)
and the execution processes \((N^{b,i},N^{a,i})_{i=1}^M\) using the dataset described in Section \ref{Sec 5.1}.

\subsection{Estimation for spread Markov chains}
Given the continuously observed spread data \(D\), its jump times \(T_n,\ n \geq 0\) are given by  
\[
  T_0 \equiv 0, \quad T_{n+1} \coloneqq \inf\{t>T_n:D_t \neq D_{t^-}\},\quad n \geq 0,
\] then we can recover its jump chain \(\hat{D}\) by 
\[
  \hat{D}_n  =   D_{T_n},\quad n \geq 0.
\]  

Our goal is to estimate the jump rates \(q_{jk},\ j,k\in \mathbb{S}^M\) of the spread Markov chain \(D\),
from a path realization with the high frequency spread data in tick time, over a fixed market making period \([0,T]\). 
To do so, we fix \(j,k\in \mathbb{S}^M\), and denote by \(N_{jk}(t)\) the number of visits from state \(j\) to state \(k\) by time \(t\),
by \(N_{j\rightarrow \cdot} (t)\) the number of visits from state \(j\) to other states by time \(t\),
and by \(\mathcal{O}_j(t)\) the occupation time in state \(j\) by time \(t\):
\[
  \begin{dcases}
    N_{jk}(t)\coloneqq \sum_{n=1}^{\infty} \mathbb{I}_{\{T_{n} \leq t\}} \mathbb{I}_{\{\hat{D}_{n-1}=j,\hat{D}_{n}=k\}},\quad t\in [0,T]\\
    N_{j\rightarrow \cdot}(t) \coloneqq \sum_{n=1}^{\infty} \mathbb{I}_{\{T_n \leq t\}} \mathbb{I}_{\{\hat{D}_{n-1}=j\}},\quad t\in [0,T]\\
    \mathcal{O}_j(t) \coloneqq \int_0^t \mathbb{I}_{\{D_s=j\}} ds,\quad t\in [0,T].
  \end{dcases}
\]

We claim that the estimator 
\begin{equation}{\label{Eq 49}}
  \hat{q}_{jk}(t) \coloneqq \frac{N_{jk}(t)}{\mathcal{O}_j(t)} = \frac{N_{jk}(t)}{N_{j\rightarrow\cdot}(t)} \frac{N_{j\rightarrow\cdot}(t)}{\mathcal{O}_j(t)}
\end{equation}
converges \(\mathbb{P}\)-a.s. to the jump rate \(q_{jk}\) as \(t\to \infty\).  

By construction, the jump chain \(\hat{D}\) is homogeneous, irreducible, and takes values in the finite state space \(\mathbb{S}^M\), so it has a unique stationary distribution \(\pi\), 
and hence satisfies the ergodicity property (see \cite[Theorem 1.22]{durrett2016essentials}): 
\[
  \frac{1}{K} \sum_{n=0}^{K-1} f(\hat{D}_{n}) \to \sum\limits_{j\in \mathbb{S}^M} f(j) \pi(j)
\] \(\mathbb{P}\)-a.s. as \(K\to\infty\) for any \(f\in L^1(\pi)\), from which we deduce that with probability one
\begin{equation}{\label{Eq 50}}
  \lim\limits_{t \to \infty} \frac{N_{jk}(t)}{N_{j\rightarrow\cdot}(t)} = \lim\limits_{K \to \infty} \frac{\sum_{n=0}^{K-1} \mathbb{I}_{\{\hat{D}_{n}=j,\hat{D}_{n+1}=k\}}}{\sum_{n=0}^{K-1} \mathbb{I}_{\{\hat{D}_n=j\}}} = p_{jk}.
\end{equation}

To prove the convergence of the second term, in a similar spirit to \cite[Theorem 3.8.1]{norris1998markov}, 
we may assume that the Markov chain \(D\) starts at the state \(j\in \mathbb{S}^M\). 
Under this assumption, the \(n\)-th passage time \(T^j_n\) to state \(j\) and the length \(M^j_n\) of the \(n\)-th visit to state \(j\) are given by 
\[
  \begin{dcases}
    T^j_0 \coloneqq 0,\quad T^j_n \coloneqq \inf\{t>T^j_{n-1}+M^j_n:D_t=j\},\quad n\in \mathbb{N}\\
    M^j_n \coloneqq \inf\{t>T^j_{n-1}:D_t \neq j \} - T^j_{n-1},\quad n\in \mathbb{N}.
  \end{dcases}
\]  
By the strong Markov property at the stopping times \(T^j_n\), we see that \(M^j_1,M^j_2,\cdots\) are independent exponentials with rate \(q_j\),
which by the strong law of large numbers implies that with probability one
\begin{equation}{\label{Eq 51}}
  \lim\limits_{t \to \infty} \frac{N_{j\rightarrow\cdot}(t)}{\mathcal{O}_j(t)} = \lim\limits_{n \to \infty} \frac{N_{j\rightarrow\cdot}(T^j_n)}{\mathcal{O}_j(T^j_n)} 
  = \lim\limits_{n \to \infty} \frac{n}{M^j_1+\dots+M^j_n} = q_j.
\end{equation}

The claim now follows from \eqref{Eq 50} and \eqref{Eq 51}:
\[
  \lim\limits_{t \to \infty} \hat{q}_{jk}(t) = p_{jk} q_j = q_{jk}.
\]

Now we perform the estimation \eqref{Eq 49} on our data. 
Table \ref{Table 2} displays the estimated generator of the spread Markov chain \(D\), 
in which the \(j\)-th row represents the spread value (in ticks) before jump 
while the \(k\)-th column represents the spread value (in ticks) after jump,
with the entry \(\hat{q}_{jk}\) indicating the estimated jump intensity from spread value \(j\) to \(k\).
\begin{sidewaystable*}[p]
  \centering
  \caption{Estimated generator of the spread Markov chain, expressed in ticks and \(\text{second}^{-1}\).}
  \label{Table 2}
  \setlength{\tabcolsep}{1pt}
  \renewcommand{\arraystretch}{0.9}
  \makebox[\textheight][c]{%
    \resizebox{\textheight}{!}{\begin{tabular}{lrrrrrrrrrrrrrrrrrrrrrrrrrrrrrrrrrrrr}
\toprule
{} &  (1, 1) &  (1, 2) &  (1, 3) &  (1, 4) &  (1, 5) &  (1, 6) &  (2, 1) &  (2, 2) &  (2, 3) &  (2, 4) &  (2, 5) &  (2, 6) &  (3, 1) &  (3, 2) &  (3, 3) &  (3, 4) &  (3, 5) &  (3, 6) &  (4, 1) &  (4, 2) &  (4, 3) &  (4, 4) &  (4, 5) &  (4, 6) &  (5, 1) &  (5, 2) &  (5, 3) &  (5, 4) &  (5, 5) &  (5, 6) &  (6, 1) &  (6, 2) &  (6, 3) &  (6, 4) &  (6, 5) &  (6, 6) \\
\midrule
(1, 1) &  -24.11 &    8.18 &    0.03 &    0.00 &    0.00 &    0.00 &   12.96 &    2.63 &    0.00 &    0.00 &    0.00 &    0.00 &    0.20 &    0.03 &    0.00 &    0.00 &    0.00 &    0.00 &    0.03 &    0.00 &    0.05 &    0.03 &    0.00 &    0.00 &    0.00 &    0.00 &    0.00 &    0.00 &    0.00 &    0.00 &    0.00 &    0.00 &    0.00 &    0.00 &    0.00 &    0.00 \\
(1, 2) &    2.54 &  -26.37 &    0.87 &    0.01 &    0.00 &    0.01 &    0.00 &   22.38 &    0.30 &    0.00 &    0.00 &    0.00 &    0.00 &    0.24 &    0.00 &    0.00 &    0.00 &    0.00 &    0.00 &    0.02 &    0.01 &    0.00 &    0.00 &    0.00 &    0.00 &    0.00 &    0.00 &    0.00 &    0.00 &    0.00 &    0.00 &    0.00 &    0.00 &    0.00 &    0.00 &    0.00 \\
(1, 3) &    0.36 &   20.68 &  -61.13 &    1.27 &    0.54 &    0.18 &    0.00 &    0.00 &   34.47 &    0.00 &    0.00 &    0.00 &    0.00 &    0.00 &    2.72 &    0.00 &    0.00 &    0.00 &    0.00 &    0.00 &    0.54 &    0.00 &    0.00 &    0.00 &    0.00 &    0.00 &    0.00 &    0.00 &    0.00 &    0.18 &    0.00 &    0.00 &    0.18 &    0.00 &    0.00 &    0.00 \\
(1, 4) &    0.00 &    8.21 &   20.53 &  -69.79 &    4.11 &    0.00 &    0.00 &    0.00 &    0.00 &   16.42 &    0.00 &    0.00 &    0.00 &    0.00 &    0.00 &    4.11 &    0.00 &    0.00 &    0.00 &    0.00 &    0.00 &   12.32 &    0.00 &    0.00 &    0.00 &    0.00 &    0.00 &    4.11 &    0.00 &    0.00 &    0.00 &    0.00 &    0.00 &    0.00 &    0.00 &    0.00 \\
(1, 5) &    0.00 &    0.00 &   39.53 &   26.35 & -118.58 &   13.18 &    0.00 &    0.00 &    0.00 &    0.00 &   13.18 &    0.00 &    0.00 &    0.00 &    0.00 &    0.00 &   13.18 &    0.00 &    0.00 &    0.00 &    0.00 &    0.00 &   13.18 &    0.00 &    0.00 &    0.00 &    0.00 &    0.00 &    0.00 &    0.00 &    0.00 &    0.00 &    0.00 &    0.00 &    0.00 &    0.00 \\
(1, 6) &    0.00 &    0.00 &    0.00 &    2.50 &    1.25 &   -5.01 &    0.00 &    0.00 &    0.00 &    0.00 &    0.00 &    0.00 &    0.00 &    0.00 &    0.00 &    0.00 &    0.00 &    1.25 &    0.00 &    0.00 &    0.00 &    0.00 &    0.00 &    0.00 &    0.00 &    0.00 &    0.00 &    0.00 &    0.00 &    0.00 &    0.00 &    0.00 &    0.00 &    0.00 &    0.00 &    0.00 \\
(2, 1) &    1.51 &    0.00 &    0.00 &    0.00 &    0.00 &    0.00 &  -18.81 &   14.66 &    0.09 &    0.01 &    0.00 &    0.00 &    2.01 &    0.49 &    0.01 &    0.00 &    0.00 &    0.00 &    0.01 &    0.01 &    0.01 &    0.00 &    0.00 &    0.00 &    0.01 &    0.00 &    0.00 &    0.00 &    0.00 &    0.00 &    0.00 &    0.00 &    0.00 &    0.00 &    0.00 &    0.00 \\
(2, 2) &    0.01 &    0.36 &    0.00 &    0.00 &    0.00 &    0.00 &    0.70 &   -8.61 &    2.63 &    0.01 &    0.00 &    0.00 &    0.00 &    4.66 &    0.19 &    0.00 &    0.00 &    0.00 &    0.00 &    0.02 &    0.00 &    0.00 &    0.00 &    0.00 &    0.00 &    0.00 &    0.00 &    0.00 &    0.00 &    0.00 &    0.00 &    0.00 &    0.00 &    0.00 &    0.00 &    0.00 \\
(2, 3) &    0.00 &    0.01 &    0.08 &    0.00 &    0.00 &    0.00 &    0.01 &    8.63 &  -17.41 &    0.14 &    0.01 &    0.00 &    0.00 &    0.07 &    8.32 &    0.02 &    0.00 &    0.00 &    0.00 &    0.00 &    0.10 &    0.01 &    0.00 &    0.00 &    0.00 &    0.00 &    0.01 &    0.00 &    0.00 &    0.00 &    0.00 &    0.00 &    0.00 &    0.00 &    0.00 &    0.00 \\
(2, 4) &    0.00 &    0.00 &    0.12 &    0.12 &    0.00 &    0.00 &    0.06 &    5.44 &   34.72 &  -72.05 &    1.60 &    0.24 &    0.00 &    0.00 &    0.24 &   22.77 &    0.06 &    0.00 &    0.00 &    0.00 &    0.00 &    5.03 &    0.00 &    0.12 &    0.00 &    0.00 &    0.00 &    1.24 &    0.00 &    0.06 &    0.00 &    0.00 &    0.00 &    0.18 &    0.00 &    0.06 \\
(2, 5) &    0.00 &    0.00 &    0.00 &    0.00 &    1.39 &    0.00 &    0.00 &    6.93 &   16.63 &   23.56 &  -94.22 &    4.16 &    0.00 &    0.00 &    0.00 &    0.69 &   18.71 &    1.39 &    0.00 &    0.00 &    0.00 &    0.00 &    9.70 &    0.00 &    0.00 &    0.00 &    0.00 &    0.00 &    7.62 &    0.00 &    0.00 &    0.00 &    0.00 &    0.00 &    3.46 &    0.00 \\
(2, 6) &    0.00 &    0.00 &    0.00 &    0.00 &    0.00 &    0.00 &    0.00 &    6.84 &   22.80 &   11.40 &    9.12 &  -88.93 &    0.00 &    0.00 &    0.00 &    0.00 &    0.00 &   18.24 &    0.00 &    0.00 &    0.00 &    0.00 &    0.00 &    9.12 &    0.00 &    0.00 &    0.00 &    0.00 &    0.00 &    6.84 &    0.00 &    0.00 &    0.00 &    0.00 &    0.00 &    4.56 \\
(3, 1) &    0.11 &    0.00 &    0.00 &    0.00 &    0.00 &    0.00 &   12.41 &    0.07 &    0.00 &    0.00 &    0.00 &    0.00 &  -34.99 &   21.30 &    0.40 &    0.01 &    0.00 &    0.00 &    0.48 &    0.11 &    0.01 &    0.00 &    0.00 &    0.00 &    0.04 &    0.00 &    0.01 &    0.00 &    0.00 &    0.00 &    0.01 &    0.01 &    0.00 &    0.00 &    0.00 &    0.00 \\
(3, 2) &    0.00 &    0.01 &    0.00 &    0.00 &    0.00 &    0.00 &    0.01 &    6.66 &    0.03 &    0.00 &    0.00 &    0.00 &    0.29 &  -12.48 &    4.89 &    0.03 &    0.01 &    0.00 &    0.00 &    0.46 &    0.06 &    0.00 &    0.00 &    0.00 &    0.00 &    0.02 &    0.00 &    0.00 &    0.00 &    0.00 &    0.00 &    0.00 &    0.00 &    0.00 &    0.00 &    0.00 \\
(3, 3) &    0.00 &    0.00 &    0.01 &    0.00 &    0.00 &    0.00 &    0.00 &    0.27 &    4.42 &    0.00 &    0.00 &    0.00 &    0.01 &    6.18 &  -13.65 &    0.83 &    0.03 &    0.02 &    0.00 &    0.00 &    1.69 &    0.09 &    0.00 &    0.00 &    0.00 &    0.00 &    0.06 &    0.01 &    0.01 &    0.00 &    0.00 &    0.00 &    0.02 &    0.00 &    0.00 &    0.00 \\
(3, 4) &    0.00 &    0.00 &    0.00 &    0.01 &    0.00 &    0.00 &    0.00 &    0.02 &    0.28 &    2.51 &    0.00 &    0.00 &    0.00 &    1.33 &   16.57 &  -31.22 &    0.86 &    0.25 &    0.00 &    0.00 &    0.09 &    8.06 &    0.08 &    0.01 &    0.00 &    0.00 &    0.02 &    0.71 &    0.04 &    0.01 &    0.00 &    0.00 &    0.00 &    0.29 &    0.02 &    0.04 \\
(3, 5) &    0.00 &    0.00 &    0.00 &    0.00 &    0.05 &    0.00 &    0.00 &    0.00 &    0.00 &    0.05 &    1.26 &    0.00 &    0.11 &    1.15 &    8.87 &   11.01 &  -38.77 &    3.01 &    0.00 &    0.00 &    0.05 &    0.16 &    6.79 &    0.11 &    0.00 &    0.00 &    0.05 &    0.05 &    3.40 &    0.05 &    0.00 &    0.00 &    0.05 &    0.05 &    2.35 &    0.11 \\
(3, 6) &    0.00 &    0.00 &    0.00 &    0.00 &    0.08 &    0.08 &    0.00 &    0.00 &    0.00 &    0.00 &    0.17 &    0.84 &    0.00 &    0.93 &    5.24 &    3.63 &    5.66 &  -27.80 &    0.00 &    0.00 &    0.25 &    0.25 &    0.08 &    3.80 &    0.00 &    0.00 &    0.17 &    0.25 &    0.25 &    1.77 &    0.00 &    0.00 &    0.00 &    0.17 &    0.93 &    3.21 \\
(4, 1) &    0.61 &    0.00 &    0.00 &    0.00 &    0.00 &    0.00 &    4.25 &    0.00 &    0.00 &    0.00 &    0.00 &    0.00 &   27.95 &    0.00 &    0.00 &    0.61 &    0.00 &    0.00 &  -70.47 &   25.52 &    5.47 &    3.04 &    0.00 &    0.61 &    1.82 &    0.00 &    0.00 &    0.00 &    0.00 &    0.00 &    0.61 &    0.00 &    0.00 &    0.00 &    0.00 &    0.00 \\
(4, 2) &    0.00 &    0.00 &    0.00 &    0.00 &    0.00 &    0.00 &    0.01 &    2.06 &    0.00 &    0.00 &    0.00 &    0.00 &    0.02 &   18.37 &    0.04 &    0.01 &    0.00 &    0.00 &    0.33 &  -34.61 &   11.34 &    0.61 &    0.08 &    0.02 &    0.00 &    1.34 &    0.12 &    0.01 &    0.00 &    0.00 &    0.00 &    0.24 &    0.02 &    0.01 &    0.01 &    0.00 \\
(4, 3) &    0.00 &    0.00 &    0.00 &    0.00 &    0.00 &    0.00 &    0.00 &    0.02 &    0.87 &    0.00 &    0.00 &    0.00 &    0.00 &    0.24 &   10.78 &    0.02 &    0.00 &    0.00 &    0.02 &    4.89 &  -24.04 &    4.58 &    0.27 &    0.13 &    0.00 &    0.00 &    1.65 &    0.15 &    0.02 &    0.01 &    0.00 &    0.00 &    0.32 &    0.01 &    0.02 &    0.02 \\
(4, 4) &    0.00 &    0.00 &    0.00 &    0.00 &    0.00 &    0.00 &    0.00 &    0.02 &    0.03 &    0.57 &    0.00 &    0.00 &    0.00 &    0.04 &    0.80 &    5.54 &    0.01 &    0.01 &    0.00 &    0.73 &    9.40 &  -24.82 &    2.03 &    0.62 &    0.00 &    0.00 &    0.03 &    3.87 &    0.29 &    0.02 &    0.00 &    0.00 &    0.00 &    0.61 &    0.05 &    0.15 \\
(4, 5) &    0.00 &    0.00 &    0.00 &    0.00 &    0.00 &    0.00 &    0.00 &    0.00 &    0.04 &    0.00 &    0.42 &    0.00 &    0.00 &    0.04 &    0.31 &    0.37 &    3.19 &    0.00 &    0.00 &    0.33 &    2.70 &   13.55 &  -36.57 &    2.95 &    0.00 &    0.00 &    0.07 &    0.33 &    8.64 &    0.15 &    0.00 &    0.00 &    0.04 &    0.07 &    3.17 &    0.20 \\
(4, 6) &    0.00 &    0.00 &    0.00 &    0.00 &    0.00 &    0.00 &    0.00 &    0.00 &    0.09 &    0.00 &    0.00 &    0.09 &    0.00 &    0.00 &    0.27 &    0.13 &    0.04 &    1.92 &    0.00 &    0.45 &    1.97 &    7.07 &    8.05 &  -31.27 &    0.00 &    0.00 &    0.00 &    0.13 &    0.09 &    3.58 &    0.00 &    0.00 &    0.00 &    0.31 &    0.72 &    6.35 \\
(5, 1) &    0.00 &    0.00 &    0.00 &    0.00 &    0.00 &    0.00 &    1.46 &    0.00 &    0.00 &    0.00 &    0.00 &    0.00 &    2.92 &    0.00 &    0.00 &    0.00 &    0.00 &    0.00 &    1.46 &    0.00 &    0.00 &    0.00 &    0.00 &    0.00 &  -27.70 &   16.03 &    1.46 &    2.92 &    0.00 &    0.00 &    1.46 &    0.00 &    0.00 &    0.00 &    0.00 &    0.00 \\
(5, 2) &    0.00 &    0.00 &    0.00 &    0.00 &    0.00 &    0.00 &    0.00 &    0.14 &    0.00 &    0.00 &    0.00 &    0.00 &    0.00 &    0.49 &    0.01 &    0.00 &    0.00 &    0.00 &    0.00 &    1.59 &    0.00 &    0.00 &    0.00 &    0.00 &    0.03 &  -10.69 &    7.37 &    0.22 &    0.09 &    0.01 &    0.00 &    0.72 &    0.03 &    0.01 &    0.00 &    0.00 \\
(5, 3) &    0.00 &    0.00 &    0.00 &    0.00 &    0.00 &    0.00 &    0.00 &    0.00 &    0.08 &    0.00 &    0.00 &    0.00 &    0.00 &    0.01 &    0.75 &    0.00 &    0.01 &    0.00 &    0.00 &    0.02 &    2.81 &    0.02 &    0.01 &    0.00 &    0.01 &    3.67 &  -12.08 &    2.22 &    0.30 &    0.10 &    0.00 &    0.00 &    1.98 &    0.03 &    0.02 &    0.01 \\
(5, 4) &    0.00 &    0.00 &    0.00 &    0.00 &    0.00 &    0.00 &    0.00 &    0.01 &    0.00 &    0.20 &    0.00 &    0.00 &    0.00 &    0.02 &    0.17 &    1.54 &    0.00 &    0.00 &    0.00 &    0.01 &    0.51 &    8.74 &    0.03 &    0.03 &    0.00 &    0.57 &    7.15 &  -29.81 &    5.45 &    1.28 &    0.00 &    0.01 &    0.00 &    3.60 &    0.38 &    0.13 \\
(5, 5) &    0.00 &    0.00 &    0.00 &    0.00 &    0.00 &    0.00 &    0.00 &    0.02 &    0.01 &    0.00 &    0.12 &    0.00 &    0.01 &    0.02 &    0.30 &    0.03 &    1.08 &    0.01 &    0.00 &    0.01 &    0.12 &    0.81 &    3.94 &    0.02 &    0.00 &    0.20 &    1.51 &    6.88 &  -23.49 &    3.06 &    0.00 &    0.00 &    0.07 &    0.07 &    4.44 &    0.72 \\
(5, 6) &    0.00 &    0.00 &    0.00 &    0.00 &    0.00 &    0.00 &    0.00 &    0.00 &    0.03 &    0.00 &    0.00 &    0.05 &    0.00 &    0.00 &    0.14 &    0.05 &    0.05 &    0.71 &    0.00 &    0.00 &    0.05 &    0.14 &    0.38 &    2.78 &    0.00 &    0.14 &    0.87 &    3.68 &    7.36 &  -21.97 &    0.00 &    0.00 &    0.03 &    0.16 &    0.30 &    5.04 \\
(6, 1) &    0.00 &    0.00 &    0.00 &    0.00 &    0.00 &    0.00 &   11.86 &    0.00 &    0.00 &    0.00 &    0.00 &    0.00 &    5.93 &    0.00 &    0.00 &    0.00 &    0.00 &    0.00 &    0.00 &    0.00 &    0.00 &    0.00 &    0.00 &    0.00 &    5.93 &    0.00 &    0.00 &    0.00 &    0.00 &    0.00 &  -82.99 &   47.43 &    0.00 &    5.93 &    5.93 &    0.00 \\
(6, 2) &    0.00 &    0.01 &    0.00 &    0.00 &    0.00 &    0.00 &    0.00 &    0.05 &    0.00 &    0.00 &    0.00 &    0.00 &    0.00 &    0.17 &    0.00 &    0.00 &    0.00 &    0.00 &    0.00 &    0.34 &    0.00 &    0.00 &    0.00 &    0.00 &    0.00 &    1.42 &    0.00 &    0.00 &    0.00 &    0.00 &    0.05 &  -11.67 &    9.27 &    0.23 &    0.05 &    0.09 \\
(6, 3) &    0.00 &    0.00 &    0.00 &    0.00 &    0.00 &    0.00 &    0.00 &    0.00 &    0.02 &    0.00 &    0.00 &    0.00 &    0.00 &    0.00 &    0.19 &    0.00 &    0.00 &    0.00 &    0.00 &    0.01 &    0.57 &    0.00 &    0.00 &    0.00 &    0.00 &    0.00 &    1.84 &    0.00 &    0.01 &    0.00 &    0.00 &    3.87 &   -8.89 &    1.92 &    0.24 &    0.20 \\
(6, 4) &    0.00 &    0.00 &    0.00 &    0.00 &    0.00 &    0.00 &    0.00 &    0.00 &    0.00 &    0.14 &    0.00 &    0.00 &    0.00 &    0.01 &    0.06 &    0.53 &    0.00 &    0.00 &    0.00 &    0.02 &    0.10 &    2.48 &    0.04 &    0.02 &    0.00 &    0.00 &    0.14 &    4.04 &    0.06 &    0.05 &    0.00 &    0.85 &    8.89 &  -24.08 &    4.56 &    2.08 \\
(6, 5) &    0.00 &    0.00 &    0.00 &    0.00 &    0.00 &    0.00 &    0.00 &    0.00 &    0.00 &    0.00 &    0.07 &    0.00 &    0.00 &    0.05 &    0.20 &    0.02 &    0.37 &    0.00 &    0.00 &    0.01 &    0.23 &    0.24 &    1.78 &    0.01 &    0.00 &    0.00 &    0.01 &    0.46 &    4.01 &    0.05 &    0.02 &    0.24 &    1.62 &    6.27 &  -21.34 &    5.66 \\
(6, 6) &    0.00 &    0.00 &    0.00 &    0.00 &    0.00 &    0.00 &    0.00 &    0.01 &    0.01 &    0.00 &    0.00 &    0.07 &    0.00 &    0.01 &    0.08 &    0.03 &    0.00 &    0.33 &    0.00 &    0.00 &    0.04 &    0.12 &    0.04 &    1.08 &    0.00 &    0.00 &    0.00 &    0.13 &    0.54 &    1.76 &    0.00 &    0.13 &    0.72 &    2.63 &    4.35 &  -12.06 \\
\bottomrule
\end{tabular}
}%
  }
\end{sidewaystable*}

Moreover, the estimated transition probabilities \eqref{Eq 50} reveal the correlation in spreads.
To see this, we sort the pairs of spreads by their occupation time,
and pick the six most frequently visited pairs for inspection.

For each selected pair \((d^1,d^2)\) (in ticks), 
Figure \ref{Fig 5} plots its heatmap of transition probabilities to neighboring states, 
where \((\Delta d^1, \Delta d^2)\) denotes its increment (in ticks) after transition. 
These plots give an overview of the joint spread evolution in high frequency. 
Given a transition, we observe that, either the smaller spread tends to widen or 
the larger spread tends to reduce, so as to move towards the other one.
In other words, the transitions quickly ``pull'' the spreads back towards each other so that the joint spread process is mean reverting at the microscopic level. 

\begin{figure}[h]
  \centering
  \begin{subfigure}{0.4\textwidth}
      \includegraphics[width=\linewidth]{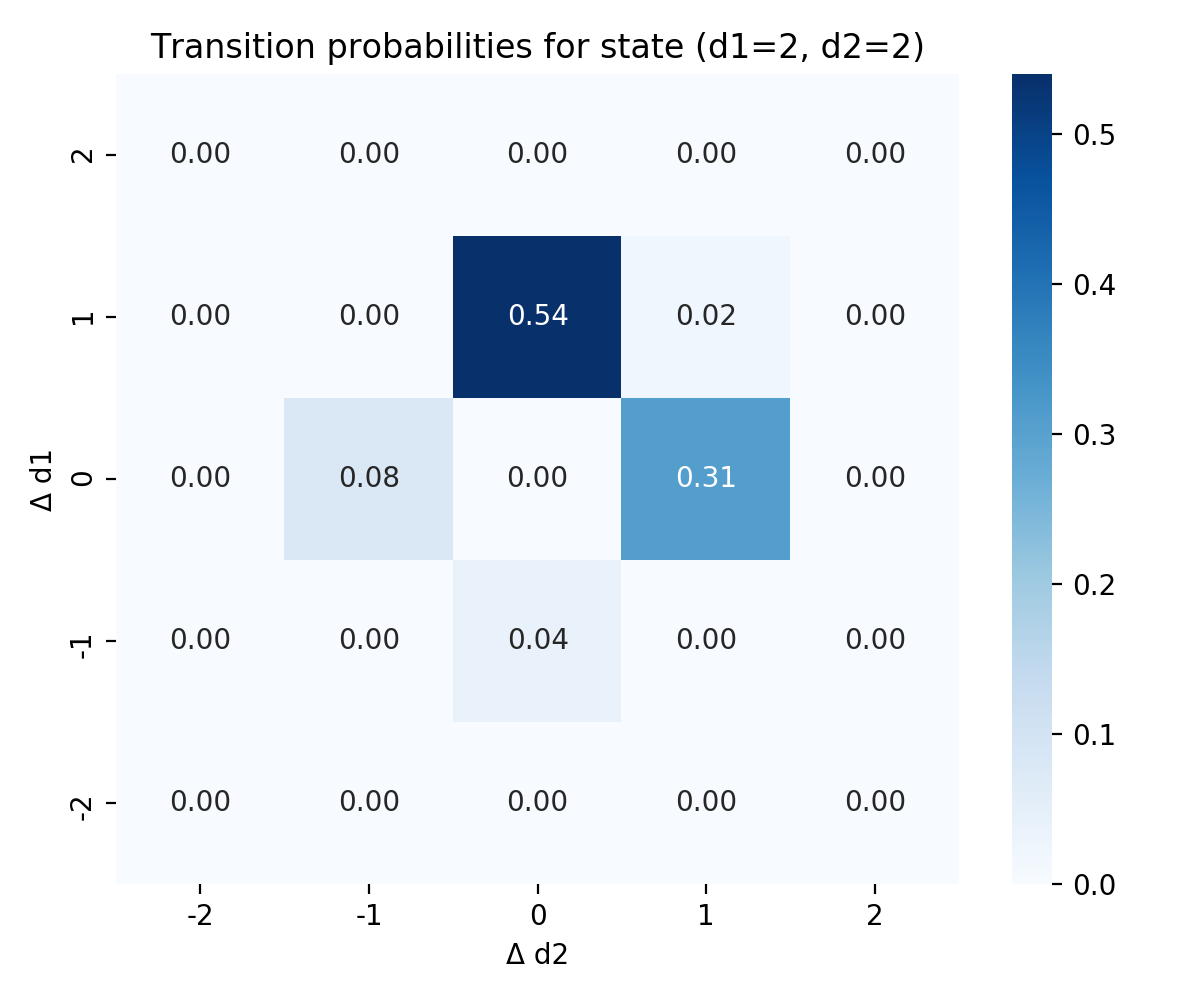}
  \end{subfigure}
  \hspace{0.1\textwidth}
  \begin{subfigure}{0.4\textwidth}
      \includegraphics[width=\linewidth]{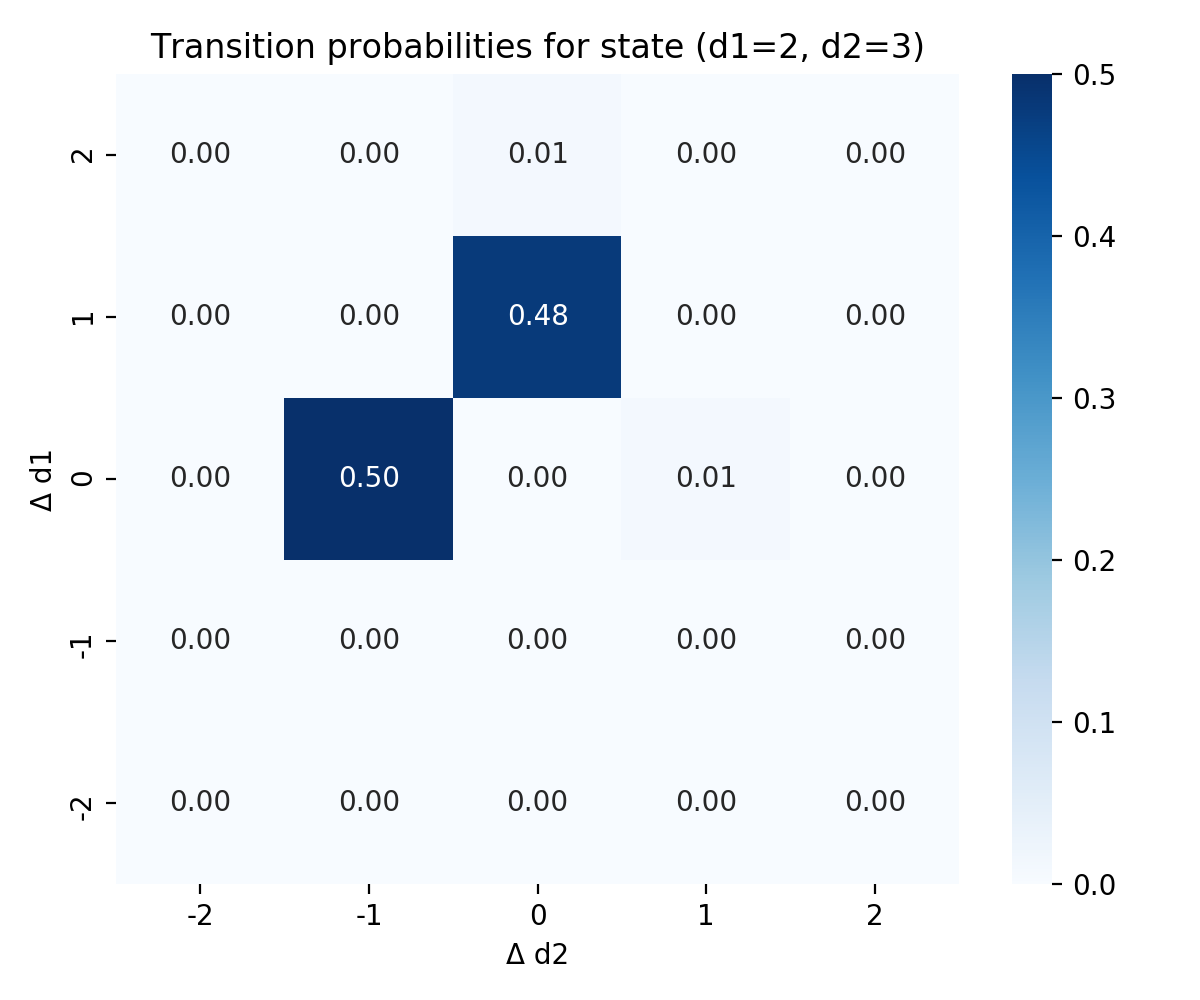}
  \end{subfigure}
  
  \begin{subfigure}{0.4\textwidth}
      \includegraphics[width=\linewidth]{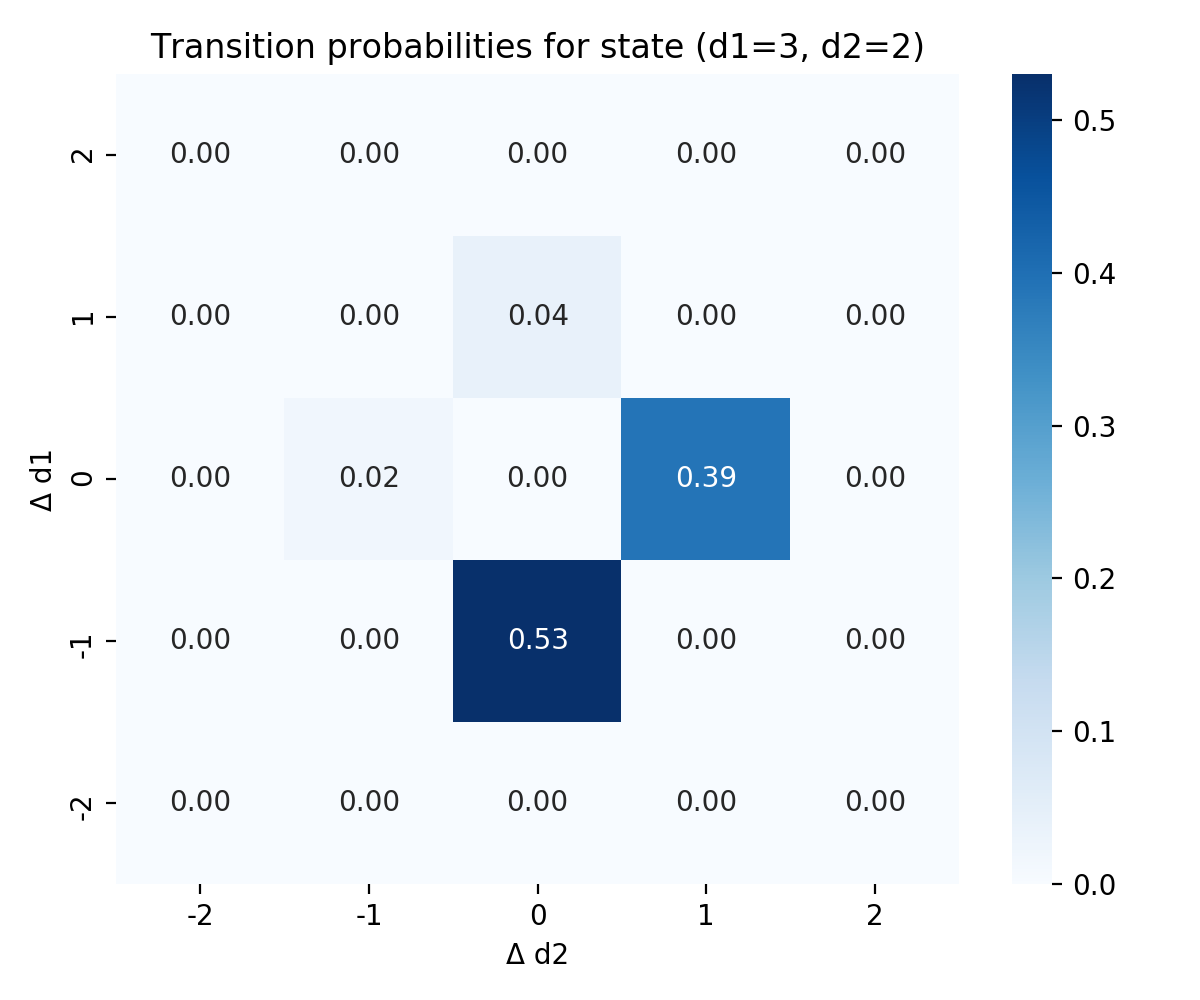}
  \end{subfigure}
  \hspace{0.1\textwidth}
  \begin{subfigure}{0.4\textwidth}
      \includegraphics[width=\linewidth]{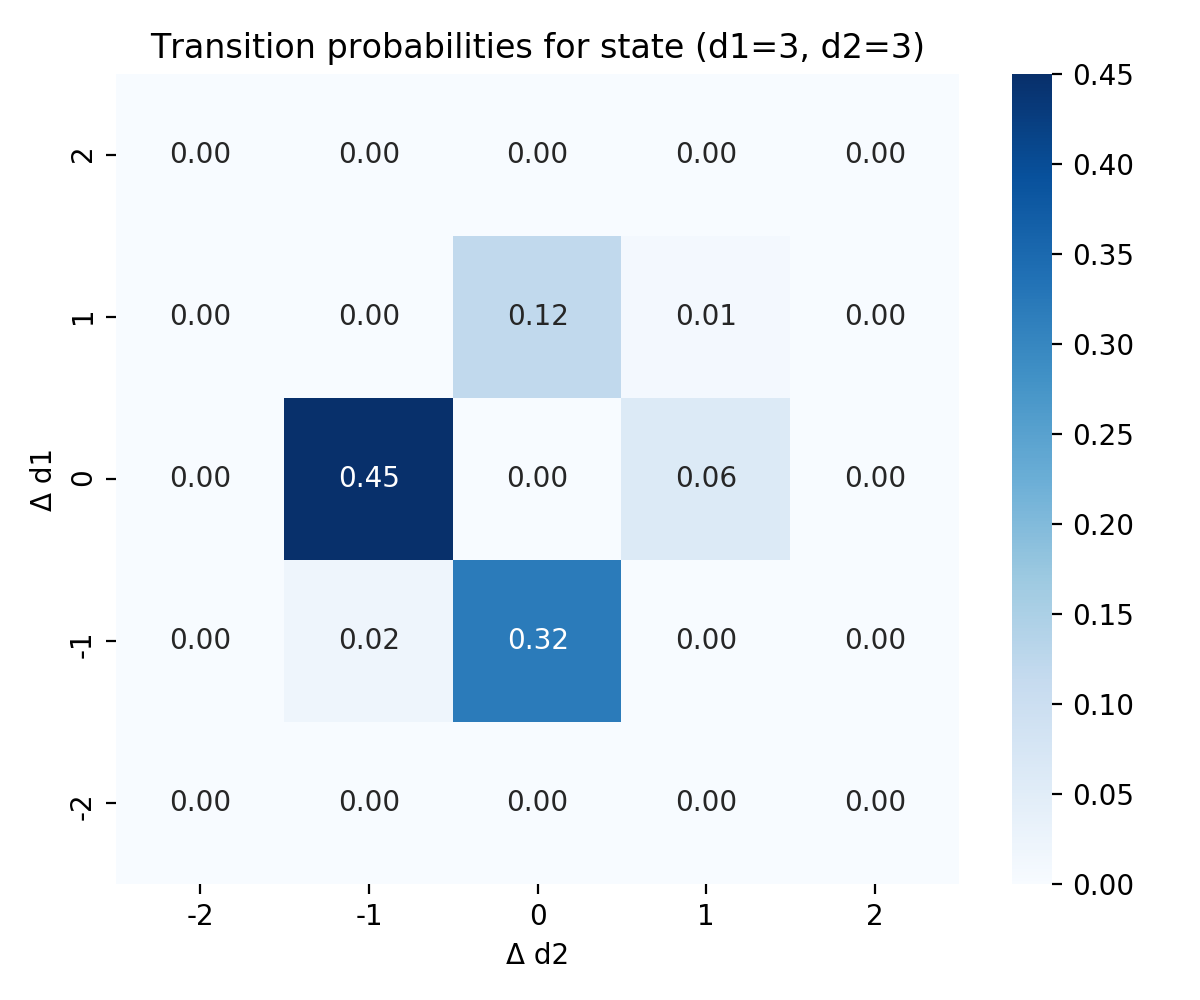}
  \end{subfigure}
  
  \begin{subfigure}{0.4\textwidth}
      \includegraphics[width=\linewidth]{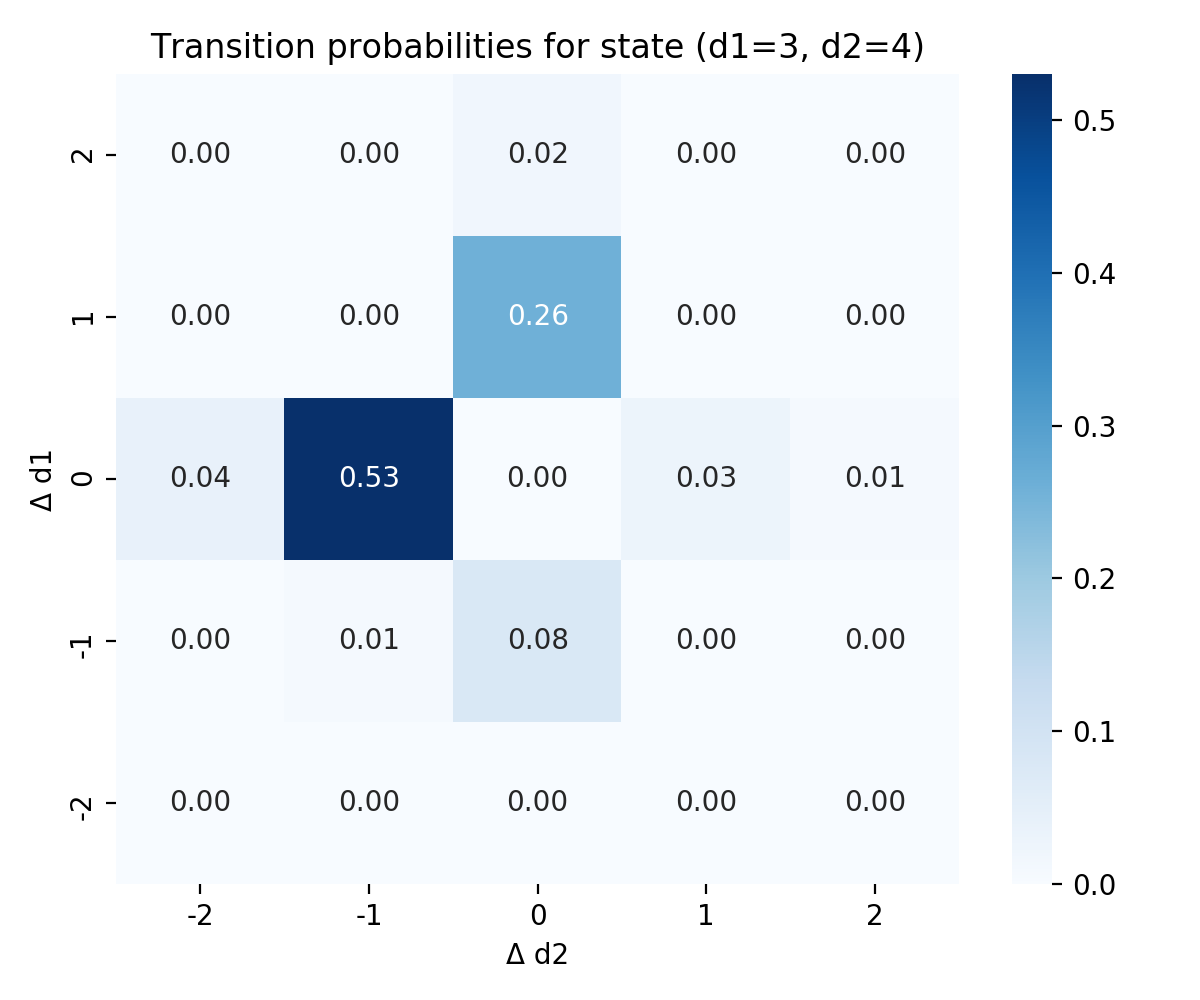}
  \end{subfigure}
  \hspace{0.1\textwidth}
  \begin{subfigure}{0.4\textwidth}
      \includegraphics[width=\linewidth]{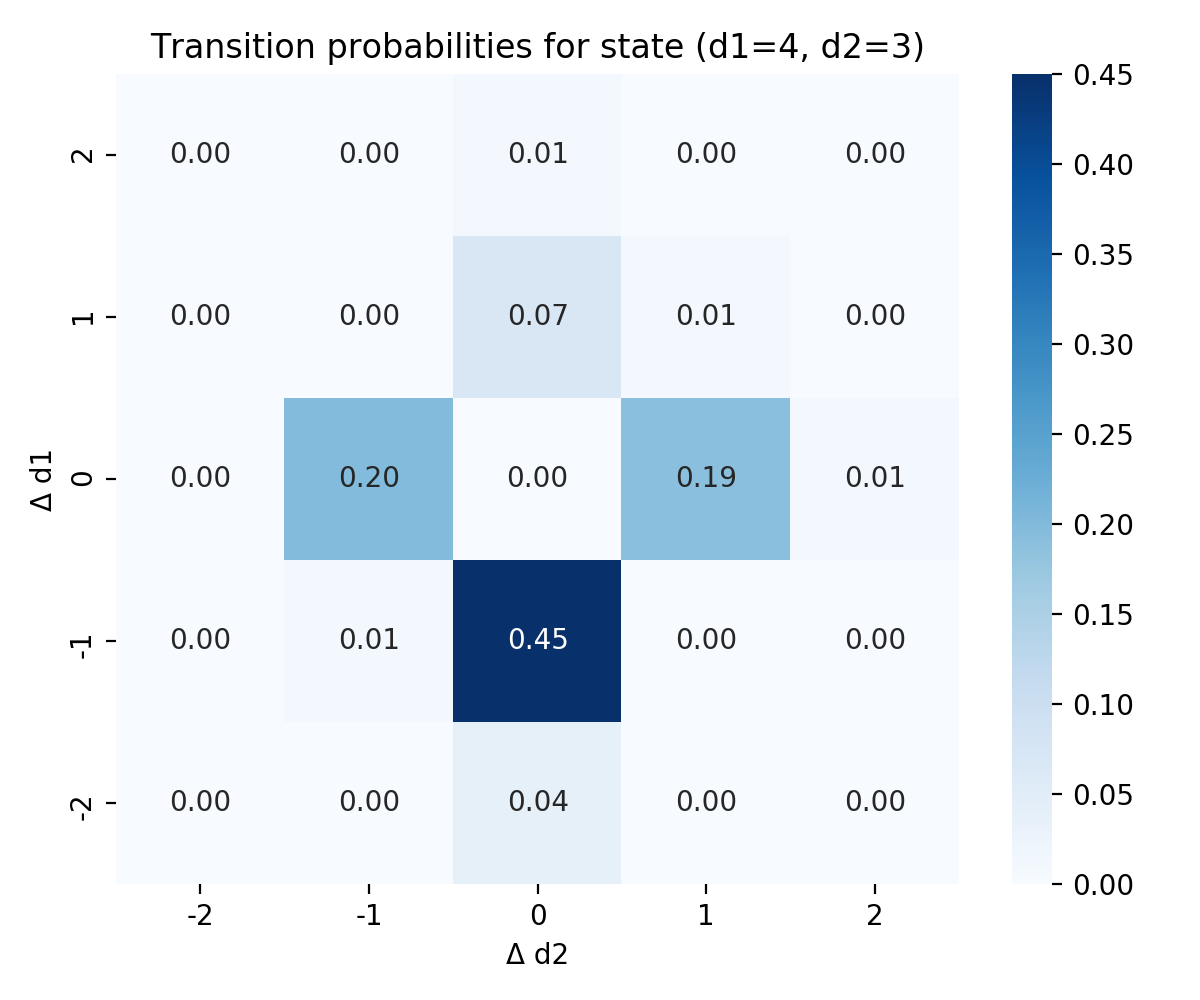}
  \end{subfigure}
  
  \caption{Plots of the estimated transition probabilities,
  expressed in ticks and \(\text{second}^{-1}\).}
  \label{Fig 5}
\end{figure}

\subsection{Estimation for limit order execution processes}
To estimate the intensities of the limit order execution processes, we follow the arguments in \cite{guilbaud2013optimal}.
Ideally, if the market maker can continuously observe her execution point processes \((N^{b,i},N^{a,i})_{i=1}^M\) while applying a make strategy \((\alpha^{b,i},\alpha^{a,i})_{i=1}^M\),
then for each call \(C^i\) the intensity function \(\lambda^{b,i}_{d^i}(\alpha^{b,i})\) (resp. \(\lambda^{a,i}_{d^i}(\alpha^{a,i})\)) 
can be estimated by first counting the number of executions at bid (resp.\ ask) only when the system is in the state \((\alpha^{b,i},d^i)\) (resp. \((\alpha^{a,i},d^i)\)) 
and then normalizing this quantity by the time the system spends in the state \((\alpha^{b,i},d^i)\) (resp. \((\alpha^{a,i},d^i)\)).
Mathematically, on the bid side, for each call \(C^i\), given \(\alpha^{b,i} \in \{0,1\}, d^i \in \mathbb{S}\),
\begin{equation}{\label{Eq 52}}
  N_t^{b,i,\alpha^{b,i},d^i} \coloneqq  \int_0^t \mathbb{I}_{\{\alpha^{b,i}_u=\alpha^{b,i},D^i_{u^-}=d^i\}} dN_u^{b,i},\quad t \geq 0 
\end{equation}
counts the number of executions at bid when the system is in the state \((\alpha^{b,i},d^i)\) over \([0,t]\), and 
\begin{equation}{\label{Eq 53}}
  \mathcal{T}_t^{b,i,\alpha^{b,i},d^i} \coloneqq \int_0^t \mathbb{I}_{\{\alpha^{b,i}_u=\alpha^{b,i},D^i_{u^-}=d^i\}} du,\quad t \geq 0
\end{equation}
represents the time that the system spends in the state \((\alpha^{b,i},d^i)\) over \([0,t]\). Then
\[
  \hat{\lambda}^{b,i}_{d^i}(\alpha^{b,i}) \coloneqq \frac{N_T^{b,i,\alpha^{b,i},d^i}}{\mathcal{T}_T^{b,i,\alpha^{b,i},d^i}},\quad \alpha^{b,i} \in \{0,1\},\quad d^i\in \mathbb{S}
\] estimates the intensity function \(\lambda^{b,i}_{d^i}(\alpha^{b,i})\).
Similarly on the ask side, for each call \(C^i\), 
\[
  \hat{\lambda}^{a,i}_{d^i}(\alpha^{a,i}) \coloneqq \frac{N_T^{a,i,\alpha^{a,i},d^i}}{\mathcal{T}_T^{a,i,\alpha^{a,i},d^i}},\quad \alpha^{a,i} \in \{0,1\},\quad d^i\in \mathbb{S}
\] estimates the intensity function \(\lambda^{a,i}_{d^i}(\alpha^{a,i})\), where 
\(N_T^{a,i,\alpha^{a,i},d^i}\) and \(\mathcal{T}_T^{a,i,\alpha^{a,i},d^i}\) are defined analogously by replacing \((b,\alpha^{b,i},N^{b,i})\) with \((a,\alpha^{a,i},N^{a,i})\) 
in \eqref{Eq 52} and \eqref{Eq 53}. 

Unfortunately, since no one precisely performs the proposed make strategy on the real-world order book, 
we cannot directly observe the actual execution processes \((N^{b,i},N^{a,i})_{i=1}^M\).
Instead, for any \(i\in \mathcal{I}, (\alpha^{b,i},\alpha^{a,i})\in \{0,1\} \times \{0,1\},d^i\in \mathbb{S}\), 
we shall consider reasonable proxies \(\tilde{N}^{b,i,\alpha^{b,i},d^i}\) and \(\tilde{N}^{a,i,\alpha^{a,i},d^i}\) 
for \(N^{b,i,\alpha^{b,i},d^i}\) and \(N^{a,i,\alpha^{a,i},d^i}\),  
which yield proxy estimates \(\tilde{\lambda}^{b,i}_{d^i}(\alpha^{b,i})\) and \(\tilde{\lambda}^{a,i}_{d^i}(\alpha^{a,i})\). 

Suppose that in addition to the spread Markov chain \((D_{T_n})_n\), we can also observe, at each jump time \(T_n,\ n \geq 1\), 
the volumes \((V^{b,i}_{T_n},V^{a,i}_{T_n})_{i=1}^M\) offered at the best available prices,
as well as the total volumes \((V^{\mathrm{buy},i}_{T_{n-1},T_n},V^{\mathrm{sell},i}_{T_{n-1},T_n})_{i=1}^M\) 
of market orders that arrive at the best available prices between two consecutive jump times \(T_{n-1}\) and \(T_n\). 
As the market maker buys and sells one unit of call per trade, with a virtual reference volume \(V_0=1\) on the order book, 
we can then define for each \(i\in \mathcal{I}, d^i\in \mathbb{S}\) the proxies \(\tilde{N}^{b,i,\alpha^{b,i},d^i}\) and \(\tilde{N}^{a,i,\alpha^{a,i},d^i}\) 
at jump times \(T_n\) by 
\[
  \begin{dcases}
    \tilde{N}^{b,i,0,d^i}_0 \coloneqq 0,\quad \tilde{N}^{b,i,0,d^i}_{T_{n+1}} \coloneqq \tilde{N}^{b,i,0,d^i}_{T_n} + \mathbb{I}_{\{V_0 + V^{b,i}_{T_n} < V^{\mathrm{sell},i}_{T_n,T_{n+1}},D^i_{T_n}=d^i\}},\quad n \geq 0\\ 
    \tilde{N}^{b,i,1,d^i}_0 \coloneqq 0,\quad \tilde{N}^{b,i,1,d^i}_{T_{n+1}} \coloneqq \tilde{N}^{b,i,1,d^i}_{T_n} + \mathbb{I}_{\{V_0< V^{\mathrm{sell},i}_{T_n,T_{n+1}},D^i_{T_n}=d^i\}},\quad n \geq 0\\
    \tilde{N}^{a,i,0,d^i}_0 \coloneqq 0,\quad \tilde{N}^{a,i,0,d^i}_{T_{n+1}} \coloneqq \tilde{N}^{a,i,0,d^i}_{T_n} + \mathbb{I}_{\{V_0 + V^{a,i}_{T_n} < V^{\mathrm{buy},i}_{T_n,T_{n+1}},D^i_{T_n}=d^i\}},\quad n \geq 0\\
    \tilde{N}^{a,i,1,d^i}_0 \coloneqq 0,\quad \tilde{N}^{a,i,1,d^i}_{T_{n+1}} \coloneqq \tilde{N}^{a,i,1,d^i}_{T_n} + \mathbb{I}_{\{V_0 < V^{\mathrm{buy},i}_{T_n,T_{n+1}},D^i_{T_n}=d^i\}},\quad n \geq 0\\
  \end{dcases}
\]
along with the proxy \(\tilde{\mathcal{T}}^{i,d^i}\) for the time that the spread \(D^i\) spends in the state \(d^i\), inductively defined by
\[
  \tilde{\mathcal{T}}^{i,d^i}_0 \coloneqq 0, \quad \tilde{\mathcal{T}}^{i,d^i}_{T_{n+1}} \coloneqq \tilde{\mathcal{T}}^{i,d^i}_{T_n} + (T_{n+1}-T_n) \mathbb{I}_{\{D^i_{T_n}=d^i\}},\ n \geq 0. 
\]  

The interpretation of these proxies is rooted in the assumed \emph{price-time microstructure} of the LOB.
We consider a special case where the \emph{small} market maker instantaneously updates her make strategy \((\alpha^{b,i},\alpha^{a,i})_{i=1}^M\) 
only when the spreads change \emph{exogenously}, i.e. at the times \((T_n)_n\), so that the spreads (excluding her own quotes) remain constant between updates.        
If the market maker adopts the aggressive make strategy \((\alpha^{b,i}_{T_n},\alpha^{a,i}_{T_n})=(1,1)\),
she improves the best available prices, thereby receiving top execution priority and capturing all incoming market order flow.
We thus increment \(\tilde{N}^{b,i,1,d^i}\) (resp. \(\tilde{N}^{a,i,1,d^i}\)) at \(T_n\) only when the total volume \(V^{\mathrm{sell},i}_{T_{n-1},T_n}\)
(resp. \(V^{\mathrm{buy},i}_{T_{n-1},T_n}\)) of counterpart market orders covers the order size \(V_0\).
In contrast, if the market maker adopts the conservative make strategy \((\alpha^{b,i}_{T_n},\alpha^{a,i}_{T_n})=(0,0)\),
she only adds liquidity to the best available prices, so her limit orders are ranked behind the existing ones 
and will be executed only after all previously posted orders have been filled. 
We thus increment \(\tilde{N}^{b,i,0,d^i}\) (resp. \(\tilde{N}^{a,i,0,d^i}\)) at \(T_n\) 
only when the remaining volume \(V^{\mathrm{sell},i}_{T_{n-1},T_n}-V^{b,i}_{T_{n-1}}\) 
(resp. \(V^{\mathrm{buy},i}_{T_{n-1},T_n}-V^{a,i}_{T_{n-1}}\)) of counterpart market orders is sufficient to cover the order size \(V_0\).

Finally, with the proxies \(\tilde{N}^{b,i,\alpha^{b,i},d^i},\tilde{N}^{a,i,\alpha^{a,i},d^i}\) and \(\tilde{\mathcal{T}}^{i,d^i}\) on hand, 
we may define (for \(n\) large enough) the proxy estimates
\begin{equation}{\label{Eq 54}}
  \tilde{\lambda}^{b,i}_{d^i}(\alpha^{b,i}) \coloneqq \frac{\tilde{N}^{b,i,\alpha^{b,i},d^i}_{T_{n}}}{\tilde{\mathcal{T}}^{i,d^i}_{T_n}},\quad 
  \tilde{\lambda}^{a,i}_{d^i}(\alpha^{a,i}) \coloneqq \frac{\tilde{N}^{a,i,\alpha^{a,i},d^i}_{T_{n}}}{\tilde{\mathcal{T}}^{i,d^i}_{T_n}}
\end{equation}
for the execution intensities \(\lambda^{b,i}_{d^i}(\alpha^{b,i})\) and \(\lambda^{a,i}_{d^i}(\alpha^{a,i})\). 
Note that the proxy estimates satisfy the assumed property \eqref{Eq 3} of the execution intensities: 
\[
  \tilde{\lambda}^{b,i}_{d^i}(0) \leq \tilde{\lambda}^{b,i}_{d^i}(1), \quad \tilde{\lambda}^{a,i}_{d^i}(0) \leq \tilde{\lambda}^{a,i}_{d^i}(1), \quad i\in \mathcal{I},\quad d^i \in \mathbb{S}.
\]

Now we perform the estimation \eqref{Eq 54} on our data. 
Table \ref{Table 3} displays the estimated intensities of the execution processes \((N^{b,i},N^{a,i})_{i=1}^M\) 
under different spread values and strategies; ``aggressive'' means improving the best price by one tick and ``conservative'' means quoting at the touch. 
Figure \ref{Fig 6} plots the affine interpolation of the estimated execution intensities as a function of the spread.

\begin{table}[ht]
  \centering
  \resizebox{\textwidth}{!}{\begin{tabular}{lcccccccc}
\toprule
{} & \multicolumn{4}{c}{strike 4225} & \multicolumn{4}{c}{strike 4230} \\
\textbf{Side} & \multicolumn{2}{c}{ask} & \multicolumn{2}{c}{bid} & \multicolumn{2}{c}{ask} & \multicolumn{2}{c}{bid} \\
\textbf{Strategy} &  aggressive & conservative & aggressive & conservative &  aggressive & conservative & aggressive & conservative \\
\textbf{Spread (USD)} &             &              &            &              &             &              &            &              \\
\midrule
\textbf{0.01        } &      0.1821 &       0.0000 &     0.1252 &       0.0114 &      0.1484 &       0.0000 &     0.0691 &       0.0041 \\
\textbf{0.02        } &      0.0075 &       0.0006 &     0.0087 &       0.0011 &      0.0100 &       0.0008 &     0.0074 &       0.0001 \\
\textbf{0.03        } &      0.0019 &       0.0000 &     0.0014 &       0.0000 &      0.0016 &       0.0000 &     0.0005 &       0.0004 \\
\textbf{0.04        } &      0.0000 &       0.0000 &     0.0000 &       0.0000 &      0.0014 &       0.0000 &     0.0000 &       0.0000 \\
\textbf{0.05        } &      0.0000 &       0.0000 &     0.0000 &       0.0000 &      0.0000 &       0.0000 &     0.0000 &       0.0000 \\
\textbf{0.06        } &      0.0000 &       0.0000 &     0.0000 &       0.0000 &      0.0000 &       0.0000 &     0.0000 &       0.0000 \\
\bottomrule
\end{tabular}
}
  \caption{Estimated limit order execution intensities, expressed in second\(^{-1}\).}
  \label{Table 3}
\end{table}

\begin{figure}[H]
  \centering
  \includegraphics[width=\textwidth]{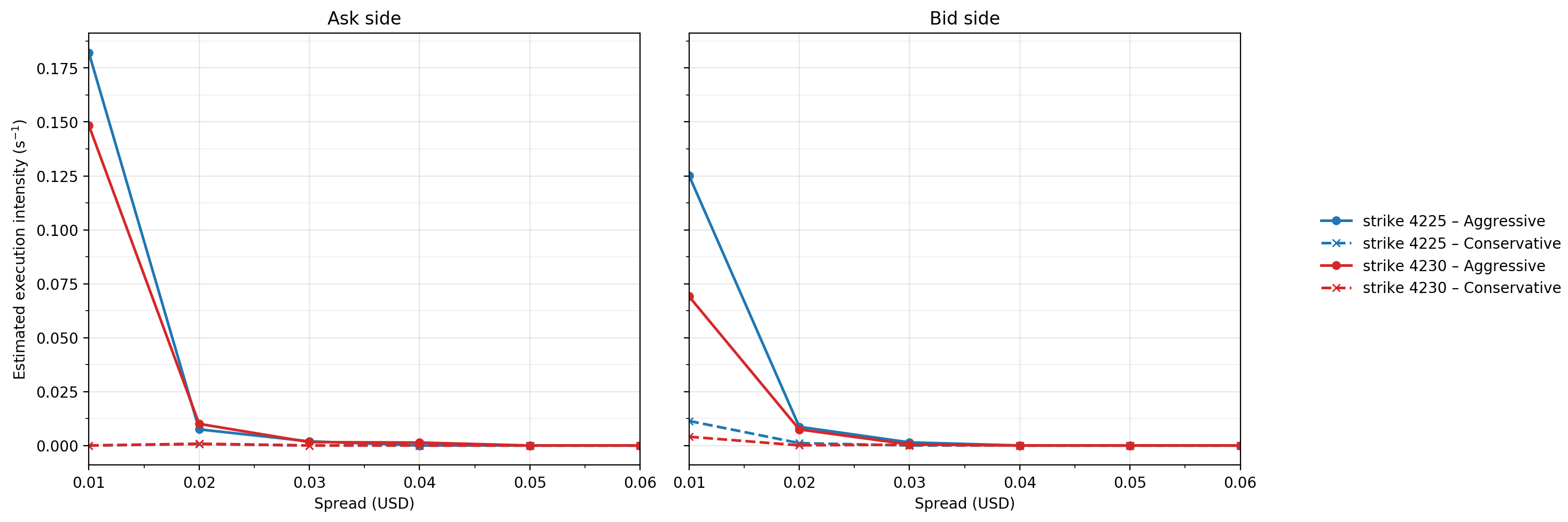}
  \caption{Plot of estimated limit order execution intensities, expressed in second\(^{-1}\).}
  \label{Fig 6}
\end{figure}

\begin{remark}
  The estimated execution intensities exhibit an asymmetry between the bid and ask sides. 
  This may be attributed to the unbalanced or non-stationary price movement observed on the specific trading day used for calibration. 
  We emphasize that this phenomenon is unlikely to be typical across trading days and should be interpreted with caution.
\end{remark}

\section{Moderate dependence of value function on spread}{\label{Appendix C}}
\begin{figure}[H]
    \centering
    \begin{subfigure}[t]{0.45\textwidth}
        \centering
        \includegraphics[width=\linewidth]{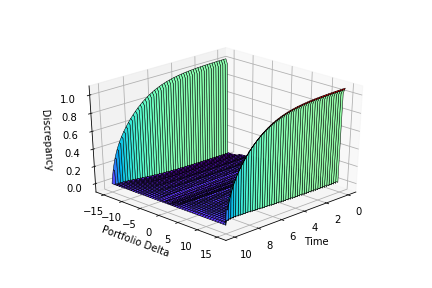}
        \caption{$d=(2,2),\ d'=(6,6)$}
    \end{subfigure}\hfill
    \begin{subfigure}[t]{0.45\textwidth}
        \centering
        \includegraphics[width=\linewidth]{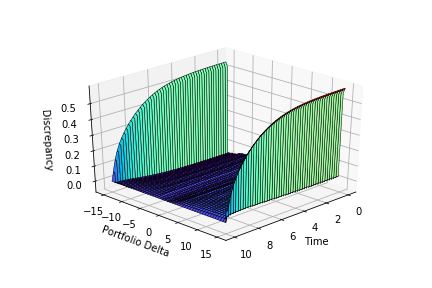}
        \caption{$d=(3,3),\ d'=(5,5)$}
    \end{subfigure}
    \caption{Discrepancy between reduced value functions $\varphi_d$ and $\varphi_{d'}$,
    with $d$ and $d'$ expressed in ticks.}
    \label{Fig 7}
\end{figure}

In this section, we provide a numerical justification for the moderate dependence of the value function on the spread variable.
Specifically, we compare the value functions \(\varphi_d\) and \(\varphi_{d^\prime}\)
for distinct spread values \(d\) and \(d^\prime\), and examine their difference \(\varphi_{d}\) and \(\varphi_{d^\prime}\).  

For simplicity, we take \(d=(2,2), d^\prime =(6,6)\) and \(d=(3,3),d^\prime =(5,5)\) (in ticks) as examples. As illustrated in \ref{Fig 7}, the discrepancy between the value functions \(\varphi_d\) and \(\varphi_{d^\prime}\) is negligible in the interior of the state space and primarily arises near the boundary.
This observation suggests that the difference is probably attributed to truncation effects at the boundary, rather than to a genuine dependence on the spread variable.
Therefore, up to a tolerable boundary error, we may conclude that the value function \(\varphi_d\) mildly depends on the choice of the spread \(d\).

\end{appendix}

\end{document}